\documentclass{article}

\usepackage{amsmath,amsfonts,amssymb,enumerate,hanmath}
\usepackage{bm,upgreek}
\usepackage{natbib}

\usepackage{graphicx}

\newcommand{\tens}[1]{\mathsf{#1}}
\renewcommand{\vec}[1]{\mathbf{#1}}
\newcommand{\mat}[1]{\mathbf{#1}}

\newtheorem{remark}[theorem]{Remark.}

\newcommand\qedsymbol{\hbox{\rlap{$\sqcap$}$\sqcup$}}
\newenvironment{proof}{\textbf{Proof.}}{\\ \mbox{ }\hfill\qedsymbol\\}
\newenvironment{prooftheorem}{\textbf{Proof of the theorem.}}{\hfill\qedsymbol}

\theoremstyle{plain}
\newtheorem{theoremcited}[theorem]{Theorem}

\newtheorem{corollarycited}[theorem]{Corollary}

\theoremstyle{plain}

\renewcommand{\im}{\mathop{\mathrm{Im}}}
\renewcommand{\re}{\mathop{\mathrm{Re}}}

\newcommand{\D}{\mathrm{D}}

\newcommand{\LICM}{\mathfrak{L}}

\newcommand{\BF}{\mathfrak{B}}

\newcommand{\CBF}{\mathfrak{F}}
\renewcommand{\Sigma}{\mathfrak{S}}

\newcommand{\upi}{\uppi}

\author{Ma{\l}gorzata Seredy\'{n}ska\\
Institute of Fundamental Technological Research\\
Polish Academy of Sciences\\
ul. Pawi\'{n}skiego 5b, 02-106 Warszawa, PL\\
{\tt msered@ippt.gov.pl}\\
{\tt www.ippt.gov.pl/$\sim$msered}\\ \mbox{ } \\
Andrzej Hanyga\\
ul. Bitwy Warszawskiej 14 m. 52, 02-366 Warszawa, PL\\
{\tt ajhbergen@yahoo.com}\\
{\tt www.uib.no/hjemmesider/andrzej}}

\title{Positivity of viscoelastic Green's functions} 

\bibliographystyle{plain}

\begin{document}

\maketitle

\begin{abstract}
In 1,2 or 3 dimensions a scalar wave excited by a 
non-negative source in a viscoelastic medium with a non-negative relaxation spectrum or 
a Newtonian response or both combined inherits the sign of the source. The
key assumption is a constitutive relation which involves the sum of 
a Newtonian viscosity term and a memory term with a completely monotone relaxation kernel.
In higher-dimensional spaces this result holds for sufficiently regular sources. 
Two positivity results for vector-valued wave fields including isotropic viscoelasticity are 
also obtained. Positivity is also shown to hold under weakened hypotheses.
\end{abstract}

{\bf Keywords:} viscoelasticity, Newtonian viscosity, acoustics, relaxation, 
completely monotone function, complete Bernstein function, MacDonald function 

\noindent\textbf{Notation.}\\

\noindent $[a,b\,[\; := \{ x \in \mathbb{R} \mid a \leq x < b \}$;\\
$\mathbb{R}_+ =\; ]0,\infty[\;$;\\ 
$\mat{I}$: unit matrix; \\ \vspace{0.2cm}
$\langle \vec{k}, x\rangle := \sum_{n=1}^d k_l\, x^l$\,;\\ \vspace{0.2cm}
$\theta(y) := \left\{ \begin{array}{ll} 1, & y > 0 \\
0,  & y < 0 \end{array} \right\} $;\\ \vspace{0.2cm}
$\tilde{f}(p) := \int_0^\infty \e^{- p y} \, f(y) \, \dd y$; \\
$\mathfrak{L}$ -- the cone of LICM functions;\\
$\mathfrak{B}$ -- the cone of Bernstein functions;\\
$\mathfrak{F}$ -- the cone of complete Bernstein functions.

\section{Introduction}

Acoustic signals in real elastic and viscoelastic media reproduce the shape of the 
source pulse.
In particular, a non-negative source pulse gives rise to a non-negative signal.  
This property is important in acoustic and seismological applications.
We shall show that a large class of mathematical models of viscoelastic media,
including Newtonian viscosity and hereditary viscosity with positive relaxation spectrum,
have this property. A viscoelastic model obtained by connecting 
elastic, Newtonian and Debye elements in parallel also has this property.
Positivity of 
viscoelastic signals can be considered as a confirmation of the presence of these components.
Unfortunately, it follows from a remark in Sec.~\ref{sec:concluding} that 
a more general class of equations have non-negative Green's functions and 
some of them might be compatible with viscoelasticity.

Positivity of viscoelastic pulses was studied in a paper by Duff\cite{Duff69}. 
Duff assumed various special models with rational complex moduli. Duff's 
models are however loosely related to viscoelasticity and his assumptions are 
either excessively restrictive or inconsistent with viscoelasticity.

In Secs~\ref{sec:scalar1-3D}--\ref{sec:scalar-multidim}
a general scalar viscoelastic medium with the constitutive equation 
$\sigma = a \, \dot{e} + G(t)\ast \dot{e}$ with a completely monotone
relaxation modulus $G$ and a non-negative Newtonian viscosity coefficient $a$
is studied. We show that a scalar viscoelastic wave field 
propagating in a $d$-dimensional medium and 
excited by a non-negative pulse is also non-negative provided $d \leq 3$.
For higher dimensions and for non-zero initial data only wave fields excited by 
sufficiently regular sources are non-negative.

Positivity can be extended to matrix-valued fields, e.g. to Green's functions
of systems of PDEs. In Sec.~\ref{sec:vector} we consider a system of PDEs 
resembling the equations 
of motion of viscoelasticity with a CM relaxation kernel and prove that the Green's
function of this system of equations is positive semi-definite. This result 
does not apply to general viscoelastic Green's functions, which involve 
double gradients of positive semi-definite functions. The signals propagating in isotropic 
viscoelastic media can however be expressed in terms of scalar fields representing
the amplitudes or the potentials. It is thus again possible to prove that 
the scalar amplitudes excited by a non-negative force field are non-negative
(Sec.~\ref{sec:isotropic}).  

In Sec.~\ref{subsec:extended} positivity is proved under somewhat weaker 
assumptions, inconsistent with viscoelastic constitutive equations. In
Sec.~\ref{subsec:example} we examine an example of a scalar differential equation 
which has a positive Green's
function even when the assumptions of viscoelasticity with a positive relaxation
spectrum fail to be satisfied.

Our method takes advantage of certain classes of functions defined in terms of
integral transforms and, alternatively, in terms of analytic properties of their
complex analyitic continuations. Useful relations
between these classes can be established using the integral transforms. Analytic 
properties yield very useful nonlinear relations between these classes. These  
classes of functions appear in the viscoelastic constitutive 
equations~\citep{HanSerPRSA} but they also turn out very useful in studying positivity 
of Green's functions.

\section{Statement of the problem}
\label{sec:statement}

In a hereditary or Newtonian linear viscoelastic medium a scalar field 
excited by positive source is non-negative. This applies to displacements 
in pure shear or to scalar displacement potentials. The key assumption about 
the material properties of the medium is a positive relaxation spectrum.
The result holds for arbitrary spatial dimension. 

We consider the problem:
\begin{equation} \label{eq:problem}
\rho\, \D^2 \, u = a \, \nabla^2 \, \D\, u + G(t)\ast \nabla^2\,\D u + 
s(t,x)
 \quad t \geq 0, \quad x \in \mathbb{R}^d 
\end{equation}
with $s(t,x) = \theta(t)\,(c_1 + c_2 \, t) \,\delta(x)$ and the 
initial condition 
\begin{equation} \label{eq:ini}
u(0,x) = u_0\, \delta(x), \quad \D u(0,x) = \dot{u}_0 \,\delta(x),  
\end{equation}
(Problem~I) as well as $s(t,x) = c\,\delta(t) \, \delta(x)$ with a solution assumed to 
vanish for $t < 0$ (Problem~II).
It is assumed that $a \geq 0$ and $G$ is a completely monotone (CM) function.

The Laplace transform 
\begin{equation}
\tilde{u}(p,x) := \int_0^\infty \e^{-p t}\, u(t,x)\, \dd t, \quad \re p > 0,
\quad x \in \mathbb{R}^d
\end{equation}
satisfies the equation
\begin{equation} \label{eq:Laplacetransformed}
\rho\, p^2 \, \tilde{u}(p,x) = Q(p) \, \nabla^2\, \tilde{u}(p,x) 
+ g(p) \, \delta(x)
\end{equation}
where
\begin{equation} \label{eq:defQ}
Q(p) := a \, p + p \, \tilde{G}(p)
\end{equation}

In viscoelastic problems $a$ is the Newtonian viscosity. The instantaneous elastic response
is represented by $G_0 := \lim_{p\rightarrow\infty} [p \, \tilde{G}(p)]$. The equilibrium
response is represented by $G_\infty := \lim_{p\rightarrow 0} [p \, \tilde{G}(p)]$. 
Purely elastic response corresponds to $Q(p) = \const$. 

The function $g$ is defined by the equation
\begin{equation} 
g_I(p) = \frac{1}{p} + p\, u_0 + \dot{u}_0
\end{equation}
in Problem~I and
\begin{equation} 
g_{II}(p) = 1
\end{equation}
in Problem~II.

\section{Basic mathematical tools.}
\label{sec:tools}

The classes of functions appropriate for viscoelastic responses are 
reviewed in detail by Seredy\'{n}ska and Hanyga \citet{HanSerPRSA,HanMTAV}.

\begin{theorem}
If the function $\tilde{u}(\cdot,x)$ is completely monotone for every 
$x \in \mathbb{R}^d$, then $u(t,x) \geq 0$ for every  $t \geq 0$ and
$x \in \mathbb{R}^d$.
\end{theorem}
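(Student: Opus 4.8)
The plan is to invoke Bernstein's theorem characterizing completely monotone functions. Recall that a function $f$ on $\mathbb{R}_+$ is completely monotone if and only if it is the Laplace transform of a non-negative (Radon) measure, i.e. $f(p) = \int_0^\infty \e^{-p t}\,\dd\mu(t)$ for some measure $\mu \geq 0$. The statement asserts that if $\tilde{u}(\cdot, x)$ is completely monotone in $p$ for each fixed $x$, then the inverse Laplace transform $u(t,x)$ is non-negative. This is essentially the assertion that a completely monotone function is the Laplace transform of a non-negative object, applied pointwise in $x$.

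First I would fix $x \in \mathbb{R}^d$ and regard $p \mapsto \tilde{u}(p,x)$ as a completely monotone function of the single real variable $p \in \mathbb{R}_+$. By Bernstein's theorem there exists a non-negative measure $\mu_x$ on $[0,\infty[$ such that $\tilde{u}(p,x) = \int_0^\infty \e^{-p t}\,\dd\mu_x(t)$. On the other hand, by definition $\tilde{u}(p,x)$ is the Laplace transform of $t \mapsto u(t,x)$. By the uniqueness of the Laplace transform (more precisely, the injectivity of the Laplace transform on measures/distributions of at most exponential growth), the measure $\mu_x$ must coincide with the distribution $u(t,x)\,\dd t$. Hence $u(t,x)\,\dd t = \dd\mu_x(t) \geq 0$, which gives $u(t,x) \geq 0$ for (almost) every $t \geq 0$, and the conclusion follows for every $x$.

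The main obstacle I anticipate is regularity and the precise function-space setting. Bernstein's theorem in its classical form yields a non-negative \emph{measure}, not necessarily a non-negative \emph{function}; reconciling this with the pointwise statement "$u(t,x) \geq 0$" requires knowing that $u(\cdot,x)$ is genuinely a function (or at least that the representing measure is absolutely continuous), which in turn depends on the a priori regularity of the Green's function being analyzed. One must also ensure that $\tilde{u}(\cdot,x)$ is finite and completely monotone on all of $\mathbb{R}_+$ and decays appropriately so that the inverse transform exists in the relevant sense. I would handle this by appealing to the uniqueness theorem for the Laplace transform to identify $u(t,x)$ as the density (or distributional limit) of $\mu_x$, and by noting that if $u(\cdot,x)$ is already known to be continuous, the inequality holds in the classical pointwise sense. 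The remaining steps — verifying measurability in $x$ and that the identification is consistent across different $x$ — are routine once the single-variable Bernstein/uniqueness argument is in place.
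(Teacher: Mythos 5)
Your proof is correct and follows essentially the same route as the paper's: apply Bernstein's theorem to $\tilde{u}(\cdot,x)$ for each fixed $x$, identify the representing positive measure with $u(t,x)\,\dd t$ via uniqueness of the Laplace transform, and pass from almost-everywhere to pointwise non-negativity using continuity of $u(\cdot,x)$. The regularity caveat you raise is handled in the paper exactly as you suggest, by invoking continuity of $u(\cdot,x)$.
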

\begin{proof}
If $\tilde{u}(\cdot,x)$ is completely monotone, then, in view of Bernstein's theorem
(Theorem~\ref{thm:Bernstein}), for every $x \in \mathbb{R}^d$ it is the Laplace
transform of a positive Radon measure $m_x$:
\begin{equation} 
\tilde{u}(\cdot,x) = \int_{[0,\infty[} \e^{-p s}\, m_x(\dd s)
\end{equation}
The Radon measure $m_x$ is uniquely determined by $\tilde{u}(\cdot,x)$,
hence $m_x(\dd t) = u(t,x)\, \dd t$ is a positive Radon measure. Hence,
in view of continuity of $u(\cdot,x)$, we have the inequality 
$u(t,x) \geq 0$  for all $t \geq 0$ and $x \in \mathbb{R}^d$.  
\end{proof}

The problem of proving that $u(t,x)$ is non-negative is thus reduced to 
proving that $\tilde{u}(\cdot,x)$ is completely monotone. The crucial 
step here is the realization that $Q$ in \eqref{eq:defQ}
is a complete Bernstein function.
We shall therefore recall some facts about Bernstein and complete 
Bernstein functions and their relations to 
completely monotone functions.
\begin{definition}
A function $f$ on $\mathbb{R}_+$ is said to be \emph{completely monotone} (CM) if
it is infinitely differentiable and satisfies the infinite set of inequalities:
$$ (-1)^n \, \D^n \, f(y) \geq 0  \quad y > 0, \quad 
\text{for all non-negative integer $n$}$$
\end{definition}
It follows from the definition and the Leibniz formula that the
product of two CM functions is CM.
A CM function can have a singularity at 0. 
\begin{definition}
A function $f$ on $\mathbb{R}_+$ is said to be \emph{locally integrable 
completely monotone} (LICM) if it is CM and integrable over the segment $]0,1]$.
\end{definition}

Every CM function $f$ is the Laplace transform of a positive Radon measure:
\begin{theoremcited}(Bernstein's theorem, \citet{WidderLT}) \label{thm:Bernstein}
\begin{equation}  \label{eq:Bernstein}
f(t) = \int_{[0,\infty[}\; \e^{-r t} \, \mu(\dd r)
\end{equation} 
\end{theoremcited}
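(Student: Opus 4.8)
The plan is to prove the nontrivial implication, namely that every completely monotone $f$ admits the representation~\eqref{eq:Bernstein}; the converse is immediate, since differentiating $\int_{[0,\infty[} \e^{-rt}\,\mu(\dd r)$ under the integral sign yields $(-1)^n\,\D^n f(t) = \int_{[0,\infty[} r^n\,\e^{-rt}\,\mu(\dd r) \geq 0$. For the forward direction I would reconstruct the measure directly from $f$ by a Laplace-inversion procedure and then pass to a weak limit, the point being that complete monotonicity is exactly the sign condition forcing the approximating objects to be \emph{positive} measures.

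Concretely, I would first build a family of positive measures $\mu_k$ out of the derivatives of $f$. The natural device is the Post--Widder operator
\begin{equation*}
L_k[f](r) := \frac{(-1)^k}{k!}\left(\frac{k}{r}\right)^{k+1} \D^k f\!\left(\frac{k}{r}\right), \qquad r > 0,
\end{equation*}
which is nonnegative precisely because $(-1)^k\,\D^k f \geq 0$. Taking $L_k[f](r)\,\dd r$ as the density of $\mu_k$ (or, to avoid presupposing densities, defining the distribution functions $r \mapsto \mu_k([0,r])$ by the corresponding step sums), a change of variables and repeated integration by parts show that the Laplace transform of $\mu_k$ converges to $f$ as $k \to \infty$, uniformly on compact subsets of $\mathbb{R}_+$.

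Next I would extract an actual limiting measure. The mass of $\mu_k$ on each compact interval $[0,R]$ is bounded uniformly in $k$ (by $\e^{R t_0}$ times the value $f(t_0)$ for any fixed $t_0 > 0$, since a CM function is nonincreasing and its Laplace transform at $t_0$ approximates the truncated mass). By Helly's selection theorem --- equivalently, by vague compactness of bounded measures as the dual of $C_{c}([0,\infty[)$ --- some subsequence $\mu_{k_j}$ converges vaguely to a positive Radon measure $\mu$ on $[0,\infty[$. Finally I would identify $\int_{[0,\infty[}\e^{-rt}\,\mu(\dd r) = f(t)$ by combining this vague convergence with the uniform convergence of the Laplace transforms from the previous step.

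The main obstacle is this last identification, and within it the behaviour at $r = \infty$. Vague convergence controls only compactly supported test functions, whereas $\e^{-rt}$ does not have compact support; one must therefore establish a uniform tail estimate $\sup_k \int_{[R,\infty[}\e^{-rt}\,\mu_k(\dd r) \to 0$ as $R \to \infty$, in order to promote vague convergence to convergence against $\e^{-rt}$ and so conclude equality rather than merely the inequality $\int_{[0,\infty[}\e^{-rt}\,\mu(\dd r) \leq f(t)$. This tail control, obtained by comparing the integrand at $t$ with its value at a smaller $t_0$, is the delicate part; a genuine singularity of $f$ at the origin is permitted and simply corresponds to $\mu$ carrying infinite total mass while remaining integrable against each $\e^{-rt}$, $t>0$. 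As an alternative route I would note that the normalized CM functions (with $f(0^+)=1$) form a compact convex set whose extreme rays are the pure exponentials $\e^{-rt}$, so Choquet's theorem delivers~\eqref{eq:Bernstein} directly; the inversion argument above is, however, closer to Widder's original treatment and more self-contained.
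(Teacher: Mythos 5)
First, a remark on the comparison you ask for: the paper does \emph{not} prove this statement. Theorem~\ref{thm:Bernstein} is quoted as a classical result with a citation to Widder's book, the actual assertion being the sentence preceding it (``Every CM function $f$ is the Laplace transform of a positive Radon measure''). So there is no internal proof to compare against, and your proposal must stand on its own. What you outline is in fact the classical Widder-style argument from the cited source: nonnegativity of the Post--Widder quantities $L_k[f]$ is precisely complete monotonicity, Helly/vague compactness extracts a limit measure, and the substance of the proof lies in identifying the limit. Your diagnosis that testing vague convergence against the non-compactly-supported functions $\e^{-rt}$ is the delicate point is correct as far as it goes.

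However, the pivotal claim --- that ``a change of variables and repeated integration by parts show that the Laplace transform of $\mu_k$ converges to $f$'' --- is not merely left unproved; as stated it is false, and the failure occurs at $r=0$, the one place your analysis does not look (all of your tail control concerns $r=\infty$). Take $f \equiv 1$, which is CM and is represented by $\mu = \delta_0$. Then $\D^k f = 0$ for every $k \geq 1$, so every $\mu_k$ is the zero measure and $\int_{[0,\infty[}\,\e^{-rt}\,\mu_k(\dd r) = 0 \not\to 1$; no tail estimate can repair this, because the missing mass was never present in the approximants at all. In general the operators $L_k$ sample $f$ only at the finite points $k/r$ and therefore annihilate the equilibrium part: what is true is that the transforms of $\mu_k$ converge to $f(t) - f(\infty)$, where $f(\infty) := \lim_{s \to \infty} f(s) = \mu(\{0\})$. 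The repair is cheap --- run the argument for the CM function $f - f(\infty)$ and adjoin the atom $f(\infty)\,\delta_0$ to the limiting measure --- but it must be made. The same omission makes your Helly step circular as written: you bound $\mu_k([0,R])$ by $\e^{R t_0}$ times $f(t_0)$ on the grounds that the transform of $\mu_k$ at $t_0$ ``approximates'' $f(t_0)$, which is exactly the identification you have postponed. The clean, non-circular engine for both steps is the identity, valid for every CM $f$, every $k \geq 1$ and every $t>0$,
\begin{equation*}
f(t) = f(\infty) + \int_t^\infty \frac{(u-t)^{k-1}}{(k-1)!}\,(-1)^k \,\D^k f(u)\,\dd u ,
\end{equation*}
obtained from Taylor's formula with integral remainder: complete monotonicity makes every term nonnegative, and the terms of orders $1,\dots,k-1$ vanish as the expansion point tends to infinity (using $(-1)^j \D^j f(y)\,(y/2)^j \leq j!\,[f(y/2)-f(\infty)]$). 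After the substitution $u = k/r$, and up to the elementary comparison of $(1-t/u)_+^{k-1}$ with $\e^{-kt/u}$, this identity \emph{is} the statement that the transforms of your $\mu_k$ are dominated by $f - f(\infty)$ and converge to it; it delivers the uniform local mass bounds and the tail estimate at $r=\infty$ in one stroke, after which your outline (including unbounded $f$, whose measure simply has infinite total mass) closes up. The Choquet alternative you mention is also sound, but note that the normalization $f(0^+)=1$ silently assumes $f$ bounded; unbounded $f$ requires applying that result to $f(\cdot + \varepsilon)$ and letting $\varepsilon \downarrow 0$, which is one more compactness argument of the same kind.
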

It is easy to show that $f$ is a LICM if the Radon measure
$\mu$ satisfies the inequality
\begin{equation} \label{eq:ineq}
\int_{[0,\infty[}\; \frac{\mu(\dd r)}{1 + r} < \infty
\end{equation}

\begin{theorem} \label{thm:superpositionandlimit}
\begin{enumerate}[(i)]
\item If $a, b > 0$ and the functions $f, g$ are CM then the function $a\, f + b\, g$ is 
CM. 
\item If $f_k$, $k=1,2,\ldots$ is a sequence of CM functions pointwise converging to $f$ then
$f$ is CM\citep{Jacob01I}. 
\end{enumerate}
\end{theorem}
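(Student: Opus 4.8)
The plan is to treat the two parts separately, the first being immediate from the definition and the second requiring the special rigidity of completely monotone functions. For part (i) I would argue directly. Since $f$ and $g$ are infinitely differentiable on $\mathbb{R}_+$, so is $a\,f + b\,g$, and by linearity of the differential operator
\[
(-1)^n\,\D^n\,(a\,f + b\,g)(y) = a\,(-1)^n\,\D^n f(y) + b\,(-1)^n\,\D^n g(y)\ge 0
\]
for every $y>0$ and every non-negative integer $n$, because $a,b>0$ and each summand is non-negative by hypothesis. This settles (i) in one line.

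For part (ii) the subtle point is that pointwise convergence of smooth functions does not in general produce a smooth limit, let alone one with a controlled sign pattern of derivatives; so the real content of the statement is that the CM structure forces both smoothness and the correct alternation to survive the limit. My plan is to pass to the complex domain. By Bernstein's theorem (Theorem~\ref{thm:Bernstein}) each $f_k$ is represented as $f_k(t)=\int_{[0,\infty[}\e^{-r t}\,\mu_k(\dd r)$ with $\mu_k$ a positive Radon measure, and this integral defines a holomorphic extension of $f_k$ to the open right half-plane $H=\{z\in\mathbb{C}\mid \re z>0\}$.

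The key estimate is a local uniform bound on $H$: for $\re z\ge\delta>0$ one has $|\e^{-r z}|=\e^{-r\,\re z}\le \e^{-r\delta}$, so $|f_k(z)|\le\int_{[0,\infty[}\e^{-r\delta}\,\mu_k(\dd r)=f_k(\delta)$. Since $f_k(\delta)$ converges it is bounded in $k$, whence the family $\{f_k\}$ is uniformly bounded on each half-plane $\{\re z\ge\delta\}$ and therefore locally uniformly bounded on $H$. I would then invoke Vitali's theorem: the $f_k$ are holomorphic and locally bounded on $H$ and converge pointwise on $\;]0,\infty[\,\subset H$, a set with accumulation points in $H$; hence they converge locally uniformly on $H$ to a holomorphic function $F$ with $F|_{]0,\infty[}=f$. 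In particular $f$ is infinitely (indeed real-analytically) differentiable, and by the Weierstrass theorem on uniform limits of holomorphic functions the derivatives converge, $\D^n f_k\to \D^n f$ uniformly on compact subsets of $\mathbb{R}_+$. Passing to the limit in $(-1)^n\,\D^n f_k(y)\ge 0$ then gives $(-1)^n\,\D^n f(y)\ge 0$ for all $y>0$ and all $n$, so $f$ is CM.

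The main obstacle is exactly the smoothness of the limit, which cannot be obtained from the real-variable definition alone; the device that overcomes it is the holomorphic extension furnished by Bernstein's theorem together with the half-plane bound, which converts bare pointwise convergence on the real axis into locally uniform convergence of all derivatives. A minor point worth recording is that Vitali's theorem applies to the full sequence, so no subsequence extraction is needed: the limit on $\;]0,\infty[\,$ is already prescribed to be $f$.
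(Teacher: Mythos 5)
Your proof is correct, and for part (ii) it takes a genuinely different route from the one the paper indicates. The paper's proof is the one-line remark that part (ii) ``is based on Bernstein's theorem,'' which refers to the standard measure-theoretic argument (as in the cited reference): write $f_k(t)=\int_{[0,\infty[}e^{-rt}\,\mu_k(\dd r)$, use a Helly-type vague-compactness argument on the representing measures, and identify the pointwise limit $f$ as the Laplace transform of a limit measure, so that $f$ is CM by the converse direction of Bernstein's theorem. You also start from the Bernstein representation, but you use it only to continue each $f_k$ holomorphically to the half-plane $H=\{\re z>0\}$ and to get the bound $\vert f_k(z)\vert\le f_k(\delta)$ on $\{\re z\ge\delta\}$; from there the work is done by complex analysis (Vitali--Porter plus Weierstrass), which upgrades pointwise convergence on the half-line to locally uniform convergence of all derivatives, so that the inequalities $(-1)^n\,\D^n f_k(y)\ge 0$ pass to the limit and the definition of CM is verified directly. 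Your route buys several things: it needs no compactness argument on measures and no continuity theorem for Laplace transforms, it proves in passing that the limit is real-analytic and that the convergence is automatically locally uniform, and, as you note, no subsequence extraction is required since the limit is prescribed. The paper's route buys brevity and delivers the limit function together with its Bernstein representation. Both arguments genuinely rest on Bernstein's theorem --- your correct observation that smoothness of the limit cannot be extracted from the real-variable definition alone explains why the paper flags exactly that theorem as the basis of the proof.
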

The proof of the second statement is based on Bernstein's theorem
(Theorem~\ref{thm:Bernstein}).
Using Theorem~\ref{thm:superpositionandlimit} a very useful theorem can be proved:
\begin{theorem} \label{thm:integralsuperposition}
If 
\begin{enumerate}[1.]
\item $m$ is a positive measure on an interval $I$, 
\item the functions $f(\cdot,r)$ are defined and CM  for all $r$ in an interval $I$ except perhaps for a set of zero measure $m$,
 and
\item the integral $g(x) := \int_I f(x,r) \, m(\dd r)$ exists for $x > 0$, 
\end{enumerate}
then the function  
$g(x)$, $x > 0$, is CM.
\end{theorem}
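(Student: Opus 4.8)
The plan is to realize $g$ as a pointwise limit of finite non-negative linear combinations of completely monotone functions and then to invoke Theorem~\ref{thm:superpositionandlimit}. First I would discard the exceptional set of $m$-measure zero on which $f(\cdot,r)$ may fail to be CM; since it carries no mass it does not affect the integral, so I may assume $f(\cdot,r)$ is CM for every $r\in I$.

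The naive realization of this idea is to approximate the Lebesgue integral $g(x)=\int_I f(x,r)\,m(\dd r)$ by Riemann--Stieltjes-type partial sums $g_n(x)=\sum_k f(x,r_{n,k})\,m(A_{n,k})$ attached to a refining sequence of measurable partitions $\{A_{n,k}\}_k$ of $I$ with sample points $r_{n,k}\in A_{n,k}$. Each such sum is a finite combination of the CM functions $f(\cdot,r_{n,k})$ with non-negative coefficients $m(A_{n,k})$, hence CM by Theorem~\ref{thm:superpositionandlimit}(i); and once $g_n(x)\to g(x)$ for each fixed $x>0$, Theorem~\ref{thm:superpositionandlimit}(ii) gives that $g$ is CM.

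The difficulty is that the hypotheses supply only enough measurability for the integral to exist, not continuity of $r\mapsto f(x,r)$, so such sampling sums need not converge. To make the argument rigorous I would instead pass through the Bernstein representation. By Theorem~\ref{thm:Bernstein} each $f(\cdot,r)$ is the Laplace transform of a positive Radon measure $\mu_r$; substituting this into $g$ and interchanging the order of integration (justified by Tonelli's theorem, the integrand being non-negative) yields $g(x)=\int_{[0,\infty[}\e^{-sx}\,\nu(\dd s)$, where $\nu(\dd s):=\int_I \mu_r(\dd s)\,m(\dd r)$ is again a positive measure. Now approximation is harmless: truncating $\nu$ to $[0,n]$ and discretizing produces finitely supported measures $\nu_n$ for which $g_n(x)=\sum_k \e^{-s_{n,k}x}\,\nu_n(\{s_{n,k}\})$ is a finite non-negative combination of the elementary CM functions $\e^{-sx}$ --- CM by Theorem~\ref{thm:superpositionandlimit}(i) --- while $g_n(x)\to g(x)$ for every $x>0$ by monotone convergence. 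Theorem~\ref{thm:superpositionandlimit}(ii) then delivers the completely monotone $g$.

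I expect the main obstacle to be the measure-theoretic bookkeeping behind the interchange of integrals: one must know that the Bernstein measures $\mu_r$ can be chosen to depend measurably on $r$, so that $\nu$ is well defined and Tonelli applies. This follows from the joint measurability of $f$ together with the uniqueness of the representing measure in Bernstein's theorem, but it is the step requiring the most care; everything after it is routine approximation.
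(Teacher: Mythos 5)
Your proposal is correct in substance, but it takes a genuinely different route from the paper -- and, notably, the route it rejects is precisely the paper's. The paper proves this theorem by the ``naive realization'' you describe and then discard: it approximates the integral by finite Riemann sums and invokes Theorem~\ref{thm:superpositionandlimit}, with no further justification. Your objection to that sketch is apt: the hypotheses only make $r \mapsto f(x,r)$ $m$-measurable, so Riemann--Stieltjes sums with arbitrary sample points need not converge to the integral, and the one-line argument implicitly assumes regularity in $r$ that is not stated. Your detour through Bernstein's theorem (Theorem~\ref{thm:Bernstein}) repairs exactly this: after the Tonelli interchange the integrand $\exp(-sx)$ \emph{is} continuous in the integration variable, so discretization becomes legitimate -- though at that point you could stop without discretizing at all, since $g$ is exhibited as the Laplace transform of the positive measure $\nu$, and such transforms are CM by differentiation under the integral sign (the easy direction of Bernstein's theorem); your truncation-and-sampling step re-proves this and is harmless but superfluous. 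The price of your route is the kernel-measurability issue you flag at the end, and your attribution of its resolution is slightly off: joint measurability of $f$ is not a hypothesis, but it does follow because each $f(\cdot,r)$ is continuous (being CM) and each $f(x,\cdot)$ is $m$-measurable, i.e.\ $f$ is a Carath\'{e}odory function; and the measurability of $r \mapsto \mu_r(A)$ then comes not from uniqueness of the representing measure per se but from a measurable \emph{construction} of it, e.g.\ the Post--Widder inversion formula, whose ingredients are limits of difference quotients of $f$ and hence measurable in $r$. In short: the paper's approach buys brevity at the cost of a gap at exactly the point you identify; yours is rigorous modulo a fillable measure-theoretic step that the Riemann-sum sketch was trying to sidestep.
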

The last theorem can be proved by approximating the integral by finite Riemann sums and applying Theorem~\ref{thm:superpositionandlimit}.
 
\begin{definition}
A function $f$ on $\mathbb{R}_+$ is said to be a \emph{Bernstein function} (BF) if
it is non-negative, differentiable and its derivative is a CM function.
\end{definition}
Since a BF is non-negative and non-decreasing, it has a finite limit at 0.
It can therefore be extended to a function on $\overline{\mathbb{R}_+}$.  

\begin{theoremcited} \label{thm:CMmult} \citep{Jacob01I,HanSerPRSA}\\
If $f, g$ are CM then the pointwise product $f\, g$ is CM.
\end{theoremcited}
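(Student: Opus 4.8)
The plan is to prove the derivative sign condition directly from the Leibniz formula, exactly as anticipated in the remark preceding the statement. Since $f$ and $g$ are completely monotone they are by definition infinitely differentiable on $\mathbb{R}_+$, hence so is their product $h := f\,g$, and it remains only to verify the alternating-sign inequalities for the derivatives of $h$. First I would expand
\begin{equation}
\D^n (f\,g)(y) = \sum_{k=0}^n \binom{n}{k}\, \D^k f(y)\, \D^{n-k} g(y), \qquad y > 0 .
\end{equation}

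The key manipulation is to distribute the global sign factor between the two derivatives by writing $(-1)^n = (-1)^k\,(-1)^{n-k}$, which gives
\begin{equation}
(-1)^n\, \D^n(f\,g)(y) = \sum_{k=0}^n \binom{n}{k}\, \bigl[(-1)^k\, \D^k f(y)\bigr]\, \bigl[(-1)^{n-k}\, \D^{n-k} g(y)\bigr].
\end{equation}
By the complete monotonicity of $f$ the bracket $(-1)^k \D^k f(y)$ is non-negative for every $k$ and every $y>0$, and by the complete monotonicity of $g$ the bracket $(-1)^{n-k}\D^{n-k}g(y)$ is likewise non-negative; as the binomial coefficients are positive, each summand is non-negative and therefore so is the sum. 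Thus $(-1)^n \D^n(f\,g)(y)\ge 0$ for all non-negative integers $n$ and all $y>0$, which is precisely the definition of complete monotonicity for $h$.

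I do not expect any real obstacle here: the only subtle point is the smoothness of the product, which is immediate, and after that the argument is purely formal sign-bookkeeping. For completeness I note an alternative route through Bernstein's theorem (Theorem~\ref{thm:Bernstein}): representing $f$ and $g$ as Laplace transforms of positive measures $\mu$ and $\nu$ and applying Fubini's theorem would identify $f\,g$ as the Laplace transform of the image of the product measure $\mu\otimes\nu$ under the addition map $(r,s)\mapsto r+s$, which is again a positive measure, so that $h$ is CM by Bernstein's theorem. The delicate step on that route is justifying that this image is a genuine positive Radon measure and that the interchange of integrations is legitimate; since the Leibniz computation sidesteps these measure-theoretic technicalities entirely, I would carry out the first proof and mention the second only as a remark.
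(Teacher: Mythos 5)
Your proof is correct and follows exactly the route the paper itself indicates: the remark after the definition of CM states that the result ``follows from the definition and the Leibniz formula,'' which is precisely your sign-distribution argument $(-1)^n = (-1)^k(-1)^{n-k}$ applied termwise to the Leibniz expansion. The paper cites the theorem rather than proving it in full, so your write-up is a faithful completion of the argument the paper has in mind.
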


Let $f$ be a Bernstein function. Since the derivative $\D f$ of $f$
is LICM, Bernstein's theorem can be applied. Upon integration the following
integral representation of a general Bernstein function $f$ is obtained:
\begin{equation}
f(y) = a + b\,y + \int_{]0,\infty[} \left[ 1 - \e^{-r \, y}\right] \,
\nu(\dd r) 
\end{equation}
where $a \geq 0$,  $b = \D f(0) \geq 0$, and $\nu(\dd r) := \mu(\dd r)/r$ is a 
positive Radon measure on $\mathbb{R}_+$ satisfying the inequality
\begin{equation} \label{eq:ineq2}
\int_{]0,\infty[}\; \frac{r \,\nu(\dd r)}{1 + r} < \infty
\end{equation}
The constants $a, b$ and the Radon measure $\nu$ are uniquely determined by the
function $f$.

\begin{theoremcited} \citep{BergForst,Jacob01I,HanMTAV}\\ \label{thm:compBFCM}
\begin{enumerate}[(i)]
\item If $f$ is a CM function, $g$ is a BF and $g(y) > 0$ for
$y > 0$ then the function $f\circ g$ is CM.
\item $f$ is a BF if and only if for every $\lambda > 0$ the function
$\exp(-\lambda \, f(x))$ is CM.
\end{enumerate}
\end{theoremcited}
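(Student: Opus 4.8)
The plan is to establish part (ii) first, since part (i) follows from it almost immediately. For (ii) I would prove the two implications separately, using the integral representation of Bernstein functions for the forward direction and a differentiation-plus-limit argument for the converse. Once (ii) is available, part (i) is obtained by combining Bernstein's theorem (Theorem~\ref{thm:Bernstein}) with the integral superposition principle (Theorem~\ref{thm:integralsuperposition}).

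For the forward implication of (ii), suppose $f$ is a BF and fix $\lambda > 0$. I would insert the integral representation $f(y) = a + b\,y + \int_{]0,\infty[}(1-\e^{-r y})\,\nu(\dd r)$ into $\exp(-\lambda f)$ and split it as the product $\e^{-\lambda a}\,\e^{-\lambda b y}\,\exp\!\big(-\lambda\int(1-\e^{-r y})\,\nu(\dd r)\big)$. The constant factor and $\e^{-\lambda b y}$ (with $b\ge 0$) are CM, so the work is in the third factor. The key elementary observation is that for fixed $r>0$ and $c\ge 0$,
\[
\exp\!\big(-\lambda c\,(1-\e^{-r y})\big) = \e^{-\lambda c}\sum_{n=0}^\infty \frac{(\lambda c)^n}{n!}\,\e^{-n r y},
\]
a series of the CM functions $\e^{-n r y}$ with non-negative coefficients, hence CM by Theorem~\ref{thm:superpositionandlimit}. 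Approximating the integral by finite sums $\sum_j (1-\e^{-r_j y})\,c_j$ turns the exponential into a finite product of such CM factors, which is CM by Theorem~\ref{thm:CMmult}; passing to the limit of the Riemann sums and invoking Theorem~\ref{thm:superpositionandlimit}(ii) shows the third factor, and therefore $\exp(-\lambda f)$, is CM. I expect the main obstacle to be precisely this limit argument: one must ensure the finite sums converge to the integral in a way that yields pointwise convergence of the exponentials, which relies on the finiteness condition \eqref{eq:ineq2}.

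For the converse, assume $\exp(-\lambda f)$ is CM for every $\lambda>0$. Taking $\lambda=1$, $g_1:=\e^{-f}$ is CM, hence $C^\infty$ and strictly positive on $\mathbb{R}_+$, so $f=-\log g_1$ is $C^\infty$. A one-line computation from the definition shows that the negative derivative of any CM function is again CM; applying this to $\exp(-\lambda f)$ gives that $\lambda\,f'\,\exp(-\lambda f)$ is CM, and dividing by $\lambda$ shows $f'\,\exp(-\lambda f)$ is CM for every $\lambda>0$. Letting $\lambda\to 0^+$, the factor $\exp(-\lambda f)\to 1$ pointwise, so $f'$ is a pointwise limit of CM functions and is therefore CM by Theorem~\ref{thm:superpositionandlimit}(ii). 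This yields that $f'$ is CM and $f$ is nondecreasing; the non-negativity of $f$ itself, required by the definition of a BF, is the delicate point, and is the place where the standing normalization (or the additional hypothesis $f\ge 0$) must be invoked.

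Finally, for part (i) I would write $f$ via Bernstein's theorem as $f(t)=\int_{[0,\infty[}\e^{-r t}\,\mu(\dd r)$ with $\mu\ge 0$, so that $f(g(y))=\int_{[0,\infty[}\e^{-r g(y)}\,\mu(\dd r)$. By the forward implication of (ii) just proved, each integrand $y\mapsto\e^{-r g(y)}$ is CM (trivially so for $r=0$). The hypothesis $g(y)>0$ for $y>0$ guarantees that $f(g(y))$ is finite for every $y>0$, so the integral exists, and Theorem~\ref{thm:integralsuperposition} then delivers that $f\circ g$ is CM. In summary, the substantive step is the forward implication of (ii)—the passage from a finite product of CM factors to a continuous superposition—while part (i) and the converse of (ii) are comparatively short once that machinery is in place.
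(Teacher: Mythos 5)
The paper offers no proof of Theorem~\ref{thm:compBFCM}: it is imported from \citet{BergForst}, \citet{Jacob01I} and \citet{HanMTAV}, so there is no internal argument to compare yours against. Judged on its own merits, your proof is correct, and it is essentially the classical argument found in those references. The key expansion $\exp\bigl(-\lambda c\,(1-e^{-ry})\bigr)=e^{-\lambda c}\sum_{n=0}^{\infty}\frac{(\lambda c)^n}{n!}\,e^{-nry}$ is right, and the passage from finite sums to the integral representation is legitimate: the integrand $r\mapsto 1-e^{-ry}$ is continuous, non-negative and dominated by $\min(1,ry)$, so condition \eqref{eq:ineq2} allows truncation to compact subintervals and approximation by Riemann--Stieltjes sums converging pointwise, after which continuity of the exponential, Theorem~\ref{thm:CMmult} for the finite products, and Theorem~\ref{thm:superpositionandlimit}(ii) finish the job. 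The converse via the identity $-\D\exp(-\lambda f)=\lambda\, f'\,\exp(-\lambda f)$ and the limit $\lambda\to 0^+$ is also sound, and your derivation of (i) from Bernstein's theorem (Theorem~\ref{thm:Bernstein}), the forward half of (ii), and Theorem~\ref{thm:integralsuperposition} uses the hypothesis $g(y)>0$ in exactly the right place: it makes $f(g(y))$ finite for $y>0$, so the superposition integral exists.

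Your closing caveat about non-negativity is not a gap in your argument but a genuine imprecision in the statement as quoted: the ``if'' direction of (ii) is false for sign-unrestricted $f$. For instance, $f(y)=y-1$ gives $\exp(-\lambda f(y))=e^{\lambda}\,e^{-\lambda y}$, which is CM for every $\lambda>0$, yet $f$ fails the non-negativity required of a BF by the paper's own definition. In the cited sources the equivalence is stated for functions assumed non-negative from the outset, and the paper's version must be read with that standing hypothesis; granted it, your proof (smoothness of $f$ from $f=-\log(e^{-f})$ with $e^{-f}$ CM, and complete monotonicity of $f'$ by the limiting argument) supplies everything else that is needed.
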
 

\begin{corollarycited} \citep{BergForst,Jacob01I,HanMTAV}\\ \label{corr:BF}
If $g$ is a non-zero BF then $1/g$ is a CM function.
\end{corollarycited}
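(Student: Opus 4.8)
The plan is to derive this corollary as an immediate consequence of Theorem~\ref{thm:compBFCM}(i), which guarantees that the composition $f \circ g$ of a CM function $f$ with a positive BF $g$ is again CM. I would take the particular CM function $f(x) := 1/x$ and verify directly that it is completely monotone on $\mathbb{R}_+$: since $\D^n(1/x) = (-1)^n\, n!\, x^{-n-1}$, we have $(-1)^n\,\D^n(1/x) = n!\, x^{-n-1} \geq 0$ for every non-negative integer $n$ and every $x > 0$. With this choice, $1/g = f \circ g$, so the corollary follows as soon as the positivity hypothesis $g(y) > 0$ for all $y > 0$ is established.

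The only point requiring genuine argument — and which I expect to be the sole obstacle — is that a \emph{non-zero} BF is automatically strictly positive on $\mathbb{R}_+$. For this I would invoke the integral representation $g(y) = a + b\,y + \int_{]0,\infty[} [1 - \e^{-r y}]\,\nu(\dd r)$, with $a,b \geq 0$ and $\nu$ a positive Radon measure, and argue by cases. If $a > 0$ then $g(y) \geq a > 0$. If $a = 0$ and $b > 0$ then $g(y) \geq b\,y > 0$ for every $y > 0$. If $a = b = 0$, then $g(y) = \int_{]0,\infty[} [1 - \e^{-r y}]\,\nu(\dd r)$; for fixed $y > 0$ the integrand $1 - \e^{-r y}$ is strictly positive for all $r > 0$, so the integral is positive unless $\nu$ is the zero measure. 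But $a = b = 0$ together with $\nu = 0$ forces $g \equiv 0$, contradicting the assumption that $g$ is non-zero. Hence in every admissible case $g(y) > 0$ on $\mathbb{R}_+$.

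Having secured $g(y) > 0$ for $y > 0$, I would then apply Theorem~\ref{thm:compBFCM}(i) with the CM function $f(x) = 1/x$ and the positive BF $g$ to conclude that $1/g = f \circ g$ is CM, which completes the proof. The argument is thus essentially a reduction to the composition theorem, the bulk of the work being the elementary case analysis confirming that non-vanishing of a Bernstein function upgrades to strict positivity throughout $\mathbb{R}_+$.
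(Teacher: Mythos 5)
Your proof is correct and takes exactly the route the paper intends: the statement appears as a corollary of Theorem~\ref{thm:compBFCM}(i) (cited to the literature without an explicit proof spelled out), and composing the CM function $f(x)=1/x$ with the positive BF $g$ is precisely that derivation. Your case analysis via the integral representation, showing that a non-zero BF is strictly positive on $\mathbb{R}_+$, correctly supplies the one hypothesis of the composition theorem that genuinely needs checking.
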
 

Note that the converse statement is not true: 
the function $f(y) := \exp(-y)$ is CM but $1/f$ is not a BF.

\begin{definition} \label{def:CBF}
A function $f$ is said to be a \emph{complete Bernstein function} (CBF) if 
there is a Bernstein function $g$ such that $f(y) = y^2\, \tilde{g}(y)$.
\end{definition}
\begin{theoremcited} \label{thm:analCBF}  \citep{Jacob01I}\\
A function $f$ is a CBF if and only if it satisfies the following two 
conditions:
\begin{enumerate}[(i)]
\item $f$ admits an analytic continuation $f(z)$ to the upper complex half-plane;
$f(z)$ is holomorphic and satisfies the inequality 
$\im f(z) \geq 0$ for $\im z > 0$;
\item $f(y) \geq 0$  for $y \in \mathbb{R}_+$.
\end{enumerate}
\end{theoremcited}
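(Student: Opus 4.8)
The plan is to prove the equivalence by routing everything through the explicit Stieltjes-type integral representation that Definition~\ref{def:CBF} generates. Starting from a Bernstein function $g$ written in its canonical form $g(t) = a_g + b_g\,t + \int_{]0,\infty[}(1 - \e^{-r t})\,\nu(\dd r)$ with $a_g,b_g \geq 0$ and $\nu$ satisfying \eqref{eq:ineq2}, I would compute $\tilde g$ term by term. Since the Laplace transform of $t\mapsto 1-\e^{-rt}$ equals $r/[p(p+r)]$, multiplying by $p^2$ yields the key formula
\begin{equation} \label{eq:CBFrep}
f(p) = p^2\,\tilde{g}(p) = b_g + a_g\,p + \int_{]0,\infty[} \frac{p\,r}{p+r}\,\nu(\dd r).
\end{equation}
I would first record that the integral in \eqref{eq:CBFrep} converges for $\re p > 0$: the integrand is $O(r)$ as $r\to 0^+$ and bounded as $r\to\infty$, so its $\nu$-integrability is precisely the content of \eqref{eq:ineq2}. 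Thus every CBF admits the representation \eqref{eq:CBFrep}, and conversely any $f$ of this form is $y^2\,\tilde g(y)$ for the Bernstein function $g$ reconstructed from the triple $(a_g,b_g,\nu)$. The theorem therefore reduces to showing that conditions (i)--(ii) characterise exactly the functions of the form \eqref{eq:CBFrep}.

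For the forward direction (CBF $\Rightarrow$ (i),(ii)) I would verify the two analytic conditions for the right-hand side of \eqref{eq:CBFrep} term by term. The constant $b_g$ and the linear term $a_g p$ are harmless, with $\im(a_g p) = a_g\,\im p \geq 0$. For the kernel I would use the identity $\dfrac{p r}{p+r} = r - \dfrac{r^2}{p+r}$, which is holomorphic away from $p=-r<0$ and gives $\im\dfrac{pr}{p+r} = \dfrac{r^2\,\im p}{|p+r|^2}\geq 0$ for $\im p>0$; dominated convergence, using the same bound that established convergence of the integral, then lets me pass both holomorphy and the sign of the imaginary part under the integral sign. Non-negativity on $\mathbb{R}_+$ is immediate, since for real $p>0$ every term in \eqref{eq:CBFrep} is non-negative. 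This settles (i) and (ii).

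The harder direction is (i),(ii) $\Rightarrow$ CBF, and I expect the representation theory of Pick/Nevanlinna functions to be the main obstacle. Condition (i) says $f$ is a Pick (Herglotz) function, hence carries the Nevanlinna representation $f(p) = \alpha + \beta p + \int_{\mathbb{R}}\big(\tfrac{1}{t-p} - \tfrac{t}{1+t^2}\big)\,\omega(\dd t)$ with $\alpha\in\mathbb{R}$, $\beta\geq0$, and $\int(1+t^2)^{-1}\,\omega(\dd t)<\infty$. The role of condition (ii) is to localise the representing measure: since $f$ is real (indeed non-negative) on $\mathbb{R}_+$, the Stieltjes inversion formula (equivalently, Schwarz reflection across the positive axis) forces $\omega$ to put no mass on $]0,\infty[$, so $\supp\omega\subseteq\,]-\infty,0]$. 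Substituting $t=-r$ and rearranging the kernel into $\tfrac{pr}{p+r}$ form recovers \eqref{eq:CBFrep} with a non-negative measure $\nu$, and the finiteness $\int(1+t^2)^{-1}\,\omega(\dd t)<\infty$ together with the non-negativity and growth of $f$ translate exactly into $a_g,b_g\geq 0$ and the admissibility condition \eqref{eq:ineq2} for $\nu$. Defining $g$ through $(a_g,b_g,\nu)$ as a Bernstein function and inverting \eqref{eq:CBFrep} then completes the proof. The delicate points to handle carefully are the passage from mere non-negativity on $\mathbb{R}_+$ to the localisation of $\omega$, and the bookkeeping that converts the Nevanlinna normalisation constants into the sign and integrability conditions on $(a_g,b_g,\nu)$.
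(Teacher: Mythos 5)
The paper never proves this statement: it is quoted as a known result from \citet{Jacob01I} (note the \texttt{theoremcited} environment), so there is no in-paper argument to compare yours against, and any proof you give is necessarily ``a different route.'' Judged on its own merits, your proposal is the standard literature proof and its outline is correct. The preliminary reduction is right: your formula for $p^2\,\tilde{g}(p)$ is exactly the paper's representation \eqref{eq:CBFintegral} (with the paper's measure $\mu(\dd r)$ equal to $r\,\nu(\dd r)$, so \eqref{eq:ineq} for $\mu$ matches \eqref{eq:ineq2} for $\nu$), and the forward direction is complete as written: the identity $pr/(p+r)=r-r^{2}/(p+r)$ yields holomorphy and $\im f\geq 0$ on $\mathbb{C}_+$, and positivity on $\mathbb{R}_+$ is immediate. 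The converse is where all the substance lies, and you have correctly identified the engine: the Nevanlinna--Pick (Herglotz) representation of holomorphic functions with non-negative imaginary part on $\mathbb{C}_+$, plus Stieltjes inversion/Schwarz reflection across $\mathbb{R}_+$ to force $\supp\omega\subseteq\;]-\infty,0]$. However, the two points you flag as ``delicate'' are genuinely the crux, and a complete write-up must execute them: (a) localisation leaves open a possible atom of $\omega$ at $t=0$, which contributes a term $-\omega(\{0\})/p$ to $f$ and must be excluded by letting $p\downarrow 0$ in $f(p)\geq 0$; (b) the admissibility condition \eqref{eq:ineq2} for the reconstructed measure $\nu(\dd r)=r^{-2}\hat{\omega}(\dd r)$ (where $\hat\omega$ is the reflection of $\omega$) needs two separate inputs --- the Nevanlinna integrability $\int(1+t^{2})^{-1}\omega(\dd t)<\infty$ controls it near $r=\infty$, while near $r=0$ it does \emph{not} follow from that bound and must instead be extracted, via monotone convergence as $p\downarrow 0$, from non-negativity of $f$; the same limit gives the sign of the reconstructed constant term. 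Since you name both issues and they are resolvable by standard arguments, the outline is sound; what separates it from a proof is carrying out (a) and (b).
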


The derivative $\D g$ of the Bernstein function $g$ is a LICM function $h$. 
Hence we have the following theorem:
\begin{theoremcited} \citep{HanMTAV}
Every CBF $f$ can be expressed in the form 
\begin{equation} \label{eq:CBFLICM}
f(y) = y \, \tilde{h}(y) + a\, y
\end{equation}
where $h$ is LICM and $a = g(0) \geq 0$. Conversely, for every LICM function $h$
and $a \geq 0$ the function $f$ given by \eqref{eq:CBFLICM} is a CBF. 
\end{theoremcited}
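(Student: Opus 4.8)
The plan is to prove both implications from the single algebraic identity relating the Laplace transforms of $g$ and of its derivative, exploiting the fact noted just above the statement that the derivative of a Bernstein function is LICM. No genuinely new machinery is needed beyond the definition of a CBF and the transform rule $\widetilde{\D g}(y) = y\,\tilde{g}(y) - g(0)$.

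For the forward direction I would start from Definition~\ref{def:CBF}: a CBF $f$ satisfies $f(y) = y^2\,\tilde{g}(y)$ for some Bernstein function $g$. Set $h := \D g$. Since $g$ is a BF, $h$ is CM by definition; and since $g$ is non-decreasing with a finite limit at $0$, we have $\int_0^1 h(s)\,\dd s = g(1) - g(0) < \infty$, so $h$ is integrable on $]0,1]$ and hence LICM, with $a := g(0) \geq 0$. The computation is then just the derivative rule: $\tilde{h}(y) = y\,\tilde{g}(y) - g(0)$, and multiplying through by $y$ gives $y\,\tilde{h}(y) = y^2\,\tilde{g}(y) - a\,y = f(y) - a\,y$, which rearranges to the asserted form $f(y) = y\,\tilde{h}(y) + a\,y$.

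For the converse, given a LICM function $h$ and a constant $a \geq 0$, I would define $g(y) := a + \int_0^y h(s)\,\dd s$. Local integrability of $h$ near $0$ makes $g$ well-defined and absolutely continuous with $\D g = h$; since $h$ is CM and non-negative, $g$ is differentiable with CM derivative and is non-decreasing with $g(0) = a \geq 0$, so $g$ is a Bernstein function. Reversing the same computation, $\tilde{h}(y) = y\,\tilde{g}(y) - a$ yields $y\,\tilde{h}(y) + a\,y = y^2\,\tilde{g}(y)$, so the function built from $h$ and $a$ coincides with $y^2\,\tilde{g}(y)$ and is therefore a CBF by Definition~\ref{def:CBF}.

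The main obstacle is not the algebra but the justification of the derivative rule $\tilde{h}(y) = y\,\tilde{g}(y) - g(0)$, i.e. controlling the boundary terms in the integration by parts and ensuring all the transforms converge. At the origin the boundary contribution is exactly $g(0) = a$; at infinity one must check that $\e^{-y s}\,g(s) \to 0$, which holds because every Bernstein function grows at most linearly, as is visible from its integral representation $g(y) = a + b\,y + \int_{]0,\infty[}(1 - \e^{-r y})\,\nu(\dd r)$ with $b \geq 0$. This linear bound guarantees that $\tilde{g}(y)$ converges for $y > 0$, so all steps are legitimate and both implications follow at once.
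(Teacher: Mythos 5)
Your proof is correct and follows essentially the same route as the paper's: the forward direction via $h := \D g$, $a := g(0)$ with local integrability from $\int_0^1 h(s)\,\dd s = g(1) - g(0) < \infty$, and the converse via $g(y) := a + \int_0^y h(s)\,\dd s$. The only difference is that you spell out the integration-by-parts rule $\tilde{h}(y) = y\,\tilde{g}(y) - g(0)$ and the at-most-linear growth of Bernstein functions, which the paper leaves implicit.
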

\begin{proof} 
For the first part, let $g$ be the BF in Definition~\ref{def:CBF} and let $h := \D g$. 
Since $\int_0^1 h(x) \, \dd x = g(1) - g(0) < \infty$, the function $h$ is LICM.
For the second part, note that if $h$ is LICM, then 
$g(y) = a + \int_0^y h(s) \, \dd s$ is a BF and $f(y) = y^2 \, \tilde{g}(y)$.
\end{proof}

Since the Laplace transform of a LICM function $h$ has the form 
\begin{equation}
\tilde{h}(y) = \int_{[0,\infty[}\; \frac{\mu(\dd r)}{r + y}
\end{equation}
where $\mu$ is the Radon measure associated with $h$,
every CBF function $f$ has the following integral representation 
\begin{equation} \label{eq:CBFintegral}
f(y) = b + a \, y + y \int_{]0,\infty[}\; \frac{\mu(\dd r)}{r + y}
\end{equation}
with arbitrary $a, b  = \mu(\{0\}) \geq 0$ and an arbitrary positive 
Radon measure $\mu$ satisfying eq.~\eqref{eq:ineq}. The constants
$a, b$ and the Radon measure $\mu$ are uniquely determined by
the function $f$.

Noting that 
$y/(y + r) = r\, [ 1/r - 1/(y + r) ]$,
we can also express the CBF $f$ in the following form
$$f(y) = b + a\, y + \int_0^\infty \left[1 - \e^{-z \, y}\right]
h(z) \, \dd z$$
where $h(z) :=  \int_{]0,\infty[}\; \e^{-r z}\, m(\dd r) \geq 0$
and $m(\dd r) := r\,\mu(\dd r)$ satisfies the inequality
$$ \int_{[0,\infty[}\; \frac{m(\dd r)}{r (r + 1)} < \infty$$
Let $\nu(\dd z) := h(z) \, \dd z$. 
We have  
$$ \int_{[0,\infty[} \frac{z\,\nu(\dd z)}{1 + z} = 
\int_0^\infty \frac{z \, h(z)\, \dd z}{1 + z} = \\
\int_{[0,\infty[} m(\dd r) \,[1/r - e^r \, \Gamma(0,r)]$$
Using the asymptotic properties of the incomplete
Gamma function \citep{Abramowitz} it is possible to prove
that the right-hand side is finite, hence the Radon measure 
$\nu(\dd z) := h(z) \, \dd z$ satisfies inequality~\eqref{eq:ineq2}.
We have thus proved an important theorem:
\begin{theorem} \label{thm:BFCBF}
Every CBF is a BF.
\end{theorem}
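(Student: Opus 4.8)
The plan is to verify the definition of a Bernstein function directly from the integral representation \eqref{eq:CBFintegral} of a complete Bernstein function: I must show that an arbitrary CBF $f$ is non-negative and that its derivative $\D f$ is completely monotone. Non-negativity is immediate, since in \eqref{eq:CBFintegral} the constants $a,b$ are non-negative, the measure $\mu$ is positive, and $y/(r+y)\ge 0$ for $y,r>0$. The substance of the argument therefore lies in identifying $\D f$ and recognising it as CM.

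For the derivative, I would differentiate \eqref{eq:CBFintegral} under the integral sign. Writing $m(\dd r):=r\,\mu(\dd r)$ and using $\frac{\dd}{\dd y}\,\frac{y}{r+y}=\frac{r}{(r+y)^2}$, one obtains $\D f(y)=a+\int_{]0,\infty[}\frac{m(\dd r)}{(r+y)^2}$. For fixed $r$ the function $y\mapsto 1/(r+y)$ is CM (it is the Laplace transform of $\e^{-rz}$), hence so is its square $1/(r+y)^2$ by the product rule for CM functions (Theorem~\ref{thm:CMmult}). The integral is then a positive superposition of CM functions, so Theorem~\ref{thm:integralsuperposition} yields that $\D f$ is CM, provided the integral converges for every $y>0$. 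This convergence is the one routine point: splitting the range at $r=1$ and using $r/(r+y)^2\le 2/(1+r)$ for $r\ge 1$ together with the boundedness of $r/(r+y)^2$ for $r\le 1$, it reduces to the standing hypothesis \eqref{eq:ineq}. With non-negativity and CM derivative established, $f$ is a Bernstein function.

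If instead one wishes to exhibit $f$ in the canonical L\'{e}vy--Khintchine form of a Bernstein function, I would start from the identity $y/(y+r)=r\,[1/r-1/(y+r)]$ together with $1/r-1/(y+r)=\int_0^\infty \e^{-rz}(1-\e^{-yz})\,\dd z$, and then apply Tonelli's theorem (legitimate because every integrand is non-negative) to obtain $f(y)=b+a\,y+\int_0^\infty(1-\e^{-yz})\,h(z)\,\dd z$ with $h(z)=\int_{]0,\infty[}\e^{-rz}\,m(\dd r)\ge 0$. The only remaining, and genuinely technical, step is to confirm that the L\'{e}vy measure $\nu(\dd z)=h(z)\,\dd z$ obeys the admissibility condition \eqref{eq:ineq2}; a further application of Tonelli reduces $\int_0^\infty \frac{z\,h(z)}{1+z}\,\dd z$ to $\int_{]0,\infty[} m(\dd r)\,\phi(r)$ with $\phi(r)=1/r-\e^r\,\Gamma(0,r)$. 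Here I expect the main obstacle: one must control $\phi$ at both ends of the range. Using the small- and large-argument asymptotics of the incomplete Gamma function one finds $\phi(r)\sim 1/r$ as $r\to 0^+$ and $\phi(r)\sim 1/r^2$ as $r\to\infty$, so that $\phi(r)$ is comparable to $1/(r(1+r))$ throughout $\mathbb{R}_+$; finiteness of $\int m(\dd r)\,\phi(r)$ then follows from the known bound $\int_{[0,\infty[} m(\dd r)/(r(1+r))<\infty$. This confirms \eqref{eq:ineq2} and recovers the canonical representation, giving a second proof that $f$ is a BF.
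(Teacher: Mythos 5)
Your proposal is correct, and its primary argument takes a genuinely different route from the paper; in fact your ``second proof'' \emph{is} the paper's proof. The paper establishes the theorem exactly as in your last paragraph: it rewrites \eqref{eq:CBFintegral} via $y/(y+r)=r\,[1/r-1/(y+r)]$ into the canonical form $f(y)=b+a\,y+\int_0^\infty\left[1-\e^{-zy}\right]h(z)\,\dd z$ with $h(z)=\int_{]0,\infty[}\e^{-rz}\,m(\dd r)$, and then appeals to the asymptotics of the incomplete Gamma function to verify \eqref{eq:ineq2} --- a step the paper merely asserts, whereas you actually carry it out (the two-sided estimates $\phi(r)\sim 1/r$ as $r\to 0^+$ and $\phi(r)\sim 1/r^2$ as $r\to\infty$, giving $\phi(r)$ comparable to $1/(r(1+r))$, are exactly what is needed). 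Your first argument is different and more elementary: it verifies the definition of a BF directly, by differentiating \eqref{eq:CBFintegral} to get $\D f(y)=a+\int_{]0,\infty[} m(\dd r)/(r+y)^2$ and invoking Theorem~\ref{thm:CMmult} and Theorem~\ref{thm:integralsuperposition}; this avoids the Tonelli interchanges and the incomplete Gamma function altogether. What the direct route gives up is structural information: the paper's computation exhibits the L\'{e}vy measure of $f$ explicitly and shows its density $h$ is itself CM, which is precisely the feature that fails for the paper's counterexample $1-\e^{-y}$ (a BF that is not a CBF). The one point to tighten in your first argument is the interchange of $\D$ with the integral: the domination required, locally uniformly in $y$ on compact subintervals of $\mathbb{R}_+$, is supplied by the same bounds you already state ($r/(r+y)^2\le 2/(1+r)$ for $r\ge 1$, boundedness for $r\le 1$), so this is routine, but it should be said.
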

However $1 - \exp(-y)$ is a BF but not a CBF.

The sets of LICM functions, BFs  and CBFs 
will be denoted by $\LICM$, $\BF$ and $\CBF$,
respectively.

The simplest example of a CBF is $$ \varphi_a(y) :=
y/(y+a) \equiv y^2 \int_0^\infty \e^{-s y} \left[ 1 - \e^{-s a}\right]
\, \dd s $$
$a \geq 0$. It follows from eq.~\eqref{eq:CBFintegral} that 
every CBF $f$ which satisfies the conditions 
$f(0) = 0$ and $\lim_{y\rightarrow \infty} f(y)/y = 0$ 
is an integral superposition of the functions $\varphi_a$. The CBF
$\varphi_a$ corresponds to a Debye element defined by the relaxation
function $G_a(t) = \exp(-a \,t)$.

We shall need the following properties of CBFs:
\begin{theoremcited} \label{thm:CBF} \citep{Jacob01I,HanSerPRSA}
\begin{enumerate}[(i)]
\item $f$ is a CBF if and only if $y/f(y)$ is a CBF;
\item if $f, g$ are CBFs, then $f \circ g$ is a CBF.
\end{enumerate}
\end{theoremcited}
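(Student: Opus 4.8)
The plan is to derive both parts from the analytic characterization of CBFs (Theorem~\ref{thm:analCBF}) together with the integral representation \eqref{eq:CBFintegral}. Throughout I write $\mathbb{H} := \{z : \im z > 0\}$ for the open upper half-plane, and I record two preliminary facts. First, a nonzero CBF $f$ is strictly positive on $\mathbb{R}_+$: being a Bernstein function (Theorem~\ref{thm:BFCBF}) it is nonnegative and nondecreasing, and if $f(y_0)=0$ for some $y_0>0$ then all three nonnegative terms in \eqref{eq:CBFintegral} must vanish, forcing $a=b=0$ and $\mu=0$, i.e. $f\equiv 0$. Second, by Theorem~\ref{thm:analCBF} a nonconstant CBF has $\im f(z)>0$ on $\mathbb{H}$, since $\im f$ is a nonnegative harmonic function that by the minimum principle cannot vanish at an interior point without being identically zero. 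Hence a nonzero CBF has no zeros on $\mathbb{H}\cup\mathbb{R}_+$.

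For part (i), I would first check that $\im\!\big(f(z)/z\big)\le 0$ on $\mathbb{H}$ directly from \eqref{eq:CBFintegral}. Dividing by $z$ gives $f(z)/z = a + b/z + \int_{]0,\infty[}\mu(\dd r)/(r+z)$, and using $\im(1/z)=-\im z/|z|^2$ and $\im\!\big(1/(r+z)\big)=-\im z/|r+z|^2$ yields $\im(f(z)/z) = -\im z\,\big(b/|z|^2 + \int \mu(\dd r)/|r+z|^2\big)\le 0$. Since $f(z)\ne 0$ on $\mathbb{H}$, the quotient $w(z):=f(z)/z$ is holomorphic and nonvanishing there, so $z/f(z)=1/w(z)$ is holomorphic with $\im(z/f(z)) = -\im w/|w|^2 \ge 0$, while $y/f(y)>0$ on $\mathbb{R}_+$ by the positivity noted above. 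By Theorem~\ref{thm:analCBF}, $z/f(z)$ is a CBF. The converse follows because $f\mapsto z/f(z)$ is an involution: if $z/f(z)$ is a CBF, applying the forward implication to it gives that $z/(z/f(z)) = f$ is a CBF. The constant and linear cases $f=c>0$ and $f(z)=c\,z$ are covered by the same computation.

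For part (ii), the idea is that a CBF is exactly a holomorphic self-map of the closed upper half-plane that is nonnegative on $\mathbb{R}_+$, and such maps are stable under composition. Writing $g$ for the inner CBF, Theorem~\ref{thm:analCBF} gives that $g$ is holomorphic on $\mathbb{H}$ with $\im g\ge 0$; if $g$ is nonconstant the minimum principle upgrades this to $g(\mathbb{H})\subseteq\mathbb{H}$, so $f\circ g$ is holomorphic on $\mathbb{H}$ and, as $f$ maps $\mathbb{H}$ into $\overline{\mathbb{H}}$, satisfies $\im f(g(z))\ge 0$ there. On the positive axis $g(y)\ge 0$, and $f$ is nonnegative on $[0,\infty[$ (a Bernstein function extends to $\overline{\mathbb{R}_+}$ with a nonnegative limit at $0$), so $f(g(y))\ge 0$. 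Both conditions of Theorem~\ref{thm:analCBF} then hold and $f\circ g$ is a CBF; if $f$ or $g$ is constant, $f\circ g$ reduces to a nonnegative constant, which is trivially a CBF.

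The routine parts are the two sign computations and the verification of nonnegativity on $\mathbb{R}_+$. I expect the only genuinely delicate point to be the passage from $\im g\ge 0$ to $g(\mathbb{H})\subseteq\mathbb{H}$ in part (ii) — ensuring the composition stays inside the open half-plane where $f$ is holomorphic rather than drifting onto the real boundary — which is precisely where the minimum principle for the harmonic function $\im g$ (equivalently, the open mapping theorem for the nonconstant holomorphic $g$) is required.
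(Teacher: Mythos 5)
Your proposal is correct, and it takes the route the paper itself indicates: for part (ii) the paper's entire proof is the remark that it ``follows easily from Theorem~\ref{thm:analCBF}'', which is exactly your argument, with the minimum-principle step (forcing the nonconstant inner function to map the open upper half-plane into itself, so the composition never hits the boundary) spelled out where the paper leaves it implicit. For part (i) the paper gives no proof at all, deferring to the cited references, and your argument --- the sign computation for $f(z)/z$ from the representation \eqref{eq:CBFintegral}, the zero-freeness of a nonzero CBF on the upper half-plane and on $\mathbb{R}_+$, then Theorem~\ref{thm:analCBF} applied to $z/f(z)$, with the converse obtained from the involution $f \mapsto z/f(z)$ --- is the standard proof and stays entirely within the paper's toolbox.
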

The second statement follows easily from Theorem~\ref{thm:analCBF}. 
\begin{theorem} \label{thm:fracpow}
$y^\alpha$ is a CBF if $0 < \alpha < 1$. 
\end{theorem}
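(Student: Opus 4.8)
The plan is to verify directly the two analytic conditions of Theorem~\ref{thm:analCBF}, since that characterization is tailor-made for checking that an explicitly given function is a CBF. Condition (ii) is immediate: for $y \in \mathbb{R}_+$ the real power $y^\alpha$ is positive. For condition (i) I would construct the analytic continuation of $y \mapsto y^\alpha$ to the upper half-plane by setting $z^\alpha := \exp(\alpha \log z)$, where $\log$ denotes the branch of the logarithm with the cut along the negative real axis, so that $\arg z \in \,]0,\pi[$ whenever $\im z > 0$. With this choice $z^\alpha$ is holomorphic on the open upper half-plane and reduces to the usual positive real power on $\mathbb{R}_+$. Writing $z = |z|\,\e^{\I\phi}$ with $\phi \in \,]0,\pi[$ gives $z^\alpha = |z|^\alpha\,\e^{\I\alpha\phi}$, whence $\im(z^\alpha) = |z|^\alpha \sin(\alpha\phi)$.

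The key step is then the elementary observation that $0 < \alpha < 1$ and $0 < \phi < \pi$ together force $0 < \alpha\phi < \pi$, so that $\sin(\alpha\phi) > 0$ and therefore $\im(z^\alpha) > 0$ throughout the upper half-plane. This verifies condition (i), and by Theorem~\ref{thm:analCBF} the function $y^\alpha$ is a CBF. The only point requiring care — and what I would regard as the main (if minor) obstacle — is fixing the branch so that holomorphy on the upper half-plane coexists with the normalization $z^\alpha = y^\alpha$ on the positive axis. The restriction $\alpha < 1$ is exactly what keeps the rotated argument $\alpha\phi$ strictly below $\pi$ and thus prevents the imaginary part from changing sign.

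As an alternative, one can argue constructively via the integral representation~\eqref{eq:CBFintegral}. The classical Beta-type integral $\int_0^\infty r^{\alpha-1}/(r+y)\,\dd r = (\pi/\sin\alpha\pi)\,y^{\alpha-1}$, valid for $0 < \alpha < 1$, yields after multiplication by $y$
$$ y^\alpha = y \int_{]0,\infty[} \frac{\mu(\dd r)}{r+y}, \qquad \mu(\dd r) := \frac{\sin(\alpha\pi)}{\pi}\, r^{\alpha-1}\,\dd r, $$
with $a = b = 0$. A short check shows that $\int_0^\infty r^{\alpha-1}/(1+r)\,\dd r$ converges precisely because $0 < \alpha < 1$: integrability at $0$ requires $\alpha > 0$ and integrability at $\infty$ requires $\alpha < 1$. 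Hence $\mu$ satisfies~\eqref{eq:ineq} and $y^\alpha$ has exactly the form prescribed by~\eqref{eq:CBFintegral}, which again identifies it as a CBF. This second route has the advantage of exhibiting the representing measure explicitly, but the analytic argument above is shorter and is the one I would present.
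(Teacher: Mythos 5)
Your proposal is correct, but your primary argument takes a genuinely different route from the paper. The paper's proof is constructive: it derives the representation
$$y^\alpha = \frac{y\,\sin(\alpha \, \upi)}{\upi}\int_0^\infty \frac{z^{\alpha-1}}{y + z}\, \dd z$$
by writing $y^{\alpha-1}$ as a Laplace transform, expanding $s^{-\alpha}$ as a second Gamma-function integral, and interchanging the order of integration (the reflection formula $\Gamma(\alpha)\Gamma(1-\alpha)=\upi/\sin(\alpha\upi)$ supplies the constant). This exhibits $y^\alpha$ in exactly the form \eqref{eq:CBFintegral}, i.e.\ it shows $y^{\alpha-1}=y^\alpha/y$ is a Stieltjes function (cf.\ Theorem~\ref{thm:CBFStieltjes}). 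Your first argument instead verifies the analytic characterization of Theorem~\ref{thm:analCBF}: holomorphy of the principal branch $z^\alpha$ in the upper half-plane, $\im (z^\alpha) = \vert z \vert^\alpha \sin(\alpha\phi) > 0$ for $0<\phi<\upi$, together with positivity on $\mathbb{R}_+$. That is a valid and shorter proof, and it is in fact the style of argument the paper itself uses later (Theorems~\ref{thm:logconvCBF} and \ref{thm:QCBFpowerlaw} are proved by exactly such argument-of-$\arg$ estimates); what it gives up is the explicit representing measure, which the paper's computation produces and which identifies $y^\alpha$ as an integral superposition of the elementary CBFs $y/(y+a)$ (Debye elements) --- the structural picture the paper emphasizes. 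Your alternative argument via the Beta-type integral is essentially the paper's proof with the key identity quoted rather than derived: it yields the same measure $\frac{\sin(\alpha\upi)}{\upi}\, r^{\alpha-1}\, \dd r$, and your explicit check that this measure satisfies \eqref{eq:ineq} precisely when $0<\alpha<1$ is a worthwhile detail that the paper leaves implicit.
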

\begin{proof}
\begin{multline*}
y^{\alpha-1} = \frac{1}{\Gamma(1-\alpha)} 
\int_0^\infty \e^{-y s}\, s^{-\alpha} \, 
\dd s =\\ \frac{1}{\Gamma(1-\alpha) \, \Gamma(\alpha)} 
\int_0^\infty \dd z \int_0^\infty  \,\e^{-y s} \, \e^{-z s} \dd s\;
z^{\alpha-1} = \\ 
\frac{\sin(\alpha \, \upi)}{\upi}
\int_0^\infty \frac{z^{\alpha-1}}{y + z}\, \dd z 
\end{multline*}
and thus 
$$y^\alpha = \frac{y\, \sin(\alpha \, \upi)}{\upi}
\int_0^\infty \frac{z^{\alpha-1}}{y + z}\, \dd z$$
\end{proof}

Using Theorem~\ref{thm:CBF}(ii), the last theorem implies the following corollary:
\begin{corollary}
If $f$ is a CBF and $0 \leq \alpha \leq 1$, then $f(x)^\alpha$ is a CBF.
\end{corollary}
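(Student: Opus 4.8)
The plan is to recognise $f(x)^\alpha$ as a composition of complete Bernstein functions and then to invoke the composition rule of Theorem~\ref{thm:CBF}(ii). Writing $g(y) := y^\alpha$, we have $f(x)^\alpha = (g \circ f)(x)$, so it suffices to know that $g$ belongs to the class $\CBF$ and that $\CBF$ is closed under composition; both facts are already available.

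First I would dispose of the two endpoints separately. For $\alpha = 1$ the claim is immediate, since $f^1 = f$ is a CBF by hypothesis. For $\alpha = 0$ we have $f^0 \equiv 1$, and the constant function $1$ is a CBF: it arises from the integral representation~\eqref{eq:CBFintegral} upon setting $a = 0$, $\mu = 0$ and $b = 1$, and it trivially satisfies the two analytic conditions of Theorem~\ref{thm:analCBF}. For the interior range $0 < \alpha < 1$, Theorem~\ref{thm:fracpow} supplies exactly what is needed, namely that $g(y) = y^\alpha$ is a CBF. Since $f$ is a CBF by assumption and the composition of two CBFs is again a CBF by Theorem~\ref{thm:CBF}(ii), the function $g \circ f = f^\alpha$ is a CBF, which completes the argument. (Composing is legitimate because $f$ maps $\mathbb{R}_+$ into $[0,\infty[$, the domain on which $g$ is defined and a CBF.)

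I do not expect any genuine obstacle here: the whole weight of the result rests on the two previously established facts, that fractional powers $y^\alpha$ are complete Bernstein functions and that the class $\CBF$ is stable under composition. The only point requiring a moment's care is the degenerate endpoint $\alpha = 0$, where $f^\alpha$ collapses to a constant rather than depending on $f$ at all, and one must simply confirm that the relevant constant function still lies in $\CBF$.
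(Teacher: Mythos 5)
Your proof is correct and follows exactly the paper's route: the paper derives this corollary in one line by combining Theorem~\ref{thm:fracpow} ($y^\alpha$ is a CBF for $0<\alpha<1$) with the composition rule of Theorem~\ref{thm:CBF}(ii). Your explicit treatment of the endpoints $\alpha=0$ and $\alpha=1$, which the paper leaves implicit, is a sound (and slightly more careful) addition but does not change the argument.
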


\begin{remark}
If $f$ is CM then $f^n$ is CM for positive integer $n$ (Theorem~\ref{thm:CMmult}).
However $f^\alpha$ need not be CM for non-integer $\alpha$.
\\
If $f \in \CBF$ then $f^\alpha$ need not be a CBF for $\alpha > 0$ (including
integer values) unless $0 \leq \alpha \leq 1$. 
\end{remark}

\section{Positivity of solutions in one- and three-dimensional space.}
\label{sec:scalar1-3D}

Applying the results of the previous section, we get the following result:
\begin{theorem}
If $a \geq 0$ and the relaxation modulus $G$ is CM then the function $Q$
defined by eq.~\eqref{eq:defQ} is a CBF.\\
The mapping $(a, G) \in \overline{\mathbb{R}_+}\times \mathfrak{F} 
\rightarrow Q \in \mathfrak{L}$ defined by eq.~\eqref{eq:defQ} is bijective.
\end{theorem}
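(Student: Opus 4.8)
The plan is to read this theorem as essentially a repackaging of the LICM--CBF representation theorem proved above (the one leading to \eqref{eq:CBFLICM}), augmented by a short uniqueness argument to promote that correspondence to a bijection. For the Laplace transform $\tilde G$ in \eqref{eq:defQ} to be finite for $\re p>0$ one needs $G$ locally integrable near $0$, so I read the hypothesis ``$G$ is CM'' as ``$G$ is LICM''; this is exactly the hypothesis under which the pair $(a,G)$ ranges, and the target cone in the bijectivity claim is $\CBF$.

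I would first settle the membership $Q\in\CBF$. Writing $Q(p)=p\,\tilde G(p)+a\,p$, this is verbatim the form \eqref{eq:CBFLICM} with LICM function $h=G$ and nonnegative constant $a$; hence the converse half of the representation theorem applies directly and yields $Q\in\CBF$. Equivalently, $p\,\tilde G(p)$ and $a\,p$ are each CBFs, and the two defining conditions of Theorem~\ref{thm:analCBF} -- nonnegativity on $\mathbb{R}_+$ and $\im Q\ge 0$ on the upper half-plane -- are each stable under addition, so the sum is a CBF.

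Surjectivity of $(a,G)\mapsto Q$ is then immediate from the forward half of the same theorem: any $Q\in\CBF$ is expressible as $Q(p)=p\,\tilde h(p)+a\,p$ with $h$ LICM and $a\ge 0$, and $G:=h$ furnishes a preimage. The substance of the proof is injectivity, i.e.\ that $Q$ pins down the Newtonian coefficient $a$ and the memory kernel $G$ separately, so that the viscous part cannot be reshuffled into the hereditary part. My approach is to recover $a$ from the growth of $Q$ at infinity: since $Q(p)/p=a+\tilde G(p)$, it suffices to show $\tilde G(p)\to 0$ as $p\to\infty$. I would prove this by splitting $\int_0^\infty\e^{-pt}G(t)\,\dd t$ at $t=1$, dominating the integral over $]0,1]$ by the integrable function $G$ (dominated convergence) and bounding the integral over $[1,\infty[$ by $G(1)\,\e^{-p}/p$. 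Thus $a=\lim_{p\to\infty}Q(p)/p$ is forced; then $p\,\tilde G(p)=Q(p)-a\,p$ determines $\tilde G$, and uniqueness of the Laplace transform determines $G$.

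The main obstacle is precisely this separation of the Newtonian and hereditary parts; once $\lim_{p\to\infty}\tilde G(p)=0$ is in hand, $a$ is read off as the asymptotic slope of $Q$ and the remaining data are forced. As an alternative route to injectivity, one may instead invoke the fact -- already recorded in the excerpt after \eqref{eq:CBFintegral} -- that the constants $a,b=\mu(\{0\})$ and the measure $\mu$ in the integral representation of a CBF are uniquely determined by $Q$.
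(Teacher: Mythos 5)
Your proposal is correct and takes essentially the paper's route: the paper obtains both the CBF property and surjectivity by invoking the representation theorem \eqref{eq:CBFLICM} with $h = G$ (its ``proof'' is precisely ``applying the results of the previous section''), exactly as you do, and your reading of the hypothesis as $G \in \LICM$ with the mapping $(a,G) \in \overline{\mathbb{R}_+}\times\LICM \rightarrow Q \in \CBF$ correctly repairs what is evidently a typo (swapped cones) in the statement. The only point of divergence is injectivity, which the paper leaves implicit in the uniqueness of $a$, $b$ and $\mu$ in the representation \eqref{eq:CBFintegral} (your stated alternative), while you additionally give a sound self-contained argument recovering $a = \lim_{p\to\infty} Q(p)/p$ from the decay $\tilde{G}(p) \to 0$, which is a legitimate extra detail rather than a different method.
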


A one-dimensional solution of eq.\eqref{eq:Laplacetransformed} is given by
$$\tilde{u}^{(1)}(p,x) = U^{(1)}(p,\vert x\vert):=  
A(p) \, \exp(-B(p) \, \vert x \vert)$$ with
$B(p) = \rho^{1/2}\,p/Q(p)^{1/2}$ and $A(p) = g(p)/[2 B(p)]$.
If $Q \in \mathfrak{F}$, then $Q(y)^{1/2}$ is a composition of two CBFs, namely 
$y^{1/2}$ (Theorem~\ref{thm:fracpow}) and $Q$, hence it is a CBF
by Theorem~\ref{thm:CBF}. The function $B(p)$ is a CBF by Theorem~\ref{thm:CBF}
and $1/B(p)$ is a CM function by Theorem~\ref{thm:BFCBF} and 
Corollary~\ref{corr:BF}. 

The amplitude of the solution of Problem~I is given by 
$A(p) = 1/[2\, p \,B(p)] + \dot{u}_0/[2 B(p)]$. The first term 
is a CM function because it is the product of two CM functions. The second 
term is also CM, hence $A(p)$ is CM. The amplitude of the solution of Problem~II 
$A(p) = 1/[2\, B(p)]$ is also CM.

For every fixed $x$ the function $\exp(-B(p)\, \vert x \vert)$ is the composition
of a CBF and the function $B$, which is a CBF and therefore a BF. 
By Theorem~\ref{thm:compBFCM} the function $\exp(-B(\cdot)\, \vert x \vert)$ is
CM. This proves that for $d = 1$ the solutions of Problem~II and Problem~I with
$u_0 = 0$ are non-negative.

In a three-dimensional space the solution $\tilde{u}^{(3)}(p,x) = U^{(3)}(p,\vert x \vert)$ of 
\eqref{eq:Laplacetransformed} is given by the equation 
$$U^{(3)}(p,r) = -\frac{1}{2 \upi r} \frac{\partial U^{(1)}(p,r)}{\partial r}, \quad r > 0$$
\citep{HanMTAV} so that 
\begin{equation}
\tilde{u}^{(3)}(p,x) = \frac{1}{4 \upi r} A(p) \, B(p) \, 
\exp(-B(p) \, \vert x \vert)
\end{equation}
But $A(p) \, B(p) = g(p)/2$. If $u_0 = 0$ then $g$ is CM. Hence
$\tilde{u}^{(3)}(\cdot,x)$ is the product of two CM functions and thus CM.

\section{Positivity of scalar Green's functions in arbitrary dimension.}
\label{sec:scalar-multidim}

In an arbitrary dimension $d$ 
\begin{equation}
\tilde{u}^{(d)}(p,x) = \frac{g(p)}{(2 \upi)^d\, Q(p)} 
\int \e^{\ii \langle \vec{k},x\rangle}
\frac{1}{\rho\, p^2/Q(p) + \vert \vec{k} \vert^2} \dd_d k
\end{equation}
The above formula can be expressed in terms of MacDonald functions
by using eq.~(3) in Sec.~3.2.8 of \citet{Gelfand}:
\begin{equation}
\tilde{u}^{(d)}(p,x) = 
\frac{\rho^{d/4-1/2} \, g(p) \,p^{d/2-1}}{(2 \upi)^{d/2}\, Q(p)^{d/4+1/2}}
r^{-(d/2-1)}\,
K_{d/2-1}\left(B(p)\, r\right)
\end{equation}
where $B(p)$ is defined in the preceding section. 

The MacDonald function is given by the integral representation
\begin{equation}
K_\mu(z) = \int_0^\infty \exp(-z\, \cosh(s)) \, \cosh(\mu s) \, \dd s
\end{equation}
Since $\cosh(y)$ is a positive increasing function, it follows immediately 
that $K_\mu(z)$ is a CM function.

We shall need a stronger theorem on complete monotonicity of MacDonald functions. 
\begin{theoremcited} \citep{MillerSamko01}.\\ \label{thm:McD}
The function $L_\mu(z) := z^{1/2}\, K_\mu(z)$ is CM for $\mu \geq 1/2$.
\footnote{The theorem is valid for $\mu \geq 0$, see \citep{MillerSamko01},
but we do not need this fact.}
\end{theoremcited}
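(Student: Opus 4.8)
The plan is to use the reduction exploited throughout this section, that a function is CM precisely when it is the Laplace transform of a nonnegative measure (Bernstein's theorem, Theorem~\ref{thm:Bernstein}). The MacDonald function $K_\mu$ is itself CM: the substitution $r=\cosh s$ in the integral representation $K_\mu(z)=\int_0^\infty\e^{-z\cosh s}\cosh(\mu s)\,\dd s$ displays it as the Laplace transform
\[
K_\mu(z)=\int_1^\infty \e^{-z r}\,f_\mu(r)\,\dd r,\qquad f_\mu(r):=\frac{\cosh\!\big(\mu\,\mathrm{arccosh}\,r\big)}{\sqrt{r^2-1}}\ge 0 .
\]
The entire content of the theorem therefore lies in the extra factor $z^{1/2}$. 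This factor is a Bernstein function but \emph{not} CM, so the closure of CM under products (Theorem~\ref{thm:CMmult}) is unavailable, and one must instead produce a nonnegative representing measure for the product $z^{1/2}K_\mu(z)$ directly.

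First I would isolate the effect of the multiplier. Writing $z^{1/2}=z\cdot z^{-1/2}$ and noting that $z^{-1/2}$ is CM, the factor $z^{-1/2}$ acts on the representation above as a half-order fractional integration of $f_\mu$, while the remaining factor $z$ acts as a differentiation together with the boundary value at the origin. Since $f_\mu$ is supported in $[1,\infty[$, its half-order fractional integral vanishes at $0$, so no atom is produced and one obtains
\[
z^{1/2}K_\mu(z)=\int_1^\infty \e^{-z r}\,\big(\D^{1/2}f_\mu\big)(r)\,\dd r ,
\]
where $\D^{1/2}$ denotes the Riemann--Liouville derivative of order $1/2$. By the uniqueness in Bernstein's theorem the assertion is thus \emph{equivalent} to the single positivity statement $\big(\D^{1/2}f_\mu\big)(r)\ge 0$ for $r>1$, i.e. to the claim that the half-order fractional integral of $f_\mu$ is nondecreasing.

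To organise the positivity analysis I would anchor it with the explicitly solvable cases and a recurrence. For $\mu=n+1/2$ one has $z^{1/2}K_{n+1/2}(z)=\sqrt{\upi/2}\,\e^{-z}P_n(1/z)$ with $P_n$ a polynomial with nonnegative coefficients, a finite sum of the CM functions $\e^{-z}z^{-k}$, hence CM. More generally the recurrence $K_{\mu+1}(z)=K_{\mu-1}(z)+(2\mu/z)K_\mu(z)$ gives $L_{\mu+1}(z)=L_{\mu-1}(z)+2\mu\,(1/z)\,L_\mu(z)$; since $1/z$ is CM and CM is closed under products and sums, CM of $L_{\mu-1}$ and $L_\mu$ forces CM of $L_{\mu+1}$. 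This propagates positivity upward in unit steps of $\mu$ and so reduces the analytic work to a bounded range of $\mu$ near the threshold $1/2$.

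The hard part will be the core estimate $\big(\D^{1/2}f_\mu\big)(r)\ge0$ and, in particular, seeing why $\mu\ge1/2$ is exactly the right threshold. The obstruction is that $f_\mu$ is not monotone for $\mu>1$ (indeed $f_\mu(r)\sim 2^{\mu-1}r^{\mu-1}$ as $r\to\infty$), so nonnegativity of its half-order derivative is a genuine special-function inequality rather than a soft consequence of monotonicity; it hinges on the endpoint behaviour $f_\mu(r)\sim(2(r-1))^{-1/2}$ near $r=1$ played off against this growth at infinity. Carrying this out from scratch would require the asymptotics of the associated Legendre / incomplete-Gamma kernels, and it is precisely this estimate that is supplied by \citet{MillerSamko01}; I would therefore quote their result on the endpoint range and use the recurrence above to extend it to all $\mu\ge1/2$.
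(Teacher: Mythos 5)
Your argument does not actually prove the theorem: everything hinges on the positivity $\bigl(\D^{1/2}f_\mu\bigr)(r)\ge 0$, which you yourself identify as equivalent (via uniqueness in Bernstein's theorem) to the statement being proved, and which you then propose to ``quote'' from \citet{MillerSamko01}. Since the statement under proof \emph{is} the Miller--Samko theorem, this is a reformulation followed by a citation of the result itself, not a proof. The pieces you do establish are correct but cannot close the gap: the half-integer cases give complete monotonicity only on a discrete set of $\mu$, and the recurrence $L_{\mu+1}=L_{\mu-1}+2\mu\,z^{-1}L_\mu$ can only bootstrap from an interval of $\mu$-values of length at least $2$ (e.g. $[1/2,5/2]$); starting from a base range such as $[1/2,3/2]$ you would need $L_{\mu-1}=L_{1-\mu}$ with $1-\mu\in[0,1/2]$, i.e. exactly the sub-threshold part of Miller--Samko's range that you meant to avoid, so the recurrence adds nothing to a bare citation. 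There is also a technical error in the reduction: the claim that ``no atom is produced'' is false. Since $f_\mu(u)\sim\bigl(2(u-1)\bigr)^{-1/2}$ near $u=1$ and $\int_1^r (r-u)^{-1/2}(u-1)^{-1/2}\,\dd u=\upi$ for every $r>1$, the fractional integral $\I^{1/2}f_\mu$ jumps by $\sqrt{\upi/2}$ at $r=1$, so the representing measure of $L_\mu$ contains the point mass $\sqrt{\upi/2}\,\delta_{r=1}$ (this is harmless for positivity, but your displayed Laplace representation with an absolutely continuous density on $]1,\infty[$ is not correct as written).

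The premise that forced you into this dead end is also mistaken: you assert that because $z^{1/2}$ is not CM, ``the closure of CM under products is unavailable'' and a representing measure must be computed directly. The paper's proof refutes exactly this. It invokes the representation \citep{GradshteinRhyzhik}, 8.432.8,
\begin{equation*}
L_\mu(z) = c_\mu\, \e^{-z}\, \int_0^\infty \e^{-s}\, s^{\mu-1/2}\,
\left(1 + \frac{s}{2z}\right)^{\mu-1/2} \dd s , \qquad z>0,
\end{equation*}
with $c_\mu>0$, and observes that for $\mu\ge 1/2$ the integrand is, for each fixed $s>0$, a CM function of $z$ by the elementary Lemma~\ref{lem} (that $(1+1/x)^\alpha$ is CM for $\alpha\ge 0$, applied with $x=2z/s$ and $\alpha=\mu-1/2$). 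By Theorem~\ref{thm:integralsuperposition} the integral is then CM in $z$, and $L_\mu$ is the product of the two CM functions $\e^{-z}$ and this integral, hence CM by Theorem~\ref{thm:CMmult}. In other words, the product rule is precisely the tool used --- but only after choosing an integral representation in which the troublesome factor $z^{1/2}$ has been absorbed, so that both factors \emph{are} CM. This route is self-contained, needs no fractional calculus or special-function asymptotics, and makes the threshold transparent: $\mu\ge 1/2$ is exactly the condition $\alpha=\mu-1/2\ge 0$ of the lemma. That is the missing idea in your attempt.
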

The proof of this theorem requires a lemma.
\begin{lemma} \label{lem}
If $\alpha \geq 0$ then the function $\left( 1 + 1/x \right)^\alpha$ is CM.
\end{lemma}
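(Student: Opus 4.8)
The plan is to show that $(1+1/x)^\alpha$ is completely monotone by expressing it as a composition or product of functions whose complete monotonicity (or the relevant Bernstein property) has already been established in the excerpt, rather than attempting to verify the infinite sequence of derivative inequalities directly. The natural starting point is to write $1 + 1/x = (x+1)/x$, so that
\begin{equation*}
\left(1 + \frac{1}{x}\right)^\alpha = \frac{(x+1)^\alpha}{x^\alpha}.
\end{equation*}
This reduces the problem to understanding the two factors $(x+1)^\alpha$ and $x^{-\alpha}$ and, crucially, their product.

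\textbf{Key steps.} First I would treat the case $0 \le \alpha \le 1$, where the machinery of the excerpt applies most cleanly. The function $x \mapsto x+1$ is a CBF (it is a Bernstein function with obvious integral representation, and it extends analytically with non-negative imaginary part, so Theorem~\ref{thm:analCBF} applies). By the Corollary following Theorem~\ref{thm:fracpow}, $(x+1)^\alpha$ is then a CBF for $0 \le \alpha \le 1$, hence in particular a Bernstein function by Theorem~\ref{thm:BFCBF}. By Theorem~\ref{thm:CBF}(i), since $x \mapsto x$ is a CBF, the map $x \mapsto x^\alpha$ is a CBF (equivalently by Theorem~\ref{thm:fracpow}), and so $1/x^\alpha = x^{-\alpha}$ is CM by Corollary~\ref{corr:BF}. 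The decisive observation is that $(1+1/x)^\alpha$ is itself a \emph{composition}: writing $g(x) := 1 + 1/x$, note that $g$ is CM (it is $1$ plus the CM function $1/x$, by Theorem~\ref{thm:superpositionandlimit}(i) applied to the constant and to $1/x = 1/(\text{BF})$). However $g$ is decreasing, not a Bernstein function, so I cannot directly invoke Theorem~\ref{thm:compBFCM}(i). The cleaner route is therefore the product representation above: $x^{-\alpha}$ is CM, and $(x+1)^\alpha$ is a CBF. A CBF need not be CM, so I cannot simply multiply. This is the crux.

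\textbf{The main obstacle} is precisely reconciling the two factors: $x^{-\alpha}$ is CM but $(x+1)^\alpha$ is \emph{increasing}, so their product is not covered by the ``product of CM functions is CM'' principle (Theorem~\ref{thm:CMmult}). The resolution I would pursue is to rewrite the whole expression as a composition with a Bernstein function of the right monotonicity. Observe that $(1+1/x)^\alpha = \varphi(1/x)$ where $\varphi(t) := (1+t)^\alpha$, and $1/x$ is CM; alternatively, set $h(x) := (1+x)^\alpha - 1$, which satisfies $h(0) = 0$ and is a Bernstein function for $0 \le \alpha \le 1$ (its derivative $\alpha(1+x)^{\alpha-1}$ is CM, being a composition of the CM function $y^{\alpha-1}$—established inside the proof of Theorem~\ref{thm:fracpow}—with the Bernstein function $1+x$, via Theorem~\ref{thm:compBFCM}(i)). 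Then $(1+t)^\alpha = 1 + h(t)$. The goal is to show $\varphi(1/x) = (1 + 1/x)^\alpha$ is CM using that $1/x$ is CM and $\varphi$ has a suitable form; the natural tool is Theorem~\ref{thm:integralsuperposition}, integrating the elementary CBF-type building blocks against the Lévy measure of $h$.

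\textbf{Finishing and extending to $\alpha > 1$.} For general $\alpha \ge 0$, I would write $\alpha = n + \beta$ with $n$ a non-negative integer and $0 \le \beta < 1$, so that $(1+1/x)^\alpha = (1+1/x)^n \cdot (1+1/x)^\beta$. Each factor $(1+1/x)$ is CM (as noted, it is $1 + 1/x$ with $1/x$ CM), so by Theorem~\ref{thm:CMmult} the integer power $(1+1/x)^n$ is CM; combining this with the CM factor $(1+1/x)^\beta$ from the fractional case, and again using Theorem~\ref{thm:CMmult}, yields that $(1+1/x)^\alpha$ is CM for all $\alpha \ge 0$. Thus the only genuinely new content is the fractional case $0 < \beta < 1$, and the expected difficulty is entirely concentrated in producing a clean integral representation there—most plausibly by applying Theorem~\ref{thm:integralsuperposition} to the explicit Lévy–Khintchine measure of the Bernstein function $(1+t)^\beta - 1$ evaluated at $t = 1/x$, thereby exhibiting $(1+1/x)^\beta$ as a positive superposition of manifestly CM functions of $x$.
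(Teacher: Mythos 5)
Your reduction of the general case to the fractional one (writing $\alpha = n + \beta$, $0<\beta<1$, and invoking Theorem~\ref{thm:CMmult}) is sound and coincides with the paper's, and you correctly isolate the crux: neither the product $(x+1)^\beta\, x^{-\beta}$ nor the composition $h(1/x)$ is covered by the closure theorems as stated. But the tool you propose for the crux does not work. If you expand the Bernstein function $h(t) = (1+t)^\beta - 1$ in its L\'{e}vy--Khintchine form $h(t) = \int_0^\infty \left(1-\e^{-st}\right)\nu(\dd s)$ and set $t = 1/x$, the building blocks are $x \mapsto 1 - \e^{-s/x}$, and these are \emph{not} completely monotone: a direct computation gives
$$\frac{\dd^2}{\dd x^2}\left(1 - \e^{-s/x}\right) = \frac{s\,(2x-s)}{x^4}\,\e^{-s/x},$$
which is negative for $0 < x < s/2$. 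So the blocks are not ``manifestly CM,'' Theorem~\ref{thm:integralsuperposition} cannot be applied to this representation, and the superposition argument collapses. This is exactly the phenomenon you flagged yourself in the other direction: composing a BF with the CM function $1/x$ is not a licensed operation (Theorem~\ref{thm:compBFCM} requires the inner function to be a BF), and the failure is real already at the level of the individual L\'{e}vy blocks $1 - \e^{-st}$.

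The step can be repaired while keeping your architecture, but you must switch representations. Since $(1+t)^\beta$ is a CBF, so is $h(t) = (1+t)^\beta - 1$, and its representation \eqref{eq:CBFintegral} reads $h(t) = a\,t + t\int_{]0,\infty[}\mu(\dd r)/(t+r)$ with $b = h(0) = 0$; substituting $t = 1/x$ gives $h(1/x) = a/x + \int_{]0,\infty[}\mu(\dd r)/(1 + r x)$, and now the blocks $1/(1+rx)$ \emph{are} CM, so Theorem~\ref{thm:integralsuperposition} (plus adding the constant $1$) finishes the fractional case. Even shorter, using only results quoted in the paper: $\varphi_1(x) = x/(x+1)$ is a CBF, hence $\varphi_1(x)^\beta$ is a CBF by the corollary following Theorem~\ref{thm:fracpow}, hence a BF by Theorem~\ref{thm:BFCBF}, hence $(1+1/x)^\beta = 1/\varphi_1(x)^\beta$ is CM by Corollary~\ref{corr:BF}. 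For comparison, the paper takes a third route: it verifies the explicit identity $(1+1/x)^\beta = \beta\, x^{-\beta}\int_1^\infty y^{-1-\beta}\,(xy+1)^{\beta-1}\,\dd y$, notes that each $(xy+1)^{\beta-1}$ is CM in $x$, and concludes by superposition and the product theorem~\ref{thm:CMmult}; your integer-power step is identical to the paper's.
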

\begin{proof}
We begin with $0 \leq \alpha < 1$. Setting $t = 1/(x y)$ we have that
\begin{multline*}
\frac{\alpha}{x^\alpha} \int_1^\infty \frac{\dd y}{y^{1+\alpha} \,(xy + 1)^{1-\alpha}} = 
\alpha \int_0^{1/x} \frac{t^{\alpha-1}}{(1/t + 1)^{1-\alpha}} \, \dd t = \\
\alpha \int_1^{1+1/x} u^{\alpha-1}\, \dd u = \left(1 + \frac{1}{x}\right)^\alpha
\end{multline*}
Since for each fixed value of $y > 0$ the function $(x y + 1)^{\alpha-1}$ is CM,
the function $\left(1 + 1/x\right)^\alpha$ ($x > 0)$ is also CM.

The function $1 + 1/x$ is CM, hence for every positive integer $n$ 
the function $(1 + 1/x)^n$ is CM. We can now decompose any positive non-integer 
$\alpha$ into the sum $\alpha = n + \beta$, where $n$ is a positive integer 
and $0 < \beta < 1$. Consequently
$$(1 + 1/x)^\alpha \equiv (1 + 1/x)^n \, (1 + 1/x)^\beta$$
is CM because it is a product of two CM functions.
\end{proof}

\begin{prooftheorem} 
For $\mu > -1/2$ the function $L_\mu(z)$ has the following integral representation:
\begin{equation}
L_\mu(z) = \sqrt{\frac{\pi}{2}} \frac{1}{\Gamma(1/2-\mu)}
\e^{-z}\times\\ \int_0^\infty \e^{-s} \, s^{\mu-1/2} \,
\left(1 + \frac{s}{2 z}   \right)^{\mu-1/2} \, \dd s, \qquad z > 0
\end{equation}
\citep{GradshteinRhyzhik}, 8:432:8. By Lemma~\ref{lem}
the integrand of the integral on the right-hand side is CM if $\mu \geq 1/2$. 
Hence the integral is the limit of 
sums of CM functions, therefore itself a CM function. Consequently, the function 
$L_\mu(z)$ is the product of two CM functions. By Theorem~\ref{thm:CMmult} $L_\mu$ is CM.
\end{prooftheorem}

We now note that 
$\tilde{u}^{(d)}(p,x) = C\, p^{(d-3)/2} \, g(p)\, F(p)$, 
where $C = \rho^{d/4-1/2}/(2 \upi)^{d/2}$
and $F(p) =  Q(p)^{-(d+1)/4}\, L_{d/2-1}(B(p)\,r)$.  We shall prove that $F(p)$ is 
the product of two 
CM functions.
\begin{lemma} \label{lem:Q}
If $Q$ is a CBF and $\alpha > 0$ then $Q(p)^{-\alpha}$ is CM.
\end{lemma}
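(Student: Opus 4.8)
The plan is to realise $Q^{-\alpha}$ as a composition $h_\alpha \circ Q$, where $h_\alpha(y) := y^{-\alpha}$, and then to invoke the composition rule of Theorem~\ref{thm:compBFCM}(i): the composition of a CM function with a positive Bernstein function is again CM. Since $Q$ is a CBF it is in particular a BF by Theorem~\ref{thm:BFCBF}, so once I have checked that $h_\alpha$ is CM and that $Q$ is strictly positive on $\mathbb{R}_+$, the conclusion follows in a single step.

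First I would verify that $h_\alpha(y) = y^{-\alpha}$ is CM for every $\alpha > 0$. This is immediate from the representation $y^{-\alpha} = \frac{1}{\Gamma(\alpha)} \int_0^\infty \e^{-s y}\, s^{\alpha-1}\, \dd s$, which exhibits $h_\alpha$ as the Laplace transform of the positive measure $\Gamma(\alpha)^{-1} s^{\alpha-1}\, \dd s$; by Bernstein's theorem (Theorem~\ref{thm:Bernstein}) $h_\alpha$ is therefore CM. Alternatively one differentiates directly and finds $(-1)^n \D^n y^{-\alpha} = \alpha(\alpha+1)\cdots(\alpha+n-1)\, y^{-\alpha-n} \geq 0$ for all $n$.

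Next I would record that $Q(y) > 0$ for all $y > 0$. Being a non-zero CBF, $Q$ admits the integral representation~\eqref{eq:CBFintegral}, $Q(y) = b + a\, y + y \int_{]0,\infty[} \mu(\dd r)/(r+y)$ with $a, b \geq 0$ and $\mu \geq 0$; since $Q$ is not identically zero at least one of these terms is positive for $y > 0$, whence $Q(y) > 0$ there. This is the only point requiring a little care, and is what I expect to be the (modest) crux: the composition theorem needs $Q(y) > 0$ both so that $Q^{-\alpha}$ is defined and so that the positivity hypothesis of Theorem~\ref{thm:compBFCM}(i) is satisfied.

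With both ingredients in hand, Theorem~\ref{thm:compBFCM}(i) applied to the CM function $h_\alpha$ and the positive BF $Q$ yields that $h_\alpha \circ Q = Q^{-\alpha}$ is CM, which completes the argument. Should one prefer to bypass the composition theorem, the same conclusion can be reached by the splitting $\alpha = n + \beta$, with $n$ a non-negative integer and $0 \leq \beta \leq 1$: here $Q^{-1}$ is CM by Corollary~\ref{corr:BF}, the power $Q^{\beta}$ is a CBF (hence a BF) so that $Q^{-\beta}$ is CM again by Corollary~\ref{corr:BF}, and therefore $Q^{-\alpha} = (Q^{-1})^{n}\, Q^{-\beta}$ is CM as a product of CM functions by Theorem~\ref{thm:CMmult}.
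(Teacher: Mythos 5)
Your proof is correct, and your primary argument takes a genuinely different route from the paper's. The paper splits $\alpha = n + \beta$ with $n$ the integer part and $0 \leq \beta < 1$ (misprinted there as $0 \geq \beta < 1$): it obtains $Q^{-1}$ CM from Theorem~\ref{thm:BFCBF} and Corollary~\ref{corr:BF}, hence $Q^{-n}$ CM as a product of CM functions; then $Q^{\beta}$ is a CBF by Theorems~\ref{thm:fracpow} and~\ref{thm:CBF}(ii), so $Q^{-\beta}$ is CM by the same reciprocal argument, and finally $Q^{-\alpha} = Q^{-n}\, Q^{-\beta}$ is CM by Theorem~\ref{thm:CMmult}. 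This is exactly the alternative you sketch in your final paragraph, so you have in effect reproduced the paper's proof as a fallback. Your main argument --- composing the CM function $y^{-\alpha}$ with the positive BF $Q$ and applying Theorem~\ref{thm:compBFCM}(i) once --- is more economical: it treats all $\alpha > 0$ uniformly, with no case-splitting, no fractional powers of CBFs, and no product rule; its only inputs are the elementary fact that $y^{-\alpha}$ is CM and the strict positivity of $Q$ on $\mathbb{R}_+$. You are also more careful than the paper on one point: both routes require $Q \not\equiv 0$ (Corollary~\ref{corr:BF} is stated for non-zero BFs, and the composition theorem needs $Q(y) > 0$ for $y > 0$), a hypothesis the paper passes over silently but which you justify from the integral representation~\eqref{eq:CBFintegral}. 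Both arguments are sound; yours leans on the composition theorem, the paper's on the stability of the CBF class under fractional powers developed in Section~\ref{sec:tools}.
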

\begin{proof}
Let $n$ be the integer part of $\alpha$, $\alpha = n + \beta$,
$0 \geq \beta < 1$. $Q(p)^{-1}$ is CM (by Theorem~\ref{thm:BFCBF} and 
Corollary~\ref{corr:BF}) and therefore also $Q(p)^{-n}$ is CM. By 
Theorem~\ref{thm:CBF} the function $Q(p)^\beta$ is a CBF,
hence $1/Q(p)^\beta$ is CM. Consequently $Q(p)^{-\alpha}$ is CM.
\end{proof}

\begin{lemma} \label{lem:basicK}
If $Q \in \CBF$ and $\mu \geq 1/2$ then $L_{d/2-1}(B(p)\,r)$ is CM.
\end{lemma}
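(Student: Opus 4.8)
The plan is to exhibit $p \mapsto L_{d/2-1}(B(p)\,r)$ as the composition of a completely monotone function with a Bernstein function and then invoke the composition rule of Theorem~\ref{thm:compBFCM}(i). First I would recall, as was already assembled in Section~\ref{sec:scalar1-3D}, that $B(p) = \rho^{1/2}\, p/Q(p)^{1/2}$ is a CBF whenever $Q \in \CBF$: since $y^{1/2}$ is a CBF by Theorem~\ref{thm:fracpow}, the composition $Q(p)^{1/2}$ is a CBF by Theorem~\ref{thm:CBF}(ii); then $p/Q(p)^{1/2}$ is a CBF by Theorem~\ref{thm:CBF}(i); and multiplying by the positive constant $\rho^{1/2}$ leaves it in the cone $\CBF$. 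By Theorem~\ref{thm:BFCBF} every CBF is a BF, so $B$ is a Bernstein function, and its explicit form shows $B(p) > 0$ for $p > 0$.

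Next I would fix $r > 0$ and set $g(p) := r\, B(p)$. As a positive multiple of a Bernstein function, $g$ is itself a Bernstein function, and it is strictly positive on $\mathbb{R}_+$. On the other side of the composition, I use that the order satisfies $\mu = d/2-1 \geq 1/2$, so that Theorem~\ref{thm:McD} applies and $L_{d/2-1}$ is a CM function on $\mathbb{R}_+$.

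Finally I would apply Theorem~\ref{thm:compBFCM}(i) with the CM function $f = L_{d/2-1}$ and the Bernstein function $g = r\,B$: because $g(p) > 0$ for $p > 0$, the composition $f \circ g$, namely $L_{d/2-1}(B(p)\,r)$, is CM, which is the assertion. The proof is essentially bookkeeping built on two nontrivial earlier inputs: the complete monotonicity of $L_\mu$ for $\mu \geq 1/2$ (Theorem~\ref{thm:McD}) and the fact that $B$ is a Bernstein function. I do not anticipate a real obstacle; the only point demanding care is verifying that the order of the MacDonald function meets the hypothesis $\mu = d/2-1 \geq 1/2$ of Theorem~\ref{thm:McD}, which is exactly the condition under which the composition rule can be invoked.
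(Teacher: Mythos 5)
Your proof is correct and follows essentially the same route as the paper's own argument: establish that $B$ is a Bernstein function (as a CBF, via Theorems~\ref{thm:fracpow}, \ref{thm:CBF} and \ref{thm:BFCBF}), invoke Theorem~\ref{thm:McD} for the complete monotonicity of $L_\mu$ with $\mu \geq 1/2$, and conclude by the composition rule of Theorem~\ref{thm:compBFCM}(i). The only difference is that you spell out the bookkeeping (absorbing $r$ into the Bernstein function and checking positivity) that the paper leaves implicit.
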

\begin{proof}
We have already proved that $B$ is a BF. 
It follows from Theorem~\ref{thm:McD} and Theorem~\ref{thm:compBFCM} 
that the function $L_\mu(B(p) \, r)$ is CM for $\mu \geq 1/2$.
\end{proof}

Lemma~\ref{lem:Q}  implies that for every $d > 0$ the factor $Q(p)^{-(d+1)/4}$ is CM
while Lemma~\ref{lem:basicK} implies that $L_{d/2-1}(B(p)\,r)$ is a CM function of $p$ for every $r > 0$ 
for every $d \geq 1$.
Hence the function $F(p)$ is CM. The factor $p^{(d-3)/2}$ is CM if $d \leq 3$.
Consequently,
for $d \leq 3$ the solution $u(t,x)$ of Problem~II is non-negative. The 
solution of the same problem with an arbitrary source of the form 
$s(t)\, \delta(x)$ and $s(t) \geq 0$ can be obtained by a convolution of 
two non-negative functions and therefore is also non-negative.

For $d \leq 5$  Problem~I with $u_0 = \dot{u}_0 = 0$ has a non-negative 
solution if $c_1 > 0$. For $d \leq 7$ Problem~I has a non-negative solution if
$c_1 = 0$ and $c_2 > 0$.

We summarize these results in a theorem.

\begin{theorem}
In a viscoelastic medium of dimension $d \leq 3$ with a constitutive relation
$$\sigma = a\, \dot{e} + G(t)\ast\dot{e}, \qquad a \geq 0; \quad G \in 
\mathfrak{F}$$
Problem~II as well as Problem~I with the initial condition $u_0 = 0$ have 
non-negative solutions. 

For zero initial data Problem~I has a non-negative solution
if $d \leq 5$ and $c_1 > 0$, or if $d \leq 7$, $c_1 = 0$ and $c_2 > 0$.
\end{theorem}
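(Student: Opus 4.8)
The plan is to combine the complete-monotonicity facts proved above into a single statement that $\tilde u^{(d)}(\cdot,x)$ is CM in each admissible case, and then to invoke the reduction theorem at the head of Section~\ref{sec:tools}, which asserts that complete monotonicity of $\tilde u(\cdot,x)$ for every $x$ forces $u(t,x)\ge0$. The starting point is the factorisation
$$\tilde u^{(d)}(p,x)=C\,p^{(d-3)/2}\,g(p)\,F(p),\qquad C=\rho^{d/4-1/2}/(2\upi)^{d/2}>0,$$
together with the already-established fact that $F(p)=Q(p)^{-(d+1)/4}\,L_{d/2-1}(B(p)\,r)$ is CM: indeed $Q(p)^{-(d+1)/4}$ is CM by Lemma~\ref{lem:Q}, $L_{d/2-1}(B(p)\,r)$ is CM by Lemma~\ref{lem:basicK}, and their product is CM by Theorem~\ref{thm:CMmult}. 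Since $C>0$ and the product of CM functions is CM, it suffices to prove that $p^{(d-3)/2}\,g(p)$ is CM in each of the cases listed.

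The only elementary input needed is that the power $p^{\gamma}$ is CM exactly when $\gamma\le0$ (for $\gamma>0$ the function is increasing, so its first derivative already has the wrong sign). I would therefore insert the explicit $g$ for each problem and expand $p^{(d-3)/2}\,g(p)$ into a sum of non-negative multiples of powers of $p$, each CM precisely when its exponent is non-positive. For Problem~II we have $g_{II}\equiv1$, so $p^{(d-3)/2}g_{II}(p)=p^{(d-3)/2}$ is CM iff $d\le3$. For Problem~I the source $\theta(t)(c_1+c_2t)\delta(x)$ contributes $c_1/p+c_2/p^2$ and the initial data contribute a term proportional to $p\,u_0+\dot u_0$, so that with $u_0=0$
$$p^{(d-3)/2}g_I(p)=c_1\,p^{(d-5)/2}+c_2\,p^{(d-7)/2}+\rho\,\dot u_0\,p^{(d-3)/2}.$$
Because $c_1,c_2,\dot u_0\ge0$ and every exponent is non-positive for $d\le3$, this expression is CM, which gives the first assertion of the theorem.

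The refined dimension thresholds for zero initial data follow by discarding the $\dot u_0$-term and reading off the largest surviving exponent: if $c_1>0$ the dominant power is $p^{(d-5)/2}$, which is CM iff $d\le5$ (the $c_2$-term then has exponent $(d-7)/2\le0$ automatically), while if $c_1=0$ and $c_2>0$ only $p^{(d-7)/2}$ survives, which is CM iff $d\le7$. In every admissible case $p^{(d-3)/2}g(p)$, and hence $\tilde u^{(d)}(\cdot,x)$, is CM, so the reduction theorem yields $u(t,x)\ge0$. The extension to an arbitrary non-negative source $s(t)\delta(x)$ is then immediate, since the corresponding solution is the time-convolution of $s$ with the (non-negative) fundamental solution of Problem~II, and a convolution of two non-negative functions is non-negative.

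The argument is essentially bookkeeping, and the only point demanding care is to confirm that the CM factor $F$ really is CM for \emph{every} dimension occurring in the statement. This rests on applying Lemma~\ref{lem:basicK} to the order $\mu=d/2-1$; for $d=1,2$ one has $\mu<1/2$, so I would appeal to the evenness $K_\mu=K_{-\mu}$ together with the footnote extension of Theorem~\ref{thm:McD} to $\mu\ge0$ in order to cover these low dimensions. One must likewise keep the coefficients $c_1,c_2,\dot u_0$ non-negative throughout, since a single negative multiple of a power of $p$ would destroy complete monotonicity.
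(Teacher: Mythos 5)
Your proposal is correct and takes essentially the same approach as the paper: the same factorization $\tilde{u}^{(d)}(p,x)=C\,p^{(d-3)/2}\,g(p)\,F(p)$, the same use of Lemmas~\ref{lem:Q} and~\ref{lem:basicK} together with Theorem~\ref{thm:CMmult} to show that $F$ is CM, and the same power-counting on $p^{(d-3)/2}g(p)$ (with $c_1,c_2,\dot{u}_0\ge 0$ and $u_0=0$) to obtain the thresholds $d\le 3$, $d\le 5$ and $d\le 7$. If anything, your explicit treatment of the low dimensions $d=1,2$ (via $K_{-\mu}=K_{\mu}$ and the Miller--Samko extension of Theorem~\ref{thm:McD} to $\mu\ge 0$) is more careful than the paper, which states Lemma~\ref{lem:basicK} only for $\mu\ge 1/2$ yet invokes it for all $d\ge 1$.
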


Anisotropic effects can be introduced by replacing 
the operator $\nabla^2$ by $g^{kl} \,\partial_k \, \partial_l$.
If $h_{kl}\, g^{lm} = \delta^k_m$ then
\begin{multline}
\tilde{u}^{(d)}(p,x) = \sqrt{\det{g}} \;
\frac{\rho^{d/4-1/2} \, g(p) \,p^{d/2-1}}{(2 \upi)^{d/2}\, Q(p)^{d/4+1/2}}
r^{-(d/2-1)}\times\\
K_{d/2-1}\left(\rho^{1/2}\, p   r/Q(p)^{1/2}\right)
\end{multline}
If $Q$ is a CBF then $u(t,x) \geq 0$,  
with $r := \left[h_{kl} \, x^k \, x^l\right]^{1/2}$,
cf \citet{Gelfand}.

\section{Positivity properties of vector-valued fields.}
\label{sec:vector}

It is interesting to examine the implications of CM relaxation kernels on 
positivity properties of vector fields.
We shall prove that in a simple model complete monotonicity of a relaxation kernel 
implies that the Green's function is positive semi-definite.

Unfortunately the tools developed in Sec.~\ref{sec:tools} 
fail for matrix-valued CM and complete Bernstein functions.
In particular, the product of two non-commuting matrix-valued functions need not be a CM function and
the function $f\circ \mat{G}$, where $f$ is CM and $\mat{G}$ is a matrix-valued BF, need not be CM.

\begin{definition}
A matrix-valued function  $\mat{F}: \mathbb{R}_+ \rightarrow \mathbb{R}^{n\times n}$ is said to be a 
CM function if it is infinitely differentiable and the matrices $(-1)^n\, \D^n \, \mat{F}(y)$ are 
positive semi-definite for all $y > 0$.
\end{definition}
\begin{definition}
A matrix-valued Radon measure $\mat{M}$ is said to be positive if 
$$\langle \vec{v}, \int_{[0,\infty[}\; f(y) \, \mat{M}(\dd y)\, \vec{v} \rangle$$
for every vector $\vec{v} \in \mathbb{R}^n$ and every non-negative function $f$ 
on $\overline{\mathbb{R}_+}$ with compact support. 
\end{definition}
It is convenient to eliminate matrix-valued Radon measures by applying the following lemma 
\citep{HanDuality}:
\begin{lemma} 
Every matrix-valued Radon measure $\mat{M}$ has the form 
$\mat{M}(\dd x)=\mat{K}(x)\, m(\dd x)$, 
where $m$ is a positive Radon measure, 
while $\mat{K}$ is a matrix-valued function defined, bounded and positive semi-definite on 
$\mathbb{R}_+$ except on a subset $E$
such that $m(E) = 0$. 
\end{lemma}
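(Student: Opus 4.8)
The plan is to prove this as a Radon--Nikodym theorem for matrix-valued measures, using the diagonal trace to supply a natural scalar dominating measure. Since only the symmetric part of $\mat{M}$ enters the positivity condition, I may assume $\mat{M}$ is symmetric. First I would set $m(\dd x) := \mathrm{tr}\,\mat{M}(\dd x) = \sum_i M_{ii}(\dd x)$, where $M_{ij}$ denote the scalar entries of $\mat{M}$. Applying the positivity hypothesis to the basis vectors $\vec{v} = \vec{e}_i$ shows that each diagonal entry $M_{ii}$ is a positive scalar Radon measure, so $m$ is a positive Radon measure and is the candidate for the scalar factor.

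Next I would establish that every entry $M_{ij}$ is absolutely continuous with respect to $m$. Applying the positivity hypothesis to $\vec{v} = \vec{e}_i \pm \vec{e}_j$ yields, on every Borel set $A$, the inequalities $M_{ii}(A) + M_{jj}(A) \pm 2\, M_{ij}(A) \geq 0$, whence $|M_{ij}(A)| \leq \tfrac12\,(M_{ii}(A) + M_{jj}(A)) \leq m(A)$. Thus $m(A) = 0$ forces $M_{ij}(A) = 0$, so $M_{ij} \ll m$. The scalar Radon--Nikodym theorem then supplies locally integrable densities $K_{ij} := \dd M_{ij}/\dd m$, which I assemble into the matrix-valued function $\mat{K}(x) := (K_{ij}(x))$, giving $\mat{M}(\dd x) = \mat{K}(x)\, m(\dd x)$. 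The same bound shows $|K_{ij}(x)| \leq 1$ for $m$-almost every $x$; equivalently $\mathrm{tr}\,\mat{K}(x) = \dd m/\dd m = 1$ a.e. Hence $\mat{K}$ is bounded off an $m$-null set, as required.

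The remaining and most delicate point is to upgrade non-negativity of the quadratic form for each fixed $\vec{v}$ to genuine positive semi-definiteness of $\mat{K}(x)$ for $m$-almost every $x$. For a fixed $\vec{v}$ the positivity hypothesis says that $A \mapsto \langle \vec{v}, \mat{M}(A)\,\vec{v}\rangle$ is a positive scalar measure, whose $m$-density is $\langle \vec{v}, \mat{K}(x)\,\vec{v}\rangle$; a positive measure has a non-negative density $m$-a.e., so $\langle \vec{v}, \mat{K}(x)\,\vec{v}\rangle \geq 0$ outside some $m$-null set $N_{\vec{v}}$. The hard part is that $N_{\vec{v}}$ depends on $\vec{v}$ and the directions are uncountable. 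I would circumvent this by fixing a countable dense set $D \subset \mathbb{R}^n$ of vectors with rational coordinates and setting $E := \bigcup_{\vec{v} \in D} N_{\vec{v}}$, which is still $m$-null. For $x \notin E$ the form $\langle \vec{v}, \mat{K}(x)\,\vec{v}\rangle$ is non-negative on the dense set $D$, and since $\vec{v} \mapsto \langle \vec{v}, \mat{K}(x)\,\vec{v}\rangle$ is continuous, it is non-negative on all of $\mathbb{R}^n$. Thus $\mat{K}(x)$ is positive semi-definite for every $x \notin E$, which completes the argument.
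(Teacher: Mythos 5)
The paper itself gives no proof of this lemma --- it is quoted from the reference \citep{HanDuality} --- so there is no internal argument to compare yours against, and your proposal must be judged on its own. Note first that you silently (and correctly) repaired the statement: as printed it says ``every matrix-valued Radon measure,'' which is false even for $n=1$ (a signed scalar measure admits no factorization with a non-negative bounded density against a positive measure); the intended hypothesis is positivity in the sense of the preceding definition, and that is how you read it. With that reading, the core of your argument is correct and is the standard Radon--Nikodym argument for matrix measures: take $m := \mathrm{tr}\,\mat{M}$ as the dominating scalar measure (positivity applied to $\vec{v}=\vec{e}_i$ makes each $M_{ii}$, hence $m$, a positive Radon measure); polarize with $\vec{v} = \vec{e}_i \pm \vec{e}_j$ to get $\vert M_{ij}(A)\vert \leq \tfrac{1}{2}\left(M_{ii}(A)+M_{jj}(A)\right) \leq m(A)$, which yields both $M_{ij} \ll m$ and $\vert K_{ij}\vert \leq 1$ $m$-a.e.; then merge the uncountably many direction-dependent null sets $N_{\vec{v}}$ via a countable dense set of directions and continuity of $\vec{v} \mapsto \langle \vec{v}, \mat{K}(x)\,\vec{v}\rangle$ for fixed $x$. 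That last device is exactly the delicate point, and you handle it correctly. (Minor slip: the entrywise bound $\vert K_{ij}\vert \leq 1$ and the identity $\mathrm{tr}\,\mat{K}(x) = 1$ $m$-a.e. are two distinct facts, not ``equivalent'' ones; either gives boundedness.)

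One point does need repair. Your opening reduction ``I may assume $\mat{M}$ is symmetric'' is not a without-loss-of-generality step, because the conclusion concerns $\mat{M}$ itself while the quadratic-form positivity constrains only its symmetric part; the antisymmetric part is completely unconstrained. Concretely, $\mat{M}(\dd x) = \bigl(\begin{smallmatrix} 0 & 1 \\ -1 & 0 \end{smallmatrix}\bigr)\,\mu(\dd x)$ with $\mu$ an arbitrary signed Radon measure has identically vanishing quadratic form, hence is ``positive'' in the sense of the paper's definition, yet its off-diagonal entries are not absolutely continuous with respect to your trace measure (which vanishes here); and if positive semi-definiteness of $\mat{K}$ is understood to include symmetry --- as the paper does elsewhere (``$\mat{Q}$ is real and positive semi-definite, hence it is symmetric'') --- then no factorization $\mat{M}(\dd x) = \mat{K}(x)\,m(\dd x)$ of this $\mat{M}$ exists at all. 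So symmetry of $\mat{M}$ is a genuine hypothesis under which the lemma is true, not a harmless normalization; it happens to be the only case the paper needs, since its matrix measures arise as Bernstein representations of symmetric relaxation moduli. State it as a hypothesis (or build symmetry into the definition of a positive matrix-valued measure); with that amendment your proof is complete.
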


\begin{theoremcited} \citep{GripenbergLondenStaffans}\\
A matrix-valued function $\mat{F}: \mathbb{R}_+ \rightarrow \mathbb{R}^{n\times n}$ is CM 
if and only if it is the Laplace transform of a positive matrix-valued Radon measure.
\end{theoremcited}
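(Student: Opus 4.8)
The plan is to reduce the matrix statement to the scalar Bernstein theorem (Theorem~\ref{thm:Bernstein}) by testing $\mat{F}$ against fixed vectors. For any $\vec{v} \in \mathbb{R}^n$ set $f_{\vec{v}}(y) := \langle \vec{v}, \mat{F}(y)\, \vec{v}\rangle$. Since by definition each matrix $(-1)^n\,\D^n\,\mat{F}(y)$ is positive semi-definite, we have $(-1)^n\,\D^n\,f_{\vec{v}}(y) = \langle \vec{v}, (-1)^n\,\D^n\,\mat{F}(y)\,\vec{v}\rangle \geq 0$ for every $y > 0$ and every non-negative integer $n$, so that $f_{\vec{v}}$ is a scalar CM function. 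The case $n = 0$ already forces $\mat{F}(y)$ to be symmetric, hence the bilinear form $(\vec{u},\vec{v}) \mapsto \langle \vec{u}, \mat{F}(y)\,\vec{v}\rangle$ is symmetric and can be recovered from the quadratic form $f_{\vec{v}}$ by polarization.

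First I would apply Bernstein's theorem to each $f_{\vec{v}}$ to obtain a positive Radon measure $\mu_{\vec{v}}$ with $f_{\vec{v}}(y) = \int_{[0,\infty[}\e^{-r y}\,\mu_{\vec{v}}(\dd r)$. Taking $\vec{v} = \vec{e}_i$ shows that each diagonal entry $F_{ii}$ is the Laplace transform of a positive measure; the off-diagonal entries are then handled by the polarization identity $F_{ij}(y) = \tfrac12\big[f_{\vec{e}_i + \vec{e}_j}(y) - F_{ii}(y) - F_{jj}(y)\big]$, which exhibits $F_{ij}$ as the Laplace transform of the signed Radon measure $M_{ij} := \tfrac12\,(\mu_{\vec{e}_i + \vec{e}_j} - \mu_{\vec{e}_i} - \mu_{\vec{e}_j})$. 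Assembling the $M_{ij}$ into a matrix-valued Radon measure $\mat{M}$ yields $\mat{F}(y) = \int_{[0,\infty[}\e^{-r y}\,\mat{M}(\dd r)$ entrywise.

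The decisive step is to verify that $\mat{M}$ is positive in the sense of the preceding definition. Here I would invoke the uniqueness clause of Bernstein's theorem: for an arbitrary $\vec{v}$ the signed measure $\langle \vec{v}, \mat{M}(\dd r)\,\vec{v}\rangle = \sum_{ij} v_i v_j\, M_{ij}(\dd r)$ has Laplace transform $\sum_{ij} v_i v_j\, F_{ij}(y) = f_{\vec{v}}(y)$, so by uniqueness it must coincide with the positive measure $\mu_{\vec{v}}$. Consequently, for every non-negative compactly supported $f$ on $\overline{\mathbb{R}_+}$,
$$\langle \vec{v}, \int_{[0,\infty[} f(y)\,\mat{M}(\dd y)\,\vec{v}\rangle = \int_{[0,\infty[} f(y)\,\mu_{\vec{v}}(\dd y) \geq 0,$$
which is exactly positivity of $\mat{M}$. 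For the converse, starting from $\mat{F}(y) = \int_{[0,\infty[}\e^{-r y}\,\mat{M}(\dd r)$ with $\mat{M}$ positive, I would differentiate under the integral sign, justified by the local finiteness of $\langle \vec{v}, \mat{M}\,\vec{v}\rangle$ exactly as in the scalar Widder argument, to obtain
$$\langle \vec{v}, (-1)^n\,\D^n\,\mat{F}(y)\,\vec{v}\rangle = \int_{[0,\infty[} r^n\, \e^{-r y}\,\langle \vec{v}, \mat{M}(\dd r)\,\vec{v}\rangle \geq 0,$$
whence each $(-1)^n\,\D^n\,\mat{F}(y)$ is positive semi-definite and $\mat{F}$ is CM.

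I expect the main obstacle to be the positivity verification for $\mat{M}$: the polarized entries $M_{ij}$ are a priori only signed measures, and it is not evident from the entrywise construction that the assembled matrix measure is positive. The clean resolution is the uniqueness argument above, which transfers scalar positivity of each $\mu_{\vec{v}}$ directly to the quadratic form of $\mat{M}$. Some care is needed to ensure the identification $\langle \vec{v}, \mat{M}\,\vec{v}\rangle = \mu_{\vec{v}}$ holds simultaneously for \emph{all} $\vec{v}$, and not merely for the finitely many basis combinations used to build $\mat{M}$; this again follows from injectivity of the Laplace transform on Radon measures.
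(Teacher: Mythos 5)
The paper does not prove this statement at all --- it is quoted from Gripenberg--Londen--Staffans as a cited theorem --- so your argument can only be judged on its own merits, and on those merits it is essentially correct and is the standard reduction one would expect: test $\mat{F}$ against fixed vectors to obtain scalar CM functions $f_{\vec{v}}$, apply the scalar Bernstein theorem (Theorem~\ref{thm:Bernstein}), reassemble a matrix-valued measure $\mat{M}$ by polarization, and transfer positivity of each $\mu_{\vec{v}}$ to the quadratic form of $\mat{M}$ via injectivity of the Laplace transform. The step you correctly identify as decisive --- that the polarized entries $M_{ij}$ are a priori only signed measures --- is indeed resolved by the uniqueness argument, and it does hold for all $\vec{v}$ at once, not just basis combinations. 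In the converse direction, one small addition: the paper's definition of a positive matrix measure only tests non-negative \emph{compactly supported} functions, so to integrate $r^n\,\e^{-r y}$ against $\langle \vec{v},\mat{M}(\dd r)\,\vec{v}\rangle$ you need a routine monotone-convergence extension before differentiating under the integral sign.

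One caveat deserves attention: your claim that the case $n=0$ \emph{forces} $\mat{F}(y)$ to be symmetric is not a deduction. If positive semi-definiteness of a real matrix is read purely as non-negativity of the quadratic form, symmetry does not follow, since any antisymmetric matrix has identically vanishing quadratic form; worse, under that reading the theorem itself is false. For example, $\mat{F}(y) = \e^{-y}\,\mat{I} + g(y)\,\mat{J}$, with $\mat{J}$ antisymmetric and $g$ smooth but not the Laplace transform of any signed measure (e.g.\ $g(y)=\sin y$, whose transform would have to be bounded on half-planes $\re p \geq \epsilon$), satisfies every quadratic-form inequality in the definition of CM yet is not the Laplace transform of any matrix-valued measure. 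So symmetry must be built into the meaning of ``positive semi-definite'' (as is intended here and in the cited source), not derived from it. Granted that convention, your polarization identity $F_{ij} = \tfrac12\bigl[f_{\vec{e}_i+\vec{e}_j} - F_{ii} - F_{jj}\bigr]$ and the rest of the proof go through as written.
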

The following corollary will be applied to Green's functions:
\begin{corollary}
If $\tilde{\mat{R}}(p) := \int_0^\infty \e^{-p t} \, \mat{R}(t) \, \dd t$ is a matrix-valued CM function then
$\mat{R}(t)$ is positive semi-definite for $t > 0$.
\end{corollary}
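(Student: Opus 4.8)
The plan is to read off the corollary as a direct specialization of the preceding theorem (the matrix-valued Bernstein characterization) together with the matrix Bernstein representation. First I would invoke the cited theorem to conclude that, since $\tilde{\mat{R}}(p)$ is a matrix-valued CM function, it is the Laplace transform of a positive matrix-valued Radon measure $\mat{M}$:
\begin{equation*}
\tilde{\mat{R}}(p) = \int_{[0,\infty[}\; \e^{-p t}\, \mat{M}(\dd t).
\end{equation*}
Next I would apply the structural lemma to write $\mat{M}(\dd t) = \mat{K}(t)\, m(\dd t)$ with $m$ a positive scalar Radon measure and $\mat{K}(t)$ bounded and positive semi-definite off an $m$-null set $E$.

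The middle step is a uniqueness argument paralleling the scalar Theorem in Sec.~\ref{sec:tools}. Because $\mat{R}$ is itself the (entrywise) inverse Laplace transform determining $\tilde{\mat{R}}$, and the representing measure $\mat{M}$ is uniquely determined by $\tilde{\mat{R}}$, I would identify $\mat{M}(\dd t)$ with $\mat{R}(t)\, \dd t$ on $\mathbb{R}_+$. Comparing the two representations gives $\mat{R}(t)\, \dd t = \mat{K}(t)\, m(\dd t)$, so that for any fixed $t_0 > 0$ and any test function concentrating near $t_0$, the matrix $\mat{R}(t_0)$ agrees with a positive-semidefinite value of $\mat{K}$ weighted by $m$; hence $\mat{R}(t)$ inherits positive semidefiniteness for $m$-almost every $t$. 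Invoking continuity of $\mat{R}(\cdot)$ on $\mathbb{R}_+$ (which follows because $\tilde{\mat{R}}$ is the transform of a genuine density, exactly as in the scalar case) upgrades the almost-everywhere statement to: $\mat{R}(t)$ is positive semi-definite for every $t > 0$.

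An alternative, cleaner route avoids the structural lemma entirely: fix $\vec{v} \in \mathbb{R}^n$ and consider the scalar function $\phi_{\vec{v}}(p) := \langle \vec{v}, \tilde{\mat{R}}(p)\, \vec{v}\rangle$. By the matrix-CM definition, $(-1)^n\, \D^n\, \phi_{\vec{v}}(y) = \langle \vec{v}, (-1)^n\, \D^n \tilde{\mat{R}}(y)\, \vec{v}\rangle \geq 0$, so $\phi_{\vec{v}}$ is a scalar CM function. The first Theorem of Sec.~\ref{sec:tools} then yields $\langle \vec{v}, \mat{R}(t)\, \vec{v}\rangle \geq 0$ for all $t \geq 0$; since $\vec{v}$ is arbitrary, $\mat{R}(t)$ is positive semi-definite. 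I expect this scalar-reduction to be the simplest and most robust argument, and I would present it as the main proof.

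The main obstacle is justifying the identification of the representing measure with the density $\mat{R}(t)\, \dd t$ and hence the regularity (continuity) of $\mat{R}$, since the corollary's conclusion is stated pointwise in $t$. In the scalar-reduction route this obstacle is dispatched by the uniqueness of the inverse Laplace transform entrywise and the first Theorem's appeal to continuity; I would simply note that each entry $R_{ij}(t)$, being a finite linear combination of quadratic forms $\langle \vec{v}, \mat{R}(t)\, \vec{v}\rangle$ via polarization, inherits the required continuity, so no separate matrix-measure decomposition is needed.
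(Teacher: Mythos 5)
Your proposal is correct, and your preferred argument is genuinely different from the paper's. The paper offers no written proof: the corollary is placed immediately after the cited matrix-valued Bernstein theorem, and the intended argument is exactly your first route --- $\tilde{\mat{R}}$ CM implies it is the Laplace transform of a positive matrix-valued Radon measure, uniqueness of the Laplace transform identifies that measure with $\mat{R}(t)\,\dd t$, and positivity of the measure gives positive semi-definiteness of $\mat{R}(t)$ (almost everywhere, upgraded by continuity). Your scalar reduction, which you rightly single out as the cleaner main proof, bypasses both the matrix-valued Bernstein theorem and the structural lemma $\mat{M}(\dd x) = \mat{K}(x)\,m(\dd x)$: since $(-1)^n\,\D^n\,\tilde{\mat{R}}(y)$ positive semi-definite is by definition the statement that $\langle \vec{v}, (-1)^n\,\D^n\,\tilde{\mat{R}}(y)\,\vec{v}\rangle \geq 0$ for every $\vec{v}$, the quadratic form $\phi_{\vec{v}}(p) = \langle \vec{v}, \tilde{\mat{R}}(p)\,\vec{v}\rangle$ is scalar CM and is, by linearity, the Laplace transform of $\langle \vec{v}, \mat{R}(t)\,\vec{v}\rangle$; the first Theorem of Sec.~3 finishes the job. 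What each approach buys: yours is more elementary and self-contained, resting only on scalar Bernstein theory; the paper's route exercises the matrix-valued measure machinery that the section develops anyway (and that is needed for the representation \eqref{eq:CBFintegralM}), and it covers representing measures that are not absolutely continuous. One caveat: your closing remark that continuity of the entries $R_{ij}$ is ``inherited via polarization'' is muddled --- continuity of $\mat{R}$ is a standing regularity assumption, not a consequence of the CM hypothesis; but this is precisely the same gap the paper leaves in its scalar theorem, where ``continuity of $u(\cdot,x)$'' is invoked without justification, so your proof is no less rigorous than the paper's own standard. A small notational point: avoid using $n$ both for the matrix dimension and for the order of differentiation.
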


\begin{definition}
A matrix-valued function  $\mat{G}: \mathbb{R}_+ \rightarrow \mathbb{R}^{n\times n}$ 
is said to be a Bernstein function (BF)
if $\mat{G}(y)$  is differentiable and positive semi-definite for all $y > 0$ and 
its derivative $\D \mat{G}$ is CM.
\end{definition}

\begin{definition}
A matrix-valued function  $\mat{H}: \mathbb{R}_+ \rightarrow \mathbb{R}^{n\times n}$ is 
said to be a complete Bernstein function (CBF)
if $\mat{H}(y) = y^2\, \tilde{\mat{G}}(y)$, where $\mat{G}$ is an $n \times n$ matrix-valued BF.
\end{definition}

The integral representation \eqref{eq:CBFintegral} of a CBF remains valid except that 
the Radon measure has to be replaced by a positive matrix-valued
Radon measure $\mat{N}(\dd r) = \mat{K}(r) \,\nu(\dd r)$:
\begin{equation} \label{eq:CBFintegralM}
\mat{H}(y) = \mat{B} +  y\, \mat{A} + y \int_{]0,\infty[}\; \frac{\mat{K}(r)\,\nu(\dd r)}{r + y}
\end{equation}
where the Radon measure $\nu$ satisfies the inequality 
\begin{equation} 
\int_{[0,\infty[}\; \frac{\nu(\dd r)}{1 + r} < \infty
\end{equation}
the matrix-valued function $\mat{K}(r)$ is positive semi-definite and bounded 
$\nu$-almost everywhere on $\mathbb{R}_+$
while $\mat{A}$, $\mat{B}$ are two positive semi-definite matrices. 
Every matrix-valued CBF $\mat{H}$ can be expressed in the form 
\begin{equation} \label{eq:CBFLICM-M}
\mat{H}(y) = y \, \tilde{\mat{F}}(y) +  y\, \mat{A}
\end{equation}
where $\mat{F}$ is a matrix-valued LICM function.

We now consider the following problem for the Green's function $\mathcal{G}$
\begin{equation}
\rho\, \D^2 \,\mathcal{G} = \mat{A}\,\D\, \mathcal{G} +  \mat{G}\ast \nabla^2\, 
\D\, \mathcal{G} + \delta(t) \,\delta(x)\,\mat{I}, 
\quad t \geq 0, \quad x \in \mathbb{R}^d
\end{equation}
where $\mat{A}$ is a positive semi-definite $n \times n$ matrix and $\mat{G}$ is 
an $n \times n$ matrix-valued relaxation modulus.

If the relaxation modulus $\mat{G}$ is a CM matrix-valued function then the function 
$\mat{Q}(p) := p \, \tilde{\mat{G}}(p)$ 
is a matrix-valued CBF. The function $\mat{Q}$ is real and positive semi-definite, 
hence it is symmetric and has $n$ 
eigenvalues $q_i(p)$ and $n$ eigenvectors $\vec{e}_i$, $i = 1, \ldots, n$. 
We shall now assume that the eigenvectors are constant:
$$\mat{Q}(p) = \sum_{i=1}^n q_i(p) \, \vec{e}_i \otimes  \vec{e}_i$$

It is easy to see that the functions $q_i$, $i = 1,\ldots, n$, are CBFs.

The Laplace transform $\tilde{\mathcal{G}}(p,x)$ of the Green's function is given by the
formula
\begin{multline*}
\tilde{\mathcal{G}}(p,x) = \frac{1}{(2 \upi)^d} \int \e^{\ii \langle \vec{k}, x \rangle} \left[ p^2\, \mat{I} + 
\vert \vec{k} \vert^2\, \mat{Q}(p)\right]^{-1} \, \dd_d k \equiv \\ \sum_{i=1}^n \frac{1}{(2 \upi)^d} \int \e^{\ii \langle \vec{k}, x \rangle} 
 \left[ p^2 + \vert \vec{k} \vert^2 \, q_i(p)\right]^{-1} \,\vec{e}_i \otimes  \vec{e}_i \dd_d k \equiv \\ 
\sum_{i=1}^d  g_i(p,x) \, \vec{e}_i \otimes  \vec{e}_i
\end{multline*}
where 
$$g_i(p,x) := \frac{\rho^{d/4-1/2} \,p^{d/2-1}}{(2 \upi)^{d/2}\, q_i(p)^{d/4+1/2}}\, 
r^{-(d/2-1)}\, K_{d/2-1}\left(B_i(p)\, r\right)$$
and
$B_i(p) = \rho^{1/2}\,p/q_i(p)^{1/2}$, $i = 1, \ldots, n$. Assume for definiteness that  $d \leq 3$. The argument of Sec.~\ref{sec:scalar-multidim} 
now leads to the conclusion that 
the functions $g_i$, $i = 1, \ldots, n$, are CM. Consequently the Laplace transform 
$\tilde{\mathcal{G}}(\cdot,x)$ of the Green's function is a matrix-valued CM function for $x \in \mathbb{R}^d$
and the Green's function $\mathcal{G}(t,x)$ is positive semi-definite for $t \geq 0$, $x \in \mathbb{R}^d$. 
In particular, we have the following theorem:
\begin{theorem}
Let $\rho \in \mathbb{R}_+$, $d \leq 3$, $\vec{s}(t,x) = \delta(t)\, \delta(x)\, \vec{w}$, where $\vec{w} \in \mathbb{R}^n$. 

If $\mat{G}(s) = \sum_{i=1}^n G_i(s) \, \vec{e}_i \otimes  \vec{e}_i$ and $\mat{A} = \sum_{i=1}^n a_i \, \vec{e}_i \otimes  \vec{e}_i$
with CM functions $G_k$ and real numbers $a_i \geq 0$, $i = 1, \ldots, n$, then 
the solution $\vec{u}$ of the problem
$$\rho\, \D^2\, \vec{u} = \mat{A} \, \D\, \vec{u} + \mat{G}\ast \nabla^2 \, \D\,\vec{u} + \vec{s}(t,x), \quad t\geq 0, \quad x \in \mathbb{R}^d$$ 
satisfies the inequality
\begin{equation}
\langle \vec{u}(t,x), \vec{w} \rangle \geq 0, \qquad t \geq 0, \quad x  \in \mathbb{R}^d.
\end{equation}
\end{theorem}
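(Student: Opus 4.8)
The plan is to reduce the vector inequality to the positive semi-definiteness of the matrix Green's function $\mathcal{G}$ that was established in the discussion preceding the statement. First I would note that the system governing $\vec{u}$ is literally the system defining $\mathcal{G}$, with the matrix source $\delta(t)\,\delta(x)\,\mat{I}$ replaced by the vector source $\vec{s}(t,x)=\delta(t)\,\delta(x)\,\vec{w}$. Since the differential operator is linear and $\mathcal{G}$ is the space-time impulse response, by uniqueness of the solution vanishing for $t<0$ together with the convolution representation one obtains
$$\vec{u}(t,x) = (\mathcal{G}\ast\vec{s})(t,x) = \mathcal{G}(t,x)\,\vec{w},$$
whence
$$\langle \vec{u}(t,x),\vec{w}\rangle = \langle \mathcal{G}(t,x)\,\vec{w},\vec{w}\rangle.$$
The assertion then follows the instant we know that $\mathcal{G}(t,x)$ is positive semi-definite for each $t\geq 0$ and $x\in\mathbb{R}^d$.

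The positive semi-definiteness is where the actual work sits, and it is delivered by the analysis already carried out. Because $\mat{A}$ and $\mat{G}$ are simultaneously diagonalized in the fixed orthonormal basis $\{\vec{e}_i\}$, the matrix complete Bernstein function built from them decomposes along constant rank-one projections, and its eigenvalue functions $q_i$ are scalar complete Bernstein functions. The Laplace transform of the Green's function then splits along the same projections,
$$\tilde{\mathcal{G}}(p,x) = \sum_{i} g_i(p,x)\,\vec{e}_i\otimes\vec{e}_i,$$
and for $d\leq 3$ each scalar factor $g_i(\cdot,x)$ is CM by exactly the argument of Section~\ref{sec:scalar-multidim}: $q_i(p)^{-(d+1)/4}$ is CM by Lemma~\ref{lem:Q}, the factor $L_{d/2-1}(B_i(p)\,r)$ is CM by Lemma~\ref{lem:basicK}, and the residual factor $p^{(d-3)/2}$ is CM precisely because $d\leq 3$. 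A sum of scalar CM functions multiplied by the constant positive semi-definite projections $\vec{e}_i\otimes\vec{e}_i$ is a matrix-valued CM function, so $\tilde{\mathcal{G}}(\cdot,x)$ is CM; by the corollary identifying matrix-valued CM functions with Laplace transforms of positive matrix-valued Radon measures, $\mathcal{G}(t,x)$ is positive semi-definite. Combining the two displays gives $\langle \vec{u}(t,x),\vec{w}\rangle=\langle \mathcal{G}(t,x)\,\vec{w},\vec{w}\rangle\geq 0$, which is the claim.

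The main obstacle is not the closing projection — that is immediate once $\mathcal{G}$ is positive semi-definite — but rather the structural hypothesis that lets the scalar machinery transfer to the matrix setting, namely that the eigenvectors $\vec{e}_i$ are constant in $p$. As the paper emphasizes, the product and composition rules for CM and complete Bernstein functions fail for non-commuting matrix-valued arguments, so it is precisely the constancy of the eigenbasis (equivalently, the simultaneous diagonalizability of $\mat{A}$ and $\mat{G}$) that permits diagonalizing the coefficient matrix once and for all and treating each eigenvalue $q_i$ by the scalar theory of Section~\ref{sec:scalar-multidim}. The one point requiring genuine care is verifying that each $q_i$ is truly a complete Bernstein function — not merely non-negative — since this is what licenses the application of Lemmas~\ref{lem:Q} and~\ref{lem:basicK} to each eigendirection.
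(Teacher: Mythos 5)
Your proposal is correct and takes essentially the same approach as the paper: the paper's proof is precisely the discussion preceding the theorem, which decomposes $\tilde{\mathcal{G}}(p,x)$ along the constant projections $\vec{e}_i \otimes \vec{e}_i$, shows each scalar component $g_i(\cdot,x)$ is CM for $d \leq 3$ by the argument of Section~\ref{sec:scalar-multidim}, and concludes that $\mathcal{G}(t,x)$ is positive semi-definite. Your only addition is to spell out the closing step $\vec{u}(t,x) = \mathcal{G}(t,x)\,\vec{w}$, which the paper leaves implicit in ``In particular.''
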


\section{Positivity in isotropic viscoelasticity.}
\label{sec:isotropic}

Consider now the Green's function $\mathcal{G}$ of a 3D isotropic viscoelastic medium. 
The function $\mathcal{G}$ 
is the solution of the initial-value 
problem:
\begin{multline}
\rho \, \D^2 \, \mathcal{G}_{kr}(t,x) = G(t)_{klmn}\ast \D\,\mathcal{G}_{mr,nl} + \delta(t)\, \delta(x) \, \delta_{kr}, \\ t > 0, 
\quad x \in \mathbb{R}^3, \quad k,r = 1,2,3
\end{multline}
with zero initial conditions, and
\begin{equation}
G_{klmn}(t) = \lambda(t) \, \delta_{kl}\, \delta_{mn} + \mu(t) \, \delta_{km}\,
\delta_{ln} + \mu(t) \, \delta_{kn}\, \delta_{lm}
\end{equation}
where the kernels $\lambda(t), \mu(t)$ are CM and $\rho \in \mathbb{R}_+$. The function $\tens{G}$ with the components $G_{klmn}$ 
takes values in the linear space $\Sigma$ of
symmetric operators on the space $S$ of symmetric $3 \times 3$ matrices. It is easy to see that under our hypotheses this function is CM:
$$ (-1)^n \, \langle \mat{e}_1, \tens{G}(t)\, \mat{e}_2 \rangle \geq 0 \qquad \text{for all $n = 0,1, 2 \ldots$} $$ 
for every $\mat{e}_1, \mat{e}_2 \in S$, where $\langle \mat{v}, \mat{w} \rangle := v_{kl}\, w_{kl}$ is the inner product on $S$. 

The Laplace transform $\tilde{\mathcal{G}}$ of $\mathcal{G}$ is given by the formula
\begin{multline*}
\mathcal{G}(p,x) = \frac{1}{p\, \rho}  \Big\{  
\nabla \otimes \nabla \Delta^{-1}\, F_\mathrm{L}(p,\vert x \vert)
+ \\ \left[ \mat{I} -  \nabla \otimes \nabla \Delta^{-1}\right]\, F_\mathrm{T}(p,\vert x \vert) 
\Big\}
\end{multline*}
where $\Delta := \nabla^2$, 
\begin{gather}
F_{\mathrm{L}}(p,r) := \frac{s_{\mathrm{L}}(p)^2 \,}{4 \upi r} \e^{-p^{1/2}\, s_{\mathrm{L}}(p)\, r} \label{eq:qq}\\
F_{\mathrm{T}}(p,r) := \frac{s_{\mathrm{T}}(p)^2 \,}{4 \upi r} \e^{-p^{1/2}\, s_{\mathrm{T}}(p)\, r} \label{eq:qr}
\end{gather}
and
\begin{gather}
s_{\mathrm{L}}(p)^2 := \frac{\rho}{\lambda(p) + 2 \mu(p)}\\
s_{\mathrm{T}}(p)^2 := \frac{\rho}{\mu(p)}
\end{gather}
Since $q_{\mathrm{L}}(p) = p/s_{\mathrm{L}}(p)$ and $q_{\mathrm{T}}(p) = p/s_{\mathrm{T}}(p)$ 
are CBFs, the functions $p/q_{\mathrm{L}}(p)^{1/2} =
p^{1/2}\, s_{\mathrm{L}}(p)$ and $p/q_{\mathrm{T}}(p)^{1/2} =
p^{1/2}\, s_{\mathrm{T}}(p)$ are BFs. Hence the exponentials in eqs~(\ref{eq:qq}--\ref{eq:qr}) 
are CM functions of $p$.
 Moreover the functions $p \,s_{\mathrm{L}}(p)^{-2}$, $p \,s_{\mathrm{T}}(p)^{-2}$ are CBFs  
and thus $s_{\mathrm{L}}(p)^2/p$ and $s_{\mathrm{T}}(p)^2/p$ are CM. It follows that
the functions $F_{\mathrm{L}}(p,r)$ and $F_{\mathrm{T}}(p,r)$ are CM and therefore they are 
Laplace transforms of non-negative
functions $F_{\mathrm{L}}(t,r)$ and $F_{\mathrm{T}}(t,r)$. Their indefinite integrals 
$f_{\mathrm{L}}(t,r) := \int_0^t F_{\mathrm{L}}(s,r)\,
\dd s$ and $f_{\mathrm{T}}(t,r) := \int_0^t F_{\mathrm{L}}(s,r)\,
\dd s$ are also non-negative. 
The functions $h_{\mathrm{L}}(t,r) := \Delta^{-1}\, f_{\mathrm{L}}(t,r)\,
\dd s$, 
$h_{\mathrm{T}}(t,r) = \Delta^{-1}\, f_{\mathrm{T}}(t,r)$
 involve a convolution with a non-negative kernel and therefore 
are non-negative. The Green's function can be expressed in terms of these functions:
\begin{equation} \label{eq:GreenIsoVE}
\mathcal{G}(t,x) = \frac{1}{\rho} \left\{ \nabla \otimes \nabla h_{\mathrm{L}}(t,\vert x \vert) + 
\left[ \Delta\, \mat{I} - \nabla \otimes \nabla \right]\, h_\mathrm{T}(t,\vert x \vert) \right\}
\end{equation}

We shall use the notation $\vec{v} \geq 0$ if $v_k \geq 0$ for $k =1,2,3.$

\begin{theorem}
Let $\vec{u} = \nabla \phi + \nabla \times \vec{\psi}$ be the solution of the initial-value problem 
\begin{equation}
\rho\, \D^2 \,\vec{u} = \mat{G}\ast \nabla^2\, \D\, \vec{u} + \vec{s}(t,x), 
\quad t \geq 0, \quad x \in \mathbb{R}^d
\end{equation}
with $\vec{u}(0,x) = 0 = \D \vec{u}(0,x)$ and
$\vec{s}(t,x) = \nabla f(t,x) + \nabla \times \vec{g}(t,x)$.

Then $\nabla f(t,x) \geq 0$ for all $t \geq 0$, $x \in \mathbb{R}^3$ implies 
that $\nabla \phi(t,x) \geq 0$ for all $t \geq 0$, $x \in \mathbb{R}^3$.

Similarly, $\nabla \times \vec{g}(t,x) \geq 0$ for all $t \geq 0$, $x \in \mathbb{R}^3$ implies 
that $\nabla \times \vec{\psi}(t,x) \geq 0$ for all $t \geq 0$, $x \in \mathbb{R}^3$.
\end{theorem}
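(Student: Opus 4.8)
The plan is to represent the solution as a space–time convolution of the source with the Green's function $\mathcal{G}$ of \eqref{eq:GreenIsoVE}, and then to separate the longitudinal (curl-free) channel from the transverse (divergence-free) channel. Throughout I take $d=3$, since \eqref{eq:GreenIsoVE} and the decomposition $\vec{u}=\nabla\phi+\nabla\times\vec{\psi}$ are three-dimensional.

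First I would use the zero initial data to write $\vec{u}=\mathcal{G}\ast\vec{s}$, the convolution being taken in both $t$ and $x$. Substituting the explicit form \eqref{eq:GreenIsoVE}, each Cartesian component reads
\[
\rho\,u_i = (\partial_i\partial_j h_{\mathrm{L}})\ast s_j + \left(\Delta\,\delta_{ij}-\partial_i\partial_j\right)h_{\mathrm{T}}\ast s_j .
\]
The crucial structural observation is that the two differential operators appearing here are, up to a factor $\Delta$, the longitudinal and transverse projectors: $\partial_i\partial_j$ annihilates a curl, because $\nabla\cdot(\nabla\times\vec{g})=0$, while $\Delta\,\delta_{ij}-\partial_i\partial_j$ annihilates a gradient, because $\partial_j\partial_j f=\Delta f$. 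Hence, after inserting the Helmholtz decomposition $\vec{s}=\nabla f+\nabla\times\vec{g}$, all cross terms drop out: the $h_{\mathrm{L}}$-block feels only $\nabla f$ and the $h_{\mathrm{T}}$-block feels only $\nabla\times\vec{g}$.

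Next I would commute the derivatives through the convolutions and use the identities $\Delta h_{\mathrm{L}}=f_{\mathrm{L}}$ and $\Delta h_{\mathrm{T}}=f_{\mathrm{T}}$, which hold by the definitions $h_{\mathrm{L}}=\Delta^{-1}f_{\mathrm{L}}$, $h_{\mathrm{T}}=\Delta^{-1}f_{\mathrm{T}}$. Writing $\nabla\cdot\vec{s}=\Delta f$ and collecting terms, the second-order operators collapse and one obtains
\[
\vec{u} = \nabla\!\left(\tfrac{1}{\rho}\, f_{\mathrm{L}}\ast f\right) + \nabla\times\!\left(\tfrac{1}{\rho}\, f_{\mathrm{T}}\ast\vec{g}\right),
\]
where the scalar kernels $f_{\mathrm{L}},f_{\mathrm{T}}$ were already shown, earlier in this section, to be non-negative. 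Since this exhibits $\vec{u}$ as the sum of a curl-free and a divergence-free field, uniqueness of the Helmholtz decomposition lets me identify $\nabla\phi=\tfrac{1}{\rho}\,f_{\mathrm{L}}\ast\nabla f$ and $\nabla\times\vec{\psi}=\tfrac{1}{\rho}\,f_{\mathrm{T}}\ast(\nabla\times\vec{g})$, the scalar kernels commuting with $\nabla$ and $\nabla\times$. Finally, because convolution with the non-negative kernel $f_{\mathrm{L}}$ (resp. $f_{\mathrm{T}}$) preserves componentwise non-negativity, the hypothesis $\nabla f\geq 0$ forces $\nabla\phi\geq 0$, and $\nabla\times\vec{g}\geq 0$ forces $\nabla\times\vec{\psi}\geq 0$.

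The hard part will not be the positivity step itself but the operator bookkeeping together with the identification of the two parts: one must justify commuting $\nabla$, $\nabla\times$ and $\Delta$ through the space–time convolutions, verify that the cross terms genuinely vanish, and invoke uniqueness of the Helmholtz decomposition, which requires suitable decay or regularity of the fields at infinity, in order to conclude that the gradient and curl pieces produced by the computation really coincide with the $\nabla\phi$ and $\nabla\times\vec{\psi}$ of the statement. Everything else reduces to the non-negativity of $f_{\mathrm{L}}$ and $f_{\mathrm{T}}$ already established in this section.
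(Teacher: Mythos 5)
Your proposal is correct and follows essentially the same route as the paper's own proof: both substitute the Helmholtz decompositions of $\vec{u}$ and $\vec{s}$ into the convolution $\mathcal{G}\ast\vec{s}$ with $\mathcal{G}$ given by \eqref{eq:GreenIsoVE}, observe that the longitudinal block acts only on $\nabla f$ and the transverse block only on $\nabla\times\vec{g}$, and conclude from the non-negativity of $f_{\mathrm{L}}$ and $f_{\mathrm{T}}$. The only differences are presentational --- you work in index notation and cancel cross terms explicitly, invoking Helmholtz uniqueness, where the paper applies the projectors $\nabla\,\Delta^{-1}\mathop{\mathrm{Div}}$ and $\mat{I}-\Delta^{-1}\,\nabla\otimes\nabla$ directly to $\vec{s}$.
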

\begin{proof}
Substitute $\vec{u} = \nabla \phi + \nabla \times \vec{\psi}$, 
$\vec{s} = \nabla f + \nabla \times \vec{g}$ in the formula
$$\vec{u}(t,x) = \int_0^t \int \mathcal{G}(t-s,x-y)\, \vec{s}(s,y) \dd_3 y \,
\dd s$$
where $\mathcal{G}$ is given by \eqref{eq:GreenIsoVE}. Noting that $\Delta^{-1}$
is a convolution operator commuting with $\nabla$ and 
$\nabla \,\Delta^{-1}\, \Div \vec{s} = \nabla f$ we have
$$\nabla \phi(t,x) = \frac{1}{\rho} \, \int f_{\mathrm{L}}(t-s,\vert x-y \vert)\,
(\nabla f)(s,y)\, \dd_3 y$$
We now note that
$\left[\mat{I} - \Delta^{-1}\, \nabla \otimes \nabla\right] \, \vec{s} =
\vec{s} - \nabla f = \nabla \times \vec{g}$. Hence  
$$\nabla \times \vec{\psi}(t,x) = \frac{1}{\rho} \, 
\int f_{\mathrm{T}}(t-s,\vert x-y \vert)\, \nabla \times \vec{g}(s,y) \, \dd_3 y$$
The functions $f_{\mathrm{L}}$ and $f_{\mathrm{T}}$ are non-negative, hence the thesis follows.
\end{proof}

\section{Generalized Cole-Cole relaxation modulus.}

A particular example of a CBF is the rational function
$F(p) = R_N(p)/S_M(p)$, where $R_N$ and $S_M$ are two polynomials with 
simple negative roots $\lambda_k$, $k = 1, \ldots, N$, $\mu_l$,
$l = 1,\ldots, M$, $M = N$ or $N+1$ satisfying the intertwining conditions:
$$0 \leq \lambda_1 < \mu_1 < \ldots \mu_N \,[\, < \lambda_{N+1}\,]$$
(the last inequality is applicable only if $M = N + 1$). The function 
$F$ is thus a good example for the function $Q$.

 A more general CBF is obtained by substituting in $F$ the
CBF $p^\alpha$, with $0 < \alpha < 1$:
$$ F_\alpha(p) = R_N\left(p^\alpha\right)/S_M\left(p^\alpha\right)$$
(Theorem~\ref{thm:CBF}). The choice of 
$Q = F_\alpha$ corresponds to a generalized Cole-Cole model
of relaxation. For $N = M = 1$ the original Cole-Cole model  
\citep{ColeCole,BagleyTorvik1,HanMTAV,HanColeJCA} is recovered.

Another useful CBF is the function $F_{\tau,\alpha}(p) := 
F\left(\left(1 + (\tau p)\right)^\alpha\right)$,
for some $\alpha \in [0,1]$ and $\tau > 0$. The function $\left(1 + (\tau p)\right)^\alpha$
is a CBF, hence $F_{\tau,\alpha} \in \CBF$ too.
This function has a low-frequency behavior appropriate for 
poroacoustic and poroelastic media. A simple poro-acoustic model of this type
is defined by the equation\citep{BiotI,Norris1:BIOT,Allard:BIOT,AuriaAl:BIOT,Johnson,PrideGangiDaleMartin,HanGreen,FellahDepollier} 
\begin{equation}\label{eq:poro}
\rho\, p^2\, K(p)\, \tilde{u}(p,x) = \Delta \, \tilde{u}(p,x) + \tilde{s}(p,x)
\end{equation}
where $K(p) := F_{\tau,\alpha}(p)^{-1}$ is a CM function, known as the viscodynamic
operator\cite{Norris1:BIOT}.
Bernstein's theorem implies
that the inverse Laplace transform $H(t)$ of $K(p)$ is non-negative. In terms of the kernel 
$H(t)$ eq.~\eqref{eq:poro} is equivalent to the time-domain equation
\begin{equation} \label{eq:poro-time}
\rho\, \D^2\, H(t)\ast u(t,x) = \Delta \, u(t,x) + s(t,x)
\end{equation}
 
The results of Sec.~\ref{sec:scalar-multidim} imply that for $Q = F, F_\alpha, F_{\alpha,\tau}$, 
and $0 \leq \alpha \leq 1$, $\tau > 0$, $d \leq 3$
the Green's functions of \eqref{eq:problem} and \eqref{eq:poro-time} are non-negative for 
$d \leq 3$.

\section{Beyond viscoelasticity.}

\subsection{Positivity for $Q^{1/2} \in \CBF$.} 
\label{subsec:extended}

The proofs of positivity in the preceding sections were motivated by the constitutive 
equations of viscoelasticity which led to the assumption $Q \in \CBF$ and its variants
underlying all the proofs. We now approach the same problem from the other end and ask
whether positivity would hold under weaker assumptions than $Q \in \CBF$.

Positivity can be shown to hold under somewhat weaker assumptions. The Green's function for 
eq.~\eqref{eq:problem} can be expressed in the form
\begin{multline} \label{eq:add}
\tilde{U}^{(d)}(p,r) = \frac{1}{(2 \upi)^{d/2} \, \rho} p^{(d-3)/2}\, \left[\frac{B(p)}{p}  
\right]^{(d+1)/2}\, r^{-(d-1)/2} \, L_{d/2-1}(B(p)\, r)
\end{multline}
For $d = 1,3$ the function $L_{d/2-1}(z) = (\upi/2)^{1/2} \, \exp(-z)$ and 
the last factor in \eqref{eq:add} is CM for every $r$ if and only if $B$ is a BF
(Theorem~\ref{thm:compBFCM}). On the other hand,
if $B$ is a BF then, by Lemma~\ref{lem:app}, the function $B(p)/p$ is CM. If $d$ is an 
odd integer then the third factor on the right-hand side of
\eqref{eq:add} is CM. The second factor is CM if $d \leq 3$. Therefore the Green's function is
non-negative if $B$ is a BF and $d = 1,3$. 

Positivity of $u^{(2)}$ can be proved under somewhat stronger assumption 
$B \in \CBF$. Let $0 \leq \gamma < 1$. Since $B, p^\gamma \in \CBF$,
from Theorem~\ref{thm:logconvCBF} we conclude that 
$$\left[\frac{p}{B(p)}\right]^{1/2}\, p^{\gamma/2}  \in \CBF$$
Hence 
$$f_\gamma(p) := \left[\frac{B(p)}{p}\right]^{3/2} \, p^{-\gamma/2} = \frac{B(p)}{p} 
\left[\frac{B(p)}{p}\right]^{1/2} \, p^{-\gamma/2}$$
is the product of two CM functions. Hence it is CM. 
The pointwise limit $\lim_{\gamma \rightarrow 1} f_\gamma(p)$ is CM.  
Since $L_0(B(p) r)$ is CM, eq.~\eqref{eq:add} implies that 
$U^{(2)}(\cdot,r)$ is CM for all $r > 0$. Hence $u^{(2)}(t,x) \geq 0$.

If $B \in \CBF$ then the function $Q(p)^{1/2} = p/B(p)$ is also a CBF but $Q$ need not be a CBF. 
Since $Q \in \CBF$ entails $Q^{1/2} \in \CBF$ but not conversely, 
the assumption that $B \in \CBF$ or even $B \in \BF$ is weaker than the assumption $Q \in \CBF$.
For $d = 1$ and 3 this constitutes a generalization of the results obtained in
Sec.~\ref{sec:scalar-multidim}.

\subsection{Example.}
\label{subsec:example}

We shall consider viscoelastic media with power law attenuation,
defined by the formula \citep{HanPowerlaw,HanMTAV}
\begin{equation}
Q(p) = \frac{c^2}{[1 + a\, p^{\alpha-1}]^2},\qquad a \geq 0,\quad c > 0, 
\quad 0 \leq \alpha \leq 1
\end{equation}
Time-domain relaxation calculus and creep compliance for power law  attenuation can be
found in \citet{HanMTAV}. Applications of this model in acoustic inversion and 
seismic inversion
can be found in \citet{RibodettiHanygaGJI} and \citet{FastTrack}. 
For $d = 1, 3$ the Green's function can be expressed in terms of an exponential
$$U^{(d)}(p,r) = F^{(d)}(p,r) \, \e^{-\Phi(p,r)}$$
where $F^{(d)}(p,r)$ is an algebraic function and the phase function 
\begin{equation}
\Phi(p,r) = \rho^{1/2}\, p\, r/Q(p)^{1/2} = \rho^{1/2}\, \left[p + a\, p^\alpha   \right]   \,r/c 
\end{equation}
The attenuation is non-negative:
$$\vert  \e^{-\Phi(-\ii \omega,r)} \vert = \e^{-\re \Phi(-\ii \omega,r)} = 
\e^{-a\, r^\alpha\, \omega^\alpha\, \cos(\upi\,\alpha/2)} \leq 1$$
and the Green's function is causal by a Paley-Wiener theorem (\citet{PaleyWiener}, 
Theorem~XII) because 
$$
\int_0^\infty \frac {\vert \ln \vert F^{(d)}(r,p)\,\e^{-\Phi(-\ii \omega,r)}\vert 
\vert}{1 + \omega^2} \,\dd \omega 
\approx \frac{a \, \rho^{1/2} r \cos(\upi \alpha/2)}{c} 
\int_0^\infty \frac{\omega^\alpha}{1 + \omega^2} \, \dd \omega < \infty
$$

The above results can be extended to $d = 2$ by using the asymptotic formula
$K_0(z) \sim \sqrt{\upi/2 z} \, \exp(-z)$.

\begin{theorem} \label{thm:QCBFpowerlaw} 
\begin{enumerate}[(i)]
\item The function $Q^{1/2}$ is a CBF for $0 \leq \alpha \leq 1$.
\item The function $Q$ is a CBF for $1/2 \leq \alpha \leq 1$. 
\end{enumerate}
\end{theorem}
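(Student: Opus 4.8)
The plan is to reduce both parts to the elementary CBF $\varphi_a(y)=y/(y+a)$ and the fractional power $p^{1-\alpha}$, after one common algebraic simplification. Since $p^{\alpha-1}=1/p^{1-\alpha}$, one has $1+a\,p^{\alpha-1}=(p^{1-\alpha}+a)/p^{1-\alpha}$, so that
$$Q(p)^{1/2}=\frac{c}{1+a\,p^{\alpha-1}}=c\,\frac{p^{1-\alpha}}{p^{1-\alpha}+a}=c\,\varphi_a\!\left(p^{1-\alpha}\right),\qquad Q(p)=c^2\,\varphi_a\!\left(p^{1-\alpha}\right)^2.$$
Writing $\beta:=1-\alpha$, the hypotheses $0\le\alpha\le1$ and $1/2\le\alpha\le1$ become $0\le\beta\le1$ and $0\le\beta\le1/2$ respectively. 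The case $a=0$ is degenerate: then $Q$ and $Q^{1/2}$ are the constants $c^2$ and $c$, which are CBFs, so I assume $a>0$ throughout.

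For part (i) I would argue by composition. For $0<\beta<1$ the function $p^{\beta}$ is a CBF (Theorem~\ref{thm:fracpow}), while $\varphi_a$ is a CBF (it is the simplest example recalled in Section~\ref{sec:tools}); hence $\varphi_a\circ p^{\beta}$ is a CBF by Theorem~\ref{thm:CBF}(ii), and multiplying by the positive constant $c$ preserves membership in $\CBF$ by the integral representation~\eqref{eq:CBFintegral}. The endpoints are immediate: at $\alpha=1$ ($\beta=0$) the function $Q^{1/2}$ is the constant $c/(1+a)$, and at $\alpha=0$ ($\beta=1$) it is $c\,\varphi_a(p)$; both are CBFs. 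This covers the whole range $0\le\alpha\le1$.

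For part (ii) the square $\varphi_a^2$ is not itself a CBF (cf.\ the Remark in Section~\ref{sec:tools} that $f^\alpha\notin\CBF$ for a CBF $f$ when $\alpha>1$), so composition is unavailable, and I would instead verify the analytic criterion of Theorem~\ref{thm:analCBF} directly. First, $Q$ is holomorphic on the upper half-plane $\{\im z>0\}$: there $w:=z^{\beta}$ (principal branch) satisfies $\arg w\in(0,\beta\upi)\subseteq(0,\upi/2)$, so $w$ lies in the open first quadrant and never equals the negative number $-a$, whence $w+a\neq0$; moreover $Q\ge0$ on $\mathbb{R}_+$ by inspection. It remains to check $\im Q(z)\ge0$, i.e.\ $\arg Q\in[0,\upi]$. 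Since $a>0$, adding $a$ to $w$ fixes the imaginary part and increases the real part, so $0<\arg(w+a)<\arg w$ and
$$\arg\frac{w}{w+a}=\arg w-\arg(w+a)\in\bigl(0,\arg w\bigr)\subseteq\bigl(0,\beta\upi\bigr).$$
Because $Q=c^2\bigl(w/(w+a)\bigr)^2$, this yields $\arg Q=2\arg\bigl(w/(w+a)\bigr)\in(0,2\beta\upi)$. The assumption $\beta\le1/2$, i.e.\ $\alpha\ge1/2$, is precisely what forces $2\beta\upi\le\upi$, so $\arg Q\in(0,\upi)$ and $\im Q\ge0$ on the whole upper half-plane; Theorem~\ref{thm:analCBF} then gives $Q\in\CBF$.

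The main obstacle is isolating the exact role of the exponent. Because $\varphi_a^2\notin\CBF$, part (ii) cannot be obtained from the purely algebraic closure properties (composition, fractional powers) that settle part (i); instead one must track the argument of the squared map through the sector $\arg w\in(0,\beta\upi)$. The doubling of the argument under squaring is the critical point, and the requirement that it not overshoot $\upi$ is what produces, and explains, the threshold $\alpha\ge1/2$. A routine inspection of the endpoint $\alpha=1/2$, where $\arg Q$ can approach but never reaches $\upi$, completes the argument.
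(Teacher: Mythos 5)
Your proposal is correct, and part (i) follows a genuinely different route from the paper, while part (ii) is essentially the paper's argument in disguise. The paper proves both parts with a single argument-tracking computation: for $\im p > 0$ it bounds $\arg\left[1 + a\,p^{\alpha-1}\right]$ between $-(1-\alpha)\upi$ and $0$, so that $\arg Q^{1/2} \in [0,(1-\alpha)\upi] \subseteq [0,\upi]$ for all $0 \leq \alpha \leq 1$, whereas $\arg Q \in [0,2(1-\alpha)\upi]$ stays below $\upi$ only when $\alpha \geq 1/2$; Theorem~\ref{thm:analCBF} then yields both statements at once. You instead first rewrite $Q^{1/2} = c\,\varphi_a\!\left(p^{1-\alpha}\right)$, which makes part (i) purely algebraic: it is a composition of the two basic CBFs $\varphi_a$ and $p^{1-\alpha}$, so Theorem~\ref{thm:fracpow} and Theorem~\ref{thm:CBF}(ii) finish it with no analytic continuation at all. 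This is cleaner than the paper's treatment of (i), it explains structurally why the square root is the natural object (a Debye element composed with a fractional power, i.e.\ a generalized Cole--Cole form), and it isolates exactly where analyticity is unavoidable. For part (ii) you correctly observe that closure properties cannot work (since $\varphi_a^2 \notin \CBF$) and fall back on the same sector-tracking computation as the paper: $z^{1-\alpha}$ maps the upper half-plane into a sector of half-angle $(1-\alpha)\upi$, the map $w \mapsto w/(w+a)$ does not enlarge that sector, and squaring doubles it, so the image stays in the closed upper half-plane precisely when $2(1-\alpha)\upi \leq \upi$, i.e.\ $\alpha \geq 1/2$; this is the paper's estimate on $\left[1+a\,p^{\alpha-1}\right]^{-2}$ reorganized through the substitution $w = z^{1-\alpha}$. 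The only loose end, and it is harmless, is the degenerate endpoint $\alpha = 1$ in part (ii), where your sector argument collapses ($w \equiv 1$); there $Q$ is a non-negative constant and hence trivially a CBF.
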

\begin{proof}
We note that for positive real $p$ both $Q(p)$ and $Q(p)^{1/2}$ are real and non-negative.
If $0 \leq \arg p < \upi$ then
$$
\Psi := \arg \left[ 1 + a\, p^{\alpha-1}  \right] \geq \arg \left[ a\, p^{\alpha-1}  \right]
= -(1 - \alpha) \, \arg p \geq -(1 -\alpha) \, \upi
$$
and $\Psi \leq 0$.
Thus
$$0 \leq \arg \left[ 1 + a\, p^{\alpha-1}  \right]^{-1} \leq (1 - \alpha)\, \upi$$
By Theorem~\ref{thm:analCBF} $Q^{1/2} \in \CBF$. 
However
$$0 \leq \arg \left[ 1 + a\, p^{\alpha-1}  \right]^{-2}  \leq 2 (1 -\alpha)\, \upi \leq \upi$$ holds only
for $\alpha \geq 1/2$. Hence $Q$ is a CBF for $1/2 \leq \alpha \leq 1$.
\end{proof}

\noindent The results of the previous subsection imply that the Green's function 
is non-negative for $0 \leq \alpha \leq 1$. 
\begin{corollary}
The relaxation modulus $G$ is CM if and only if $1/2 \leq \alpha \leq 1$.
\end{corollary}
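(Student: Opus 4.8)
The plan is to reduce the statement to the CBF dichotomy already recorded in Theorem~\ref{thm:QCBFpowerlaw}. Recall from eq.~\eqref{eq:defQ} that the relaxation modulus enters $Q$ through $Q(p) = a_0\,p + p\,\tilde{G}(p)$, where $a_0$ denotes the Newtonian viscosity. For the present $Q$ one computes $\lim_{p\to\infty} Q(p)/p = 0$ (the factor $a\,p^{\alpha-1}$ tends to $0$ when $\alpha<1$, and $Q$ is bounded when $\alpha=1$), so the Newtonian term vanishes and $Q(p) = p\,\tilde{G}(p)$. By the bijection established at the beginning of Section~\ref{sec:scalar1-3D}, the pairs $(a_0,G)$ with $a_0\ge 0$ and $G$ CM correspond bijectively to the CBFs via eq.~\eqref{eq:defQ}; hence, with $a_0=0$, the modulus $G$ is CM if and only if $Q\in\CBF$. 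Thus the corollary is equivalent to the assertion that $Q\in\CBF$ precisely when $1/2\le\alpha\le1$.

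First I would dispatch the easy direction. For $1/2\le\alpha\le1$, Theorem~\ref{thm:QCBFpowerlaw}(ii) already gives $Q\in\CBF$; the representation eq.~\eqref{eq:CBFLICM} with vanishing Newtonian part then writes $Q(p)=p\,\tilde{h}(p)$ for a LICM function $h$, and by uniqueness $G=h$, so $G$ is CM. This settles the ``if'' implication.

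The substantive part is the converse: for $0\le\alpha<1/2$ I must show $Q\notin\CBF$, which by the bijection forces $G$ not to be CM. By the analytic characterization in Theorem~\ref{thm:analCBF} it suffices to exhibit a point $z$ in the open upper half-plane with $\im Q(z)<0$. Since $\arg Q(z) = -2\arg\!\left[1 + a\,z^{\alpha-1}\right]$, I would probe $z = \varepsilon\,\e^{\ii\theta}$ with $\theta$ just below $\upi$ and $\varepsilon$ small. Because $\alpha<1$ we have $|z^{\alpha-1}| = \varepsilon^{\alpha-1}\to\infty$ as $\varepsilon\to0$, so the term $a\,z^{\alpha-1}$ dominates $1$ and $\arg\!\left[1 + a\,z^{\alpha-1}\right]\to(\alpha-1)\theta$. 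Consequently $\arg Q(z)\to 2(1-\alpha)\theta$, which as $\theta\to\upi^-$ approaches $2(1-\alpha)\upi$; for $\alpha<1/2$ this limit lies strictly in $(\upi,2\upi)$, so $\im Q(z)<0$ for a suitable choice of $\theta,\varepsilon$.

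The main obstacle is precisely this last estimate: the proof of Theorem~\ref{thm:QCBFpowerlaw} produced only the one-sided bound $\arg Q\le 2(1-\alpha)\upi$, whereas here I must show the bound is genuinely approached, i.e. that the dominance of $a\,z^{\alpha-1}$ drives $\arg Q$ strictly past $\upi$. I would make this rigorous by a continuity argument: since $2(1-\alpha)>1$, fix $\theta\in(0,\upi)$ with $2(1-\alpha)\theta>\upi$, and use $\lim_{\varepsilon\to0}\arg\!\left[1+a\,z^{\alpha-1}\right]=(\alpha-1)\theta$ to conclude $\arg Q(z)\in(\upi,2\upi)$, hence $\im Q(z)<0$, for all sufficiently small $\varepsilon$. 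Combining the two directions with the bijection gives the equivalence; the endpoint $\alpha=1/2$ falls on the CBF side, in agreement with Theorem~\ref{thm:QCBFpowerlaw}(ii).
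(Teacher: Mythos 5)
Your proof is correct, and the decisive half --- showing $G$ fails to be CM for $0\le\alpha<1/2$ --- follows a genuinely different route from the paper's. Both arguments rest on the equivalence between complete monotonicity of $G$ and $Q\in\CBF$ (you derive it from the bijection plus the vanishing of the Newtonian term via $\lim_{p\to\infty}Q(p)/p=0$, which the paper merely asserts), and both get the positive direction from Theorem~\ref{thm:QCBFpowerlaw}(ii). For the negative direction, however, the paper works in the time domain: it deforms the Bromwich contour onto a Hankel loop around the negative real axis, computes the relaxation modulus explicitly as a Laplace integral $G(t)=\int_0^\infty \e^{-Rt}\varrho(R)\,\dd R$ with an explicit density $\varrho$, observes that $\varrho$ changes sign exactly when $\alpha<1/2$, and invokes the uniqueness part of Bernstein's theorem. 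You instead stay on the Laplace side and use the analytic characterization of CBFs (Theorem~\ref{thm:analCBF}): probing $z=\varepsilon\,\e^{\ii\theta}$ with $\varepsilon\to 0$, where $a\,z^{\alpha-1}$ dominates the constant $1$, you show $\arg Q(z)\to 2(1-\alpha)\theta$, which for $\alpha<1/2$ exceeds $\upi$ (while staying below $2\upi$) for suitable $\theta<\upi$, so $\im Q(z)<0$ somewhere in the upper half-plane. This correctly closes the gap left by the one-sided argument bound in the proof of Theorem~\ref{thm:QCBFpowerlaw}, and it establishes directly that $Q\notin\CBF$ for $\alpha<1/2$ --- a fact the paper only infers afterwards as a consequence of the corollary. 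The trade-off: the paper's contour computation yields quantitative by-products (the explicit relaxation spectrum and the way monotonicity fails, cf.\ Fig.~\ref{fig:G}), at the price of justifying the contour deformation and the vanishing arc contribution; your argument is shorter and more elementary, localizing the failure of the CBF property near the negative real axis, but produces no formula for $G$. One caveat, inherited from the paper itself: both proofs tacitly assume $a>0$ --- your dominance step needs it explicitly, and indeed for $a=0$ one has $G\equiv c^2$, which is CM for every $\alpha$.
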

\begin{proof}
$G$ is CM if and only if $Q$ is a CBF. Hence $G$ is CM for
$1/2 \leq \alpha \leq 1$. 

We shall prove by direct calculus that $G$ is not
CM and therefore $Q$ is not a CBF for $0 \leq \alpha < 1/2$.
The relaxation modulus is given by the inverse Laplace transform
\begin{equation}
G(t) = \frac{c^2}{2 \upi \ii} \int_\mathcal{B} \e^{p t} 
\frac{1}{p\, [1 + a\, p^{\alpha-1}]^2}\, \dd p 
\end{equation}
where $\mathcal{B}$ denotes the Bromwich contour running parallel to the imaginary axis 
in the right-half complex plane. For $a \geq 0$ and 
$0 \leq \alpha \leq 1$ 
the denominator does not have zeros on the Riemann sheet $-\upi < \arg p < \upi$
and all the singularities of the integrand are situated on the cut along
the negative real axis. The integrand does not have a pole at 0 nor on the 
cut. For $t > 0$ the Bromwich contour can be replaced by a half-circle
at infinity in the half-plane $\re p \leq 0$ and the Hankel loop $\mathcal{H}$ encircling the
cut from $-\infty$ below the cut to zero and back to $-\infty$ above the cut
(Fig.~\ref{fig:Contour}). The contribution of the half-circle vanishes,  hence 
\begin{multline*}
G(t) = \frac{c^2}{2 \upi \ii} \int_\mathcal{H} \e^{p t}
\frac{1}{p\, \left[1 + a\, p^{\alpha-1}\right]^2}\, \dd p = \\
\frac{c^2}{\upi} \int_0^\infty \e^{-R t}
\im \frac{1}{R\,[1 + a\, (R\, \exp(\ii \upi))^{\alpha-1}]^2}\, \dd R = 
\\ \frac{c^2}{\upi} \int_0^\infty \e^{-R t} \frac{2 a \, R^\alpha \, \sin(\upi \alpha) 
- a^2\, R^{2 \alpha - 1} \, \sin(\upi (1 - 2 \alpha))}{\left[R + a^2 \, R^{2 \alpha -1} 
- 2 a \, R^\alpha \, \cos(\upi \alpha)   \right]^2} \dd R
\end{multline*}
The integrand is non-negative for $\alpha \geq 1/2$ and changes sign for $\alpha < 1/2$.
By Bernstein's theorem the relaxation modulus $G$ is CM for $\alpha \geq 1/2$ and is not 
CM for $\alpha < 1/2$.
\begin{figure}
\begin{center}
\includegraphics[width=0.25\textwidth]{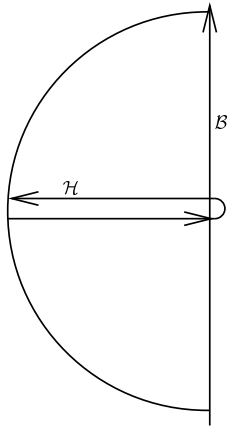} 
\end{center}
\caption{Contour deformation.} \label{fig:Contour}
\end{figure}
Fig.~\ref{fig:G} shows that $G$ is neither monotone nor non-negative 
for $\alpha < 1/2$.
\end{proof}
\begin{figure}
\begin{center}
\includegraphics[width=0.75\textwidth]{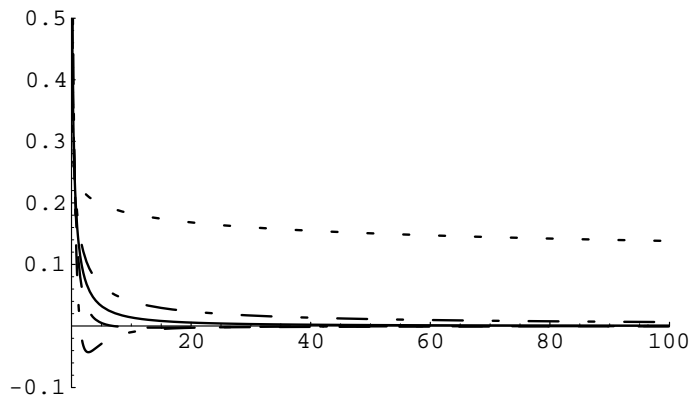}
\end{center}
\caption{The relaxation modulus for the power-law attenuation. The exponent 
values are 
$\alpha = 0.3, 0.4, 0.5, 0.6, 0.9$,
from bottom to top.} \label{fig:G}
\end{figure}
It follows that $Q$ is not a CBF if $\alpha < 1/2$. Hence we have extended positivity to 
the case when $Q \not\in \CBF$ but $Q^{1/2} \in \CBF$. 

\section{Concluding remarks.}
\label{sec:concluding}

In an elastic medium and in the case of Newtonian viscosity a non-negative source term excites 
a non-negative displacement pulse. This positivity property is shared by hereditary viscoelastic 
media with non-negative relaxation spectra and is limited to 1,2 and 3-dimensional problems.
The concept of positivity can be extended to systems of partial differential equations 
(vector fields).
For general matrix-valued pseudo-differential operators (anisotropic viscoelasticity) positivity
holds in a non-local sense. In isotropic viscoelasticity it holds for longitudinal and transverse 
waves separately.

The example in Sec.~\ref{subsec:example} shows that positive monotonicity of $G$ is not a necessary
condition for positivity. In Sec.~\ref{subsec:extended} positivity is shown to hold for partial 
differential equations of a more general type for dimension $\leq 3$.

\bibliography{mrabbrev,ownnew12,mathnew12}

\appendix
\section{A lemma.}

\begin{lemma} \label{lem:app}
If $f$ is a BF, then the function $f(x)/x$ is CM.
\end{lemma}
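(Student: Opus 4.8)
The plan is to read the complete monotonicity of $f(x)/x$ directly off the integral representation of a Bernstein function established earlier in the excerpt, namely
\begin{equation*}
f(y) = a + b\,y + \int_{]0,\infty[} \left[1 - \e^{-r y}\right]\,\nu(\dd r),
\end{equation*}
with $a, b \geq 0$ and $\nu$ a positive Radon measure satisfying \eqref{eq:ineq2}. Dividing by $x > 0$ gives
\begin{equation*}
\frac{f(x)}{x} = \frac{a}{x} + b + \int_{]0,\infty[} \frac{1 - \e^{-r x}}{x}\,\nu(\dd r),
\end{equation*}
so it suffices to show that each of the three summands is CM and then to invoke Theorem~\ref{thm:superpositionandlimit}(i), which states that a sum of CM functions is CM.

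First I would dispose of the two elementary terms. The constant $b \geq 0$ is trivially CM. The term $a/x$ is CM because $1/x = \int_0^\infty \e^{-x s}\,\dd s$ exhibits it as the Laplace transform of the nonnegative constant density, and alternatively because $(-1)^n\,\D^n(1/x) = n!\,x^{-n-1} \geq 0$; the admissible singularity at the origin is consistent with the remark that a CM function may blow up at $0$.

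The core of the argument is the integral term, where I would use the elementary identity
\begin{equation*}
\frac{1 - \e^{-r x}}{x} = \int_0^r \e^{-x s}\,\dd s.
\end{equation*}
For each fixed $r > 0$ this displays $x \mapsto (1 - \e^{-rx})/x$ as the Laplace transform of the nonnegative, compactly supported density $\mathbf{1}_{[0,r]}$, hence as a CM function of $x$. Substituting the identity and regarding the result as a single integral of the kernel $\e^{-xs}$ (CM in $x$ for every fixed $s$) against the positive measure $\mathbf{1}_{[0,r]}(s)\,\dd s\,\nu(\dd r)$, Theorem~\ref{thm:integralsuperposition} yields that the integral term is CM, provided the integral exists for every $x > 0$.

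The only point demanding care, and the step I expect to be the main (though mild) obstacle, is precisely this convergence. It is in fact automatic: for each fixed $x > 0$,
\begin{equation*}
\int_{]0,\infty[} \frac{1 - \e^{-r x}}{x}\,\nu(\dd r) = \frac{1}{x}\left[\,f(x) - a - b\,x\,\right] < \infty,
\end{equation*}
since $f(x)$ is finite; one may also see this directly from the bound $\frac{1 - \e^{-rx}}{x} \leq \min(r,\, 1/x)$ together with the admissibility condition \eqref{eq:ineq2} on $\nu$. With existence secured, Theorem~\ref{thm:integralsuperposition} applies to the integral term, and adding back the CM terms $a/x$ and $b$ through Theorem~\ref{thm:superpositionandlimit}(i) completes the proof that $f(x)/x$ is CM.
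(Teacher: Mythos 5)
Your proposal is correct and follows essentially the same route as the paper's own proof: both divide the Bernstein integral representation by $x$, observe that $(1-\mathrm{e}^{-rx})/x$ is CM for each fixed $r$ by writing it as a Laplace transform of a nonnegative density, and conclude via Theorem~\ref{thm:integralsuperposition}. Your additional care about the existence of the integral (using finiteness of $f$) is a minor refinement the paper leaves implicit.
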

\begin{proof}
Let
$$ f(x) = a + b\, x + \int_{]0,\infty[} \left[ 1 - \e^{-x\,y}  \right] \, m(\dd y)$$
with $a, b \geq 0$ and a positive Radon measure $m$. 
By Bernstein's theorem the function
$$\frac{1 - \e^{-x}}{x} = \int_0^1 \e^{-x y} \, \dd y$$
is CM. Therefore  
$$\frac{f(x)}{x} = \frac{a}{x} + b + \int_{]0,\infty[} \frac{1 - \e^{-x y}}{x} \, m(\dd y)$$
is the superposition of CM functions with non-negative weights. By 
Theorem~\ref{thm:integralsuperposition} the function $f(x)/x$ is CM.  
\end{proof}

\section{Some properties of Stieltjes functions and CBFs.}

\begin{definition}
A real function $f$ on $\mathbb{R}_+$ is a Stieltjes function if it has the integral 
representation 
$$f(x) = a + \int_{]0,\infty[\;} \frac{\mu(\dd y)}{x + y}$$
with $a \geq 0$ and a positive Radon measure $\mu$ satisfying \eqref{eq:ineq}.
\end{definition}
The Stieltjes functions form a cone $\mathfrak{S}$:\\
if $a, b \geq 0$ and $f, g \in \mathfrak{S}$  then $a \, f + b \, g \in \mathfrak{S}$.

Every Stieltjes function is a CM function.
\begin{theorem}\label{thm:CBFStieltjes}
A function $f$ is a CBF if and only if $f(x)/x$ is a Stieltjes function.
\end{theorem}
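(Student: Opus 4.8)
The plan is to prove both implications directly from the integral representation \eqref{eq:CBFintegral} of a complete Bernstein function, since that representation already encodes exactly the data of a Stieltjes function, up to a division by $x$. I would deliberately avoid routing through the analytic characterization (Theorem~\ref{thm:analCBF}): because the Stieltjes class is itself \emph{defined} by an integral representation here, matching the two representations term by term is by far the most economical route.

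For the forward implication, suppose $f$ is a CBF. By \eqref{eq:CBFintegral} there are constants $a, b \geq 0$ and a positive Radon measure $\mu$ on $]0,\infty[$ satisfying \eqref{eq:ineq} with
$$f(x) = b + a\, x + x \int_{]0,\infty[} \frac{\mu(\dd r)}{r+x}.$$
Dividing by $x$ gives
$$\frac{f(x)}{x} = a + \frac{b}{x} + \int_{]0,\infty[} \frac{\mu(\dd r)}{r+x}.$$
I would then read off the right-hand side as a Stieltjes function: the coefficient $a$ plays the role of the additive Stieltjes constant, the integral is already in Stieltjes form with the \emph{same} measure $\mu$ (which still satisfies \eqref{eq:ineq}), and the term $b/x$ is the contribution of a point mass $b\,\delta_0$ at the origin. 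Absorbing this point mass into the Stieltjes measure (equivalently, reading the defining integral over $[0,\infty[$) shows $f(x)/x \in \mathfrak{S}$.

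For the converse, suppose $g(x) := f(x)/x$ is a Stieltjes function, written as $g(x) = a + b/x + \int_{]0,\infty[} \frac{\mu(\dd r)}{r+x}$ with $a, b \geq 0$ and $\mu$ satisfying \eqref{eq:ineq}. Multiplying by $x$ returns
$$f(x) = b + a\, x + x \int_{]0,\infty[} \frac{\mu(\dd r)}{r+x},$$
which is exactly of the form \eqref{eq:CBFintegral}; hence $f$ is a CBF. Since the measure $\mu$ and the integrability condition \eqref{eq:ineq} are identical on the two sides, and the remaining scalar data are in explicit bijection, the map $f \leftrightarrow f/x$ is a bijection between $\mathfrak{F}$ and $\mathfrak{S}$.

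The hard part here is purely bookkeeping of the degenerate terms, and that is the one place I would be careful: the linear coefficient $a$ of the CBF becomes the additive constant of the Stieltjes function, while the additive constant $b$ of the CBF becomes the point mass at the origin (the $b/x$ singularity). I would therefore state explicitly that the Stieltjes representation is read to permit a point mass at $0$ (or, equivalently, to integrate over $[0,\infty[$), so that no CBF with $f(0^+) > 0$ is excluded and the two cones correspond exactly. Everything else — positivity of $\mu$, the constraint \eqref{eq:ineq}, and uniqueness of the representation — transfers verbatim.
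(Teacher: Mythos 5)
Your proof is correct and takes essentially the same approach as the paper, whose entire proof is the remark that the theorem ``follows from the integral representation \eqref{eq:CBFintegral}''; you have simply written out the term-by-term matching that remark leaves implicit. Your explicit treatment of the $b/x$ term as a point mass at the origin is a genuinely careful touch, since the paper's stated definition of a Stieltjes function integrates only over $]0,\infty[$ and, read literally, would exclude CBFs with $f(0^+) > 0$ (e.g.\ $f(x) = 1 + x$), so flagging that the Stieltjes measure must be allowed an atom at $0$ closes a small definitional gap in the paper itself.
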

Theorem~\ref{thm:CBFStieltjes} follows from the integral representation 
\eqref{eq:CBFintegral}.
Compare Theorem~\ref{thm:CBFStieltjes} with Corollary~\ref{corr:BF}.

We shall also need the following theorem:
\begin{theorem} \label{thm:logconvCBF}
If $f, g \in \CBF$ and $0 \leq \alpha \leq 1$ then $h := f^\alpha \, g^{1-\alpha} \in \CBF$.
\end{theorem}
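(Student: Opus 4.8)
The plan is to use the analytic characterization of complete Bernstein functions in Theorem~\ref{thm:analCBF}: a function lies in $\CBF$ precisely when it extends holomorphically to the open upper half-plane $\Omega := \{z : \im z > 0\}$ with nonnegative imaginary part there, and is nonnegative on $\mathbb{R}_+$. The guiding idea is that, on the level of arguments, raising to the power $\alpha$ multiplies $\arg$ by $\alpha$, while forming a product adds arguments; the weights $\alpha$ and $1-\alpha$ are chosen exactly so that the argument of $h$ never leaves $[0,\pi]$. First I would dispose of the degenerate cases: if $\alpha \in \{0,1\}$ then $h$ equals $g$ or $f$ and there is nothing to prove, and if $f$ or $g$ vanishes identically then $h \equiv 0 \in \CBF$. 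So assume $0 < \alpha < 1$ and $f, g \not\equiv 0$.

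Next I would record that a nontrivial CBF is zero-free on $\Omega$. Indeed $f$ is holomorphic and non-constant, so by the open mapping theorem it cannot attain the boundary value $0$ of the closed upper half-plane at an interior point of $\Omega$: a full neighborhood of such a value would contain points of negative imaginary part, contradicting $\im f(z) \geq 0$. Hence $f$ and $g$ are holomorphic and zero-free on the simply connected domain $\Omega$, so I may select holomorphic branches $\log f$ and $\log g$ normalized to be real on $\mathbb{R}_+$ (where $f, g > 0$), and define $f^\alpha := \exp(\alpha \log f)$ and $g^{1-\alpha} := \exp((1-\alpha)\log g)$. Writing $\theta_f := \im \log f$ for the resulting continuous argument, I would argue that $\theta_f$ stays in $[0,\pi]$: it vanishes on $\mathbb{R}_+$, and since the image of $f$ lies in the closed upper half-plane with the origin deleted, $\theta_f$ can never cross $0$ or $\pi$ outward without forcing $\im f < 0$. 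Thus $\theta_f, \theta_g \in [0,\pi]$ on all of $\Omega$.

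The conclusion then follows by additivity of arguments. The product $h = f^\alpha g^{1-\alpha}$ is holomorphic and zero-free on $\Omega$ and positive on $\mathbb{R}_+$, and its continuous argument satisfies
$$\arg h = \alpha\,\theta_f + (1-\alpha)\,\theta_g \in [\,0,\; \alpha\pi + (1-\alpha)\pi\,] = [0,\pi],$$
so that $\im h \geq 0$ throughout $\Omega$. Together with positivity on $\mathbb{R}_+$, Theorem~\ref{thm:analCBF} gives $h \in \CBF$.

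I expect the main obstacle to be the branch bookkeeping: justifying that $\log f$ is globally well defined (which needs both zero-freeness on $\Omega$ and simple connectivity) and that the continuous argument genuinely remains in $[0,\pi]$ up to the real boundary rather than winding. The open-mapping observation ruling out interior zeros is the crux that legitimizes taking logarithms and fractional powers in the first place; everything downstream is the elementary estimate $\alpha\pi + (1-\alpha)\pi = \pi$.
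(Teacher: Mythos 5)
Your proof is correct and takes essentially the same route as the paper's: both verify the two conditions of Theorem~\ref{thm:analCBF}, using positivity on $\mathbb{R}_+$ together with the bound $\arg h = \alpha \arg f + (1-\alpha)\arg g \leq \alpha\pi + (1-\alpha)\pi = \pi$ on the upper half-plane. The only difference is that you explicitly justify the branch bookkeeping (zero-freeness of a nontrivial CBF via the open mapping argument, and the choice of the logarithms), details which the paper's three-line proof leaves implicit.
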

\begin{proof}
If $f, g \in \CBF$ then $f$ and $g$ map $\mathbb{R}_+$ into $\overline{\mathbb{R}_+}$
and $\mathbb{C}_+$ into itself. Consequently $h$ maps $\mathbb{R}_+$ into 
$\overline{\mathbb{R}_+}$ and 
$$0 \leq \arg h(z) \leq \alpha\, \upi  + (1-\alpha) \, \upi = \upi$$
for $z \in \mathbb{C}_+$. Hence, by Theorem~\ref{thm:analCBF}, $h \in \CBF$.
\end{proof}

\end{document}